\PassOptionsToPackage{colorlinks,linkcolor=BlueViolet,citecolor=OliveGreen,urlcolor=RawSienna,hyperfootnotes=false}{hyperref} 

\documentclass[11pt,a4paper]{article}

\pdfoutput=1

\sloppy

\usepackage[british]{babel}
\addto\captionsbritish{}

\usepackage{mathtools} 
\numberwithin{equation}{section} 

\usepackage{amssymb}
\usepackage{bbm} 

\usepackage{url, verbatim} 

\usepackage[a4paper,top=2in,bottom=1.5in,left=1in,right=1in]{geometry}

\usepackage{bold-extra} 

\usepackage{enumitem} 

\makeatletter
\renewcommand\subparagraph{\@startsection{subparagraph}{5}%
	{\parindent}
	{0pt}
	{-1em}
	{\normalfont\itshape}
} 
\makeatother


\usepackage[dvipsnames]{xcolor}

\usepackage{graphicx} 

\usepackage[bf, small]{caption} 
\setlength{\captionmargin}{\parindent}

\usepackage{tikz}
\usetikzlibrary{calc} 
\usetikzlibrary{intersections} 
\usetikzlibrary{arrows} 
\usetikzlibrary{decorations.markings} 

\usetikzlibrary{matrix,arrows,decorations.pathmorphing} 



\usepackage{amsthm} 

\newtheorem{thm}{Theorem}

\newtheoremstyle{remarkitalicnumber}{3pt}{3pt}{}{}{\itshape}{.}{.5em}{\thmname{#1}\thmnumber{ \itshape#2}\thmnote{ (#3)}} 
\theoremstyle{remarkitalicnumber} 
\newtheorem{exercise}{Exercise}[section]

\usepackage[numbers,merge,sort&compress]{natbib}
\bibliographystyle{ieeetr}


\let\originalleft\left
\let\originalright\right
\renewcommand{\left}{\mathopen{}\mathclose\bgroup\originalleft}
\renewcommand{\right}{\aftergroup\egroup\originalright}

\newcommand{\cH}{\mathcal{H}}
\newcommand{\cN}{\mathcal{N}}

\newcommand{\svA}{\tilde{A}}
\newcommand{\svDelta}{\tilde{\Delta}}
\newcommand{\svcH}{\tilde{\mathcal{H}}}
\newcommand{\svV}{\tilde{V}}
\newcommand{\svp}{\tilde{p}}
\newcommand{\svPsi}{\tilde{\Psi}}
\newcommand{\svS}{\tilde{S}}

\newcommand{\choice}[2]{#2} 


\newcommand{\To}{\cdots\mspace{-1mu}}
\DeclareMathOperator*{\ordprod}{\prod\limits^{\vbox to -.5ex{\kern-0.5ex\hbox{$\leftharpoonup$}\vss}}}

\newcommand{\Vector}[1]{\boldsymbol{#1}}

\DeclarePairedDelimiter{\bra}{\langle}{\rvert}
\DeclarePairedDelimiter{\ket}{\lvert}{\rangle}
\DeclarePairedDelimiterX{\braket}[2]{\langle}{\rangle}{#1\vert#2}

\DeclareMathOperator{\E}{e}
\newcommand{\I}{\mathrm{i}}

\DeclareMathOperator{\End}{End}
\DeclareMathOperator{\inv}{inv}

\DeclareMathOperator{\id}{\mathbbm{1}} 
\DeclareMathOperator{\perm}{P}

\newcommand{\D}{\mathrm{d}}
\let\Im\undefined\DeclareMathOperator{\Im}{Im}

\DeclareMathOperator{\Res}{Res}
\DeclareMathOperator{\sgn}{sgn}
\DeclareMathOperator{\tr}{tr}

\DeclareRobustCommand{\w}[4]{w \begin{psmallmatrix} & \!#3\! & \\[-0.2em] \!#2\! & & \!#4\! \\[-0.2em] & \!#1\! & \end{psmallmatrix} } 
\DeclareRobustCommand{\wspec}[5]{w \left(#1 \middle| \begin{smallmatrix} & \!#4\! & \\[-0.2em] \!#3\! & & \!#5\! \\[-0.2em] & \!#2\! & \end{smallmatrix}\right) }


\title{\textbf{A pedagogical introduction to quantum integrability} \\ \textit{with a view towards theoretical high-energy physics}}

\author{Jules Lamers \\ \\ Institute for Theoretical Physics \\ Center for Extreme Matter and Emergent Phenomena, Utrecht University \\
Leuvenlaan 4, 3584 CE Utrecht, The Netherlands \\ \\ \texttt{j.lamers@uu.nl}}

\date{}

\begin{document}

\begin{flushright}
\footnotesize
ITP-UU-15/01 
\end{flushright}

{\let\newpage\relax\maketitle} 

\thispagestyle{empty}

\begin{abstract}
These are lecture notes of an introduction to quantum integrability given at the Tenth Modave Summer School in Mathematical Physics, 2014, aimed at PhD candidates and junior researchers in theoretical physics.

We introduce spin chains and discuss the coordinate Bethe ansatz (\textsc{cba}) for a representative example: the Heisenberg \textsc{xxz} model. The focus lies on the structure of the \textsc{cba} and on its main results, deferring a detailed treatment of the \textsc{cba} for the general $M$-particle sector of the \textsc{xxz} model to an appendix. Subsequently the transfer-matrix method is discussed for the six-vertex model, uncovering a relation between that model and the \textsc{xxz} spin chain. Equipped with this background the quantum inverse-scattering method (\textsc{qism}) and algebraic Bethe ansatz (\textsc{aba}) are treated. We emphasize the use of graphical notation for algebraic quantities as well as computations. 

Finally we turn to quantum integrability in the context of theoretical high-energy physics. We discuss factorized scattering in two-dimensional \textsc{qft}, and conclude with a qualitative introduction to one current research topic relating quantum integrability to theoretical high-energy physics: the Bethe/gauge correspondence.
\end{abstract}

\tableofcontents


\section{Introduction}\label{s1}

Quantum integrability is a beautiful and rich topic in mathematical physics, lying at the interface between condensed-matter physics, theoretical high-energy physics and mathematics. Usually, a (quantum) statistical model is considered `solved' if the ground states, elementary excitations, and various thermodynamic quantities are known. Quantum-integrable models possess a deep underlying structure that often allows for \emph{exact} computation of such quantities. At the same time several of these models are quite \emph{realistic}, and theoretic results may be tested with experiments.
Inevitably, then, the theory of quantum integrability is rather technical, which may obscure its beauty to newcomers. These notes aim to give a pedagogical introduction to quantum integrability and help the reader cross that first potential barrier.

\paragraph{Historical overview.} Quantum-integrable models emerged in two different branches of physics. The first example came from \emph{quantum mechanics}: the isotropic Heisenberg `\textsc{xxx}' spin chain for (ferro)magnetism. In a seminal paper from 1931, Bethe solved this model using a method that now goes under the name of \emph{coordinate Bethe ansatz}~(\textsc{cba}), turning the problem of finding the model's spectrum into the problem of solving certain coupled equations, called the \emph{Bethe-ansatz equations}~(\textsc{bae}). In the subsequent decades Bethe's work was developed further by others, and in the 1960s Yang and Yang applied the \textsc{cba} to the more general `\textsc{xxz}' spin chain.

The second source of quantum-integrable models was \emph{statistical mechanics}. Here the prototype is the six-vertex or ice-type model for two-dimensional hydrogen-bonded crystals. In the late 1960s Lieb and Sutherland were able to solve the six-vertex model via the transfer-matrix method --- famously used by Onsager to tackle the 2d square-lattice Ising model in 1944 --- together with the \textsc{cba} as in the work of Yang and Yang. This solution uncovered several striking similarities between the six-vertex model and the \textsc{xxz} spin chain, and shed light on the reason \emph{why} these models could be solved.

In the late 1970s these two stories were unified by the \emph{quantum inverse-scattering method}~(\textsc{qism}) developed by the `Leningrad group' of Faddeev et al, and others. Using ideas from classical integrability and soliton theory, the \textsc{qism} provides an algebraic framework for quantum-integrable models, in particular yielding the \textsc{bae} via the \emph{algebraic Bethe ansatz}~(\textsc{aba}).

\paragraph{Outline.} These notes are organized as follows. Sections \ref{s2}, \ref{s3} and~\ref{s4} contain an introduction to quantum integrability, roughly following the above historical account. (The four-hour Modave lectures on which these notes are based covered most of this material.) The quantum-mechanical side of the story is treated in Section~\ref{s2}. We introduce spin chains like the \textsc{xxx} and \textsc{xxz} models, present the \textsc{cba} for such models, and discuss the main results for the \textsc{xxz} spin chain. In Section~\ref{s3} we switch to the statistical-mechanical side. We introduce the six-vertex model, treat it using the transfer-matrix method and \textsc{cba}, and provide the results. By examining the outcome more closely we uncover the correspondence with the \textsc{xxz} model. Equipped with this background, the \textsc{qism} is developed in Section~\ref{s4}. This provides the precise relation between the \textsc{xxz} and six-vertex models and, via the \textsc{aba}, allows us to rederive the results of the \textsc{cba} for these models using a single computation. 

In Section~\ref{s5} we move on to \textsc{qft} and theoretical high-energy physics. After providing an overview of the various relations that have been found with quantum integrability, and a discussion of factorized scattering in \textsc{qft} in two dimensions, we give a qualitative introduction to the Bethe/gauge correspondence as a recent example of such a relation.

There are three appendices containing further details and background. In Appendix~\ref{sY} we present the Yang-Yang function. The details of the \textsc{cba} are worked out for the \textsc{xxz} spin chain in Appendix~\ref{sM}. Finally, in Appendix~\ref{sR} the $R$-matrix of the six-vertex model is found.

Although none of the material in these notes is new, this introduction is somewhat different from most other introductory texts. For example, in Sections \ref{s2:method} and~\ref{s2:results} we focus on the conceptual basis and the physics of the \textsc{cba} and its results rather than on computations. Still, the \textsc{cba} is worked out not just for the two-particle sector but, following~\cite[\textsection8.4]{Bax07}, also for the general case in Appendix~\ref{sM}. Our presentation of the transfer-matrix method and \textsc{qism} in Sections \ref{s3:method} and~\ref{s4} consistently exploits a graphical notation adapted from~\cite{GRS96}. Though not always the most practical way to perform computations, this diagrammatic notation is a convenient way to understand what is going on algebraically.

\paragraph{Further references.} Many important topics in quantum integrability are barely touched in these notes; examples include Baxter's $TQ$-method, the thermodynamic limit, correlation functions, and quantum groups. Luckily the literature on quantum-integrable models is extensive, ranging from introductory texts to very technical papers. The following references, here ordered alphabetically, have been useful for preparing these notes: 
\begin{itemize}
\item The renowned book by Baxter~\cite{Bax07} gives a very detailed account of the \textsc{cba} and the $TQ$-method for several quantum-integrable models in statistical mechanics, including the six-vertex model. The notation is perhaps a bit old fashioned at times.
\item Faddeev's famous Les Houches lecture notes~\cite{Fad95a} provide a good basis for the \textsc{aba} and the \textsc{xxx} model. Some familiarity with quantum integrability may be useful.
\item Gaudin's book~\cite{Gau14} was recently translated into English. Amongst others the \textsc{xxz} spin chain and the six-vertex model are treated using the \textsc{cba}, and the thermodynamic limit is studied.
\item Chapters 1--3 of the book by G\'omez, Ruiz-Altaba and Sierra~\cite{GRS96} treat the \textsc{cba} and \textsc{aba} for the \textsc{xxz} spin chain and the six-vertex model. The underlying quantum-algebraic structure is pointed out, though perhaps somewhat vaguely at times, and there are nice diagrammatic computations.
\item Chapters 0--2 of the book by Jimbo and Miwa~\cite{JM99} form a neat concise introduction to statistical physics, the \textsc{xxz} spin chain and the six-vertex model. Although the \textsc{aba} is not discussed, the \textsc{qism} is essentially treated in Sections 2.4--3.3 and 3.7.
\item Karbach, Hu and M\"uller~\cite{KM98,KHM98,*KHM00} have written a nice three-part introduction to the \textsc{cba} for the \textsc{xxx} model, including a discussion of the low-lying excitations in the physical spectrum for both the ferromagnetic and antiferromagnetic regime.
\item The well-known book by Korepin, Bogoliubov and Izergin~\cite{KBI93} contains a lot of information about the \textsc{qism} and its applications to correlation functions. The discussion of the basics is quite condensed.
\end{itemize}
A standard reference for classical integrability and soliton theory is the book by Babelon, Bernard and Talon~\cite{BBT03}. For more about the history of quantum integrability see e.g.~\cite{Bat07,*Bax95,*Fad95b}. Experimental realizations of quantum-integrable models are described in~\cite{ME+13,*GB+13,*KWW06}. Numerical methods for the \textsc{xxx} spin chain are discussed e.g.~in \cite{KM98,KHM98,*KHM00} and \cite[Eds.~1 and 4]{MSSTP}. For quantum groups see e.g.\ the chatty introduction~\cite[\textsection1--6]{DB+08} and the mathematics books~\cite{Maj02,CP94,Kas95}.

\paragraph{Acknowledgements.} I thank the organisers of the Tenth Modave Summer School in Mathematical Physics for giving me the opportunity to share my enthusiasm for quantum integrability with my peers. I am grateful to the participants of the school for their interest and questions. In preparing the lectures and these notes I benefited from discussions with G.~Arutyunov, R.~Borsato, W.~Galleas, A.~Henriques, R.~Klabbers and D.~Schuricht. 

I gratefully acknowledge the support of the Netherlands Organization for Scientific Research~(\textsc{nwo}) under the \textsc{vici} grant 680-47-602. This work is part of the \textsc{erc} Advanced Grant no.~246974, \textit{Supersymmetry: a window to non-perturbative physics}, and of the \textsc{d-itp} consortium, a program of the \textsc{nwo} funded by the Dutch Ministry of Education, Culture and Science~(\textsc{ocw}).

\newpage

\section{Bethe's method for the \textsc{xxz} model}\label{s2}

The pioneering work of Bethe on the one-dimensional Heisenberg model for ferromagnetism is one of the corner stones of the theory of quantum integrability. Although nowadays many quantum-integrable models can be tackled in more sophisticated ways, as we will e.g.\ see in Section~\ref{s4}, Bethe's method remains a concrete and physical way to introduce the basic ingredients and obtain the main results.

\subsection{The \textsc{xxz} spin chain and its symmetries}\label{s2:spin chains}

At the dawn of the 20th century Maxwell had formulated his laws describing the connection between electric and magnetic forces and optics, but the microscopic mechanism behind magnetism was not understood. The advent of quantum mechanics brought new insights, and Heisenberg and Dirac independently showed in~1926 that Pauli's exclusion principle leads to an effective interaction between electron spins of atoms with overlapping wave functions~\cite{Hei26,*Dir26}. This \emph{exchange interaction} formed the basis for an important model for ferromagnetism published by Heisenberg two years later~\cite{Hei28} (see also \cite[\textsection{8}]{HH95}). In one spatial dimension this is an example of a \emph{spin chain} --- a special class of quantum-mechanical models that are rather simple in their set-up, yet lead to a wide variety of interesting physics and mathematics.

\paragraph{Spin chains.} Consider a one-dimensional array of $L$~atoms, modelled by a lattice of length~$L$ with uniform lattice spacing that we take equal to one. We impose periodic boundary conditions, so that the lattice is $\mathbb{Z}_L \coloneqq \mathbb{Z}/L\mathbb{Z}$. This choice of boundary conditions is very convenient, and not unreasonable since one is typically interested in the physics in the \emph{thermodynamic limit} where $L\to\infty$ becomes macroscopically large.$^\#$\footnote{With the thermodynamic limit in mind one should not really distinguish between two spin chains that only differ in the numbers of lattice sites, but rather think of a spin chain as a \emph{family} of systems indexed by the length~$L\in\mathbb{N}\backslash\{1\}$.}

The microscopic degrees of freedom are quantum-mechanical spins, see Figure~\ref{s2:fg:spin chain}. Thus each site~$l \in \mathbb{Z}_L$ comes with a finite-dimensional vector space~$V_l$ and a spin operator~$\Vector{S}_l = (S^{x\vphantom{y}}_l,S^y_l,S^{z\vphantom{y}}_l)$ on $V_l$ satisfying the $\mathfrak{su}(2)$-relations. The periodic boundary conditions mean that $\Vector{S}_{l+L}=\Vector{S}_l$. We are interested in the case of spin~$1/2$: each~$V_l$ is a copy of~$\mathbb{C}^2$ with basis given by spin up and down, $V_l=\mathbb{C}\ket{\uparrow}_l \oplus\mathbb{C}\ket{\downarrow}_l$, and~$S_l^\alpha$ is represented via the Pauli matrices~$\sigma^\alpha$ as usual.

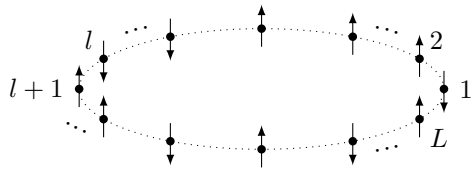
\begin{figure}[h]
	\centering
	\begin{tikzpicture}[>=latex,scale=0.8,font=\small]
		\draw[dotted] (0,0) ellipse (3 and 1);
		\fill [black] ($(0:3 and 1)$) circle (2pt) node[right=2pt]{1};
		\fill [black] ($(30:3 and 1)$) circle (2pt) node[above right]{2};
		\fill [black] ($(60:3 and 1)$) circle (2pt);
		\fill [black] ($(90:3 and 1)$) circle (2pt);
		\fill [black] ($(120:3 and 1)$) circle (2pt);
		\fill [black] ($(150:3 and 1)$) circle (2pt) node[above left]{$l$};
		\fill [black] ($(180:3 and 1)$) circle (2pt) node[left=2pt]{$l+1$};
		\fill [black] ($(210:3 and 1)$) circle (2pt);
		\fill [black] ($(240:3 and 1)$) circle (2pt);
		\fill [black] ($(270:3 and 1)$) circle (2pt);
		\fill [black] ($(300:3 and 1)$) circle (2pt);
		\fill [black] ($(330:3 and 1)$) circle (2pt) node[below right]{$L$};
		\draw[<-] ($(0:3 and 1)-(0,.4)$) -- ($(0:3 and 1)+(0,.3)$);
		\draw[->] ($(30:3 and 1)-(0,.3)$) -- ($(30:3 and 1)+(0,.4)$);
		\draw[->] ($(60:3 and 1)-(0,.3)$) -- ($(60:3 and 1)+(0,.4)$);
		\draw[->] ($(90:3 and 1)-(0,.3)$) -- ($(90:3 and 1)+(0,.4)$);
		\draw[<-] ($(120:3 and 1)-(0,.4)$) -- ($(120:3 and 1)+(0,.3)$);
		\draw[<-] ($(150:3 and 1)-(0,.4)$) -- ($(150:3 and 1)+(0,.3)$);
		\draw[->] ($(180:3 and 1)-(0,.3)$) -- ($(180:3 and 1)+(0,.4)$);
		\draw[->] ($(210:3 and 1)-(0,.3)$) -- ($(210:3 and 1)+(0,.4)$);
		\draw[<-] ($(240:3 and 1)-(0,.4)$) -- ($(240:3 and 1)+(0,.3)$);
		\draw[->] ($(270:3 and 1)-(0,.3)$) -- ($(270:3 and 1)+(0,.4)$);
		\draw[<-] ($(300:3 and 1)-(0,.4)$) -- ($(300:3 and 1)+(0,.3)$);
		\draw[->] ($(330:3 and 1)-(0,.3)$) -- ($(330:3 and 1)+(0,.4)$);
		\foreach \y in {-1,...,1} \draw ($(3*\y+55:3.6 and 1.2)$) node{$\cdot\mathstrut$};
		\foreach \y in {-1,...,1} \draw ($(3*\y+125:3.6 and 1.2)$) node{$\cdot\mathstrut$};
		\foreach \y in {-1,...,1} \draw ($(4*\y+212:3.6 and 1.2)$) node{$\cdot\mathstrut$};
		\foreach \y in {-1,...,1} \draw ($(3*\y+305:3.6 and 1.2)$) node{$\cdot\mathstrut$};
	\end{tikzpicture}
	\caption{One-dimensional spin chain of length~$L$ with spin $1/2$ and periodic boundary conditions. Cartoons like this, where the spin vector at each site either points up or down, should of course be taken with a grain of salt: really the spins may point in any direction in~$\mathbb{C}^2$.}
	\label{s2:fg:spin chain}
\end{figure}

The Hilbert space of the spin chain is the tensor product of the~$V_l$ over the lattice,
\begin{equation}\label{s2:eq:Hilb global}
	\cH = \bigotimes_{l\in\mathbb{Z}_L} V_l \ ,
\end{equation}
with (orthonormal) basis consisting of tensor products of the local spin vectors~$\ket{\uparrow}_l$ and~$\ket{\downarrow}_l$. The subscript of~$\Vector{S}_l$ keeps track of the factor $V_l$ in \eqref{s2:eq:Hilb global} on which this local spin operator acts nontrivially:
\begin{equation}\label{s2:eq:tensor-leg notation}
	\Vector{S}_l = \underset{\,1}{\vphantom{\tfrac{1}{2}}\id} \otimes \cdots\otimes\id\otimes\, \underset{l}{\tfrac{\hbar}{2}\, \Vector{\sigma}} \otimes\id\otimes\cdots\otimes \underset{\,L}{\vphantom{\tfrac{1}{2}}\id} \ .
\end{equation}
This \emph{tensor-leg notation} is used throughout the literature on quantum integrability and will be particularly helpful in Section~\ref{s4}. (Note that it does not make sense to use a summation convention for these subscripts as there is nothing special about precisely two operators acting nontrivially at the same~$V_l$.) While we are at it, let us also introduce the following common notation. For any vector space~$W$, `$\End(W)$' denotes the space of all linear operators~$W \longrightarrow W$, i.e.\ square matrices of size~$\dim(W)$. For example: $S^\alpha_l\in\End(V_l) \subseteq \End(\cH)$ for $\alpha=x,y,z$.

The relations between the local spin operators can be packaged together into a `global' spin Lie algebra governing the entire spin chain,
\begin{equation}\label{s2:eq:su(2) global}
	[S_k^{\alpha\vphantom{\beta}},S_l^\beta] = \I\,\hbar \, \delta^{\vphantom\gamma}_{k,l} \sum_{\gamma=x,y,z} \varepsilon^{\alpha\beta\gamma} \, S^\gamma_l \ ,
\end{equation}
where the totally antisymmetric $\mathfrak{su}(2)$-structure constant is fixed by~$\varepsilon^{xyz}=1$. The relation~\eqref{s2:eq:su(2) global} is sometimes called \emph{ultralocal} since the spin operators at different sites commute. For computations it is convenient to work with the ($\mathfrak{sl}(2)=\mathfrak{su}(2)_\mathbb{C}$) ladder operators $S_l^\pm \coloneqq S_l^{x\vphantom{y}} \pm \I\, S_l^y$ together with $S^z_l$, satisfying
\begin{equation}\label{s2:eq:su(2) ladder}
	[S_k^{z\vphantom\pm},S_l^\pm] = \pm\hbar\, \delta^{\vphantom\pm}_{k,l} \, S_l^\pm \ , \qquad [S_k^+,S_l^-] = 2 \, \hbar \, \delta^{\vphantom\pm}_{k,l} \, S_l^z \ , \qquad [S_k^\pm,S_l^\pm] = 0 \ .
\end{equation}
With respect to the basis $\{\ket{\uparrow}_l,\ket{\downarrow}_l\}$ of~$V_l$ these operators are given by
\begin{equation}\label{s2:eq:su(2) ladder matrices}
	S_l^+ = \hbar \, \begin{pmatrix} 0 & 1 \\ 0 & 0 \end{pmatrix} \ , \qquad S_l^- = \hbar \, \begin{pmatrix} 0 & 0 \\ 1 & 0 \end{pmatrix} \ , \qquad S_l^z = \frac{\hbar}{2} \, \begin{pmatrix} 1 & 0 \\ 0 & -1 \end{pmatrix} \ .
\end{equation}

The set-up so far can be summarized in more mathematical terms by saying that a \emph{spin chain} is a Hilbert space~$\cH$ as in~\eqref{s2:eq:Hilb global} carrying for each~$l\in \mathbb{Z}_L$ an irreducible $\mathfrak{su}(2)$-representation; for us this is the two-dimensional (defining) representation \eqref{s2:eq:tensor-leg notation}. In fact $\cH$ also carries a `global' $\mathfrak{su}(2)$-representation, given by the \emph{total spin operator} $\Vector{S}=(S^z,S^y,S^z)$ defined as 
\begin{equation}\label{s2:eq:total spin}
S^\alpha \coloneqq \sum_{l\in\mathbb{Z}_L} S_l^\alpha \in \End(\cH) \ , \qquad\qquad \alpha=x,y,z \ . 
\end{equation}
This representation is reducible, as we will see in~\eqref{s2:eq:M-part decomposition}.
\begin{exercise}\label{s2:ex:Sz block diagonal}
To practice with this notation, compute the matrix of $S^z$ with respect to the standard basis for $\cH$ for $L=2$ and $L=3$. 
\end{exercise}

The last piece of input is a (hermitean) Hamiltonian $H \in \End(\cH)$ describing the exchange interaction between the spins. We will need the following properties from these interactions: they are
\begin{enumerate}[label=\roman*),noitemsep]
	\item only \emph{nearest neighbour};
	\item \emph{homogeneous}, i.e.\ translationally invariant; and
	\item at least \emph{partially isotropic}, i.e.\ $[S^z,H]=0$. 
\end{enumerate}
\begin{exercise}
Argue that any spin-chain Hamiltonian obeying property~(i) can be written as $H = \sum_l H_{l,l+1}$. What does (ii) mean for the boundary conditions when $L$ is finite? 
Try to find the form of the most general local contributions~$H_{l,l+1}$ satisfying (ii)--(iii).
\end{exercise}

\paragraph{Examples.} The simplest spin chain satisfying~(i)--(iii) is the Heisenberg `\textsc{xxx}' model,
\begin{equation}\label{s2:eq:Ham xxx}
	H_\textsc{xxx} = -J \sum_{l\in\mathbb{Z}_L} \Vector{S}_l \cdot \Vector{S}_{l+1} \ ,
\end{equation}
where the \emph{exchange coupling}~$J$ sets the energy scale. Since $[\Vector{S},H_\textsc{xxx}]=0$ this model is completely isotropic and the spins have no preferred direction. Accordingly the spectrum is highly degenerate: the states come in an $\mathfrak{su}(2)$-multiplet for each energy eigenvalue. When $J>0$ the lowest energy is attained when each of the local terms in~\eqref{s2:eq:Ham xxx} contributes maximally, so the spins tend to align. This is the \emph{ferromagnetic} regime studied by Bethe in 1931. In contrast, for $J<0$, the spins tend to anti-align and the macroscopic magnetization vanishes. This \emph{antiferromagnetic} regime was first analyzed by N\'eel in 1948.

A more general spin chain obeying properties (i)--(iii) is the `\textsc{xxz}' or Heisenberg-Ising model,
\begin{equation}\label{s2:eq:Ham xxz}
	H_\textsc{xxz} = -J \sum_{l\in\mathbb{Z}_L} \bigl( S^x_l S^x_{l+1} + S^y_l S^y_{l+1} + \Delta \, S^z_l S^z_{l+1} \bigr) \ ,
\end{equation}
where $\Delta \in \mathbb{R}$ is the \emph{anisotropy} parameter. This model was introduced by Orbach in 1958 and thoroughly studied by Yang and Yang in the 1960s (see \cite{YY66} and references therein). In terms of the ladder operators~\eqref{s2:eq:su(2) ladder} the Hamiltonian \eqref{s2:eq:Ham xxz} reads
\begin{equation}\label{s2:eq:Ham xxz via S^pm}
	H_\textsc{xxz} = -\frac{J}{2} \sum_{l\in\mathbb{Z}_L} \bigl( S^+_l S^-_{l+1} + S^-_l S^+_{l+1} + 2 \, \Delta \, S^z_l S^z_{l+1} \bigr) \ .
\end{equation}
This form clearly shows that the first two terms describe the hopping of excited spins while the third term counts the number of (mis)aligned neighbouring spins.
\begin{exercise}
To get more feeling for the \textsc{xxz} Hamiltonian consider the summand in~\eqref{s2:eq:Ham xxz via S^pm}. By \eqref{s2:eq:tensor-leg notation} and~\eqref{s2:eq:su(2) ladder matrices}, we have e.g.\ $S^z_l S^z_{l+1} \propto (\sigma^z\otimes \id)_{l,l+1}(\id\otimes \, \sigma^z)_{l,l+1}=(\sigma^z\otimes \sigma^z)_{l,l+1}$. Use this to check that with respect to the standard basis of $V_l\otimes V_{l+1}$
\begin{equation}\label{s2:eq:xxz Ham L=2}
	H_{l,l+1} = -\frac{J}{2} \big( S^+_l S^-_{l+1} + S^-_l S^+_{l+1} + 2 \, \Delta \, S^z_l S^z_{l+1} \big) = - \frac{\hbar^2 J}{4} \begin{pmatrix} \Delta\! & & & \\ & \! -\Delta \! & 2 & \\  & 2 & \! -\Delta \!  & \\ & & & \!\Delta \end{pmatrix}_{\!l,l+1} \ , 
\end{equation}
where zeroes are suppressed.
\end{exercise}

\begin{exercise}
Show that for $L$ even it suffices to take $J>0$ and $\Delta\in\mathbb{R}$ by using \eqref{s2:eq:su(2) ladder} to compute $V \, H_\textsc{xxz} \, V^{-1}$ for $V\coloneqq \prod_l S^z_{2l} \in \End(\cH)$. Which value of $\Delta$ corresponds to the antiferromagnetic \textsc{xxx} model in this way?
\end{exercise}

\begin{exercise}\label{s2:ex:external magnetic field}
An external magnetic field in the $z$-direction can be included by adding $-h \sum_l S^z_l$ to the Hamiltonian, preserving properties (i)--(iii). Show that it is enough to consider $h>0$ by calculating $W \, H_\textsc{xxz}(h) \, W^{-1}$ with $W \coloneqq \prod_{l} S^x_l \in \End(\cH)$ the spin-flip operator.
\end{exercise}

There exists a further generalization, the `\textsc{xyz}' model, which has a different coupling constant for each spin-direction~$\alpha$. This spoils property~(iii) and the model cannot be treated using the Bethe ansatz (but see \cite[\textsection9--10]{Bax07}).

\paragraph{Symmetries.} Our goal is to find the spectrum of the \textsc{xxz} model, $H_\textsc{xxz}\ket{\Psi}=E\ket{\Psi}$. This will be achieved in Sections \ref{s2:method} and~\ref{s2:results} using the \textsc{cba}, and again in Section~\ref{s4:ABA} with a more slick method. As always, the symmetries come to our aid, and we can exploit properties~(ii)--(iii) to break our problem into smaller pieces. The following symmetries are at our disposal: translations along the lattice, by any amount of sites, and rotations around the $z$-axis, generated by $S^z$. Thus the symmetry group is
\begin{equation}\label{s2:eq:Ham symm group}
	 G = \mathbb{Z}\times U(1)_z \subseteq \mathbb{Z}\times SU(2) \ .
\end{equation}
In mathematical terms these symmetries can be used to decompose $\cH$ into a direct sum of irreducible $G$-representations, or `sectors', which are preserved by the Hamiltonian. Let us see what this means concretely.
\begin{exercise}
Without reading any further, find the consequence of partial isotropy for $L=2$ by comparing the result of Exercise~\ref{s2:ex:Sz block diagonal} with \eqref{s2:eq:xxz Ham L=2}. What are the sectors corresponding to~$U(1)_z$?
\end{exercise}

\paragraph{$M$-particle sectors.} First we exploit the partial isotropy. $H$ and $S^z$ can be simultaneously diagonalized by (iii), so the eigenvectors of $S^z$ form a basis for $\cH$ in which the Hamiltonian is block diagonal. Let us show that it has the following form with respect to this basis:
\begin{equation}\label{s2:eq:block diagonal}
	H \ = \ \begin{pmatrix}	\ \tikz[scale=0.8]{\pgfmathsetmacro{\A}{-42}
		\fill[gray!30] (0,0) rectangle (\A:.2) rectangle +(\A:.6);
		\foreach \y in {-1,...,1} \draw (\A:.8+.5+.2*\y) node{$\cdot$};
		\fill[gray!30] (\A:.8+1) rectangle +(\A:2.4);
		\foreach \y in {-1,...,1} \draw (\A:.8+1+2.4+.5+.2*\y) node{$\cdot$};
		\fill[gray!30] (\A:.8+1+2.4+1) rectangle ++(\A:.6) rectangle +(\A:.2);
	} \ \end{pmatrix} \ .
\end{equation}

The first block in \eqref{s2:eq:block diagonal} is $1\times1$ and corresponds to the \emph{pseudovacuum}, which we take to be
\begin{equation}\label{s2:eq:pseudovacuum}
	\ket{\Omega} \coloneqq \bigotimes_{l\in\mathbb{Z}_L} \ket{\uparrow}^{\phantom{j}}_l =  \ket{ \underset{1}{\uparrow} \To \underset{L}{\uparrow} } \in \cH \ .
\end{equation}
This vector happens to be a ground state of \eqref{s2:eq:Ham xxz} if $J\Delta>0$, as we will see in \eqref{s2:eq:xxz energy pseudovacuum}, but the point is that $\ket{\Omega}$ is an eigenvector of $S^z$ (with spin $\hbar\, L/2$) and killed by all $S_l^+$: it is a \emph{highest-weight vector}. This makes it a suitable reference point for constructing all other $S^z$-eigenvectors. For example, the second block in \eqref{s2:eq:block diagonal} is obtained by flipping any single spin,
\begin{equation}\label{s2:eq:M=1 basis}
	\ket{l} \coloneqq \hbar^{-1} \, S_l^- \ket{\Omega} = \ket{ \underset{1}{\uparrow} \To \uparrow \underset{l}{\downarrow} \uparrow \To \underset{L}{\uparrow} } \in \cH \ ,
\end{equation}
producing $L$ vectors (so the block has size $L\times L$) with spin $L/2-1$. Likewise the third block corresponds to flipping yet another spin; since $(S_l^-)^2=0$ and $S_k^- S_l^- = S_l^- S_k^-$ this yields $\binom{L}{2}$ different vectors $\ket{k,l}$ for $1\leq k<l \leq L$.

In general, by repeatedly applying lowering operators~$S_l^-$ to $\ket{\Omega}$ we construct an orthonormal basis describing configurations with $0 \leq M \leq L$ flipped spins:
\begin{equation}\label{s2:eq:M-part basis}
	\ket{l_1,\To,l_M} \coloneqq \hbar^{-M} \, S_{l_1}^- \cdots S_{l_M}^- \ket{\Omega} \in \cH \ , \qquad\qquad 1\leq l_1<\cdots<l_M \leq L\ .
\end{equation}
This is the \emph{coordinate basis} of $\cH$, which is responsible for the `coordinate' in `\textsc{cba}'. A nice aspect of this basis is that it is very physical; its elements can be depicted as in Figure~\ref{s2:fg:spin chain} (for which $M=5$). The price we pay is that we lose manifest periodicity by restricting ourselves to the `standard domain' $1\leq l_1<\cdots<l_M \leq L$ to avoid overcounting. Consequently the periodic boundary conditions~$\Vector{S}_{l+L} = \Vector{S}_l$ must be imposed explicitly when working with the coordinate basis. This will be important in Section~\ref{s2:method} and Appendix~\ref{sM}.

From \eqref{s2:eq:su(2) ladder} it follows that all spin configurations~\eqref{s2:eq:M-part basis} are eigenvectors of the total spin-$z$ operator:
\begin{equation}\label{s2:eq:M-part Sz action}
	S^z \, \ket{l_1,\To,l_M} = \hbar \, (L/2 - M) \,  \ket{l_1,\To,l_M} \ .
\end{equation}
Let us write $\cH_M\subseteq \cH$ for the \emph{$M$-particle sector} consisting of all vectors with $M$ spins down. The (weight) decomposition of our Hilbert space into these subspaces,
\begin{equation}\label{s2:eq:M-part decomposition}
	\cH = \bigoplus_{M=0}^L \cH_M \ ,
\end{equation}
corresponds to the block-diagonal form of $H$ in~\eqref{s2:eq:block diagonal}.
\begin{exercise}
Compute the size of the $M$th block in \eqref{s2:eq:block diagonal}. Check that the dimensions on both sides of \eqref{s2:eq:M-part decomposition} agree.
\end{exercise}

The upshot is that partial isotropy allows us to focus on diagonalizing the Hamiltonian in the $M$-particle sector: our new goal is to solve the eigenvalue problem
\begin{equation}\label{s2:eq:eigenproblem}
	H_\textsc{xxz} \, \ket{\Psi_M} = E_M \, \ket{\Psi_M} \ , \qquad \ket{\Psi_M} \in \cH_M \ .
\end{equation}

\paragraph{Magnons.} Next we exploit the homogeneity; let us see how far that gets us. By~(ii) the Hamiltonian satisfies $U H U^{-1} = H$ where the \emph{shift operator}~$U\in\End(\cH)$ shifts all  sites to the \choice{right}{left}, mapping each $V_l$ to \choice{$V_{l+1}$}{$V_{l-1}$}.$^\#$\footnote{In Section~\ref{s4:conserved quantities} we will see that \choice{our convention, where the shift operator acts by translation to the right, is}{defining the shift operator as acting by translations to the \emph{right} would be} more natural from the viewpoint of the \textsc{qism}, cf.~\eqref{s4:eq:t at u*}. \choice{This convention results}{However, using that convention would result} in a sign in the exponent in \eqref{s2:eq:magnon}, and similarly in e.g.\ \eqref{s2:eq:CBA M=2} and \eqref{s2:eq:CBA M} for higher~$M$. At any rate, this choice of convention essentially only affects the sign of the (quasi)momentum; of course the physical results do not depend on it.} In analogy with continuous (as opposed to lattice) models one often writes $U\eqqcolon \E^{\I P}$. Since $U$ is unitary its eigenvalues are of the form~$\E^{\I p}$ for some real momentum~$p$. (Note that $p$ is defined mod~$2\pi$.) Periodic boundary conditions imply that $U^L=\id$ is the identity operator on~$\cH$, leading to momentum quantization $p\in\frac{2\pi}{L}\mathbb{Z}_L$ as expected for particles on a circle.

\begin{exercise}
For the zero-particle sector~$\cH_0$ use homogeneity to find the momentum of the pseudovacuum~\eqref{s2:eq:pseudovacuum}. Check that $H_\textsc{xxz} \, \ket{\Omega} = E_0 \, \ket{\Omega}$ with `vacuum' energy
\begin{equation}\label{s2:eq:xxz energy pseudovacuum}
	E_0 = - \hbar^2 J \Delta \, L /4 \ .
\end{equation}
\end{exercise}

The one-particle sector is fixed by homogeneity as well. Indeed, any vector in $\cH_1$ can be expressed in terms of the coordinate basis. Translational invariance means that $\ket{\Psi_1;p} = \sum_l \Psi_p(l) \, \ket{l}$ should be an eigenvector of $U$ for some momentum~$p$. Using $U^\dag = U^{-1}$ it follows that the wave functions satisfy the recursion $\Psi_p(\choice{l-1}{l+1}) = \bra{l} U \ket{\Psi_1;p} = \E^{\I p}\Psi_p(l)$, yielding a plane-wave expansion:
\begin{equation}\label{s2:eq:magnon}
	\ket{\Psi_1;p} = \frac 1{\sqrt{L}} \sum_{l\in\mathbb{Z}_L} \E^{\choice{-}{}\I p \, l } \ket{l} \in \cH_1 \ .
\end{equation}
\begin{exercise}
Use $\braket{k}{l} = \delta_{k,l}$ to check that the $\#\mathbb{Z}_L = L = \dim(\cH_1)$ vectors \eqref{s2:eq:magnon} constitute an orthonormal basis for $\cH_1$.
\end{exercise}
The basis vectors~\eqref{s2:eq:magnon} describe excitations around the pseudovacuum~$\ket{\Omega}$ called \emph{magnons}: spin waves with quantized wavelength~$2\pi/p$ travelling along the chain. With respect to the magnon basis~\eqref{s2:eq:magnon} for~$\cH_1$ any translationally-invariant Hamiltonian is diagonal in the one-particle sector. Whether or not magnons are `quasiparticles' describing low-lying excitations in the physical spectrum depends on the model's parameters. 
\begin{exercise}
Compute the action of \eqref{s2:eq:Ham xxz via S^pm} on \eqref{s2:eq:M=1 basis} to check that the dispersion relation of a magnon in the \textsc{xxz} spin chain is
\begin{equation}\label{s2:eq:xxz energy magnon}
	\varepsilon_1(p) \coloneqq E_1(p) - E_0 = \hbar^2 J \, (\Delta-\cos p) \ .
\end{equation}
\end{exercise}
Notice that for the \textsc{xxx} case ($\Delta=1$) the magnon with vanishing momentum contributes zero to the energy. This is a direct consequence of the symmetries: $H_\textsc{xxx}$ has full rotational symmetry, so its eigenstates come in $\mathfrak{su}(2)$-multiplets. The zero-momentum magnon is simply the first $\mathfrak{su}(2)$-descendant of the pseudovacuum: $\ket{\Psi_1;0} \propto S^- \ket{\Omega}$ with $S^- = \sum_l S^-_l$ the total lowering operator. Turning on the anisotropy lifts the degeneracy in the spectrum.

For $M\geq2$ we can again write any translationally-invariant vector with momentum~$p$ as
\begin{equation}\label{s2:eq:M vector via coord basis}
	\ket{\Psi_M;p} = \sum_{1\leq l_1<\cdots<l_M \leq L} \!\! \Psi_p(l_1,\To,l_M) \,  \ket{l_1,\To,l_M} \in \cH_M
\end{equation}
with respect to the coordinate basis~\eqref{s2:eq:M-part basis} of $\cH_M$. This time, however, the wave functions~$\Psi_p(\Vector{l})$ are not completely determined by the symmetries~\eqref{s2:eq:Ham symm group}.$^\#$\footnote{For example, expand $\ket{\Psi_2;p}\in\cH_2$ as in \eqref{s2:eq:M vector via coord basis}. Homogeneity again recursively relates wave functions for excited spins at equal separation~$d_{12}\coloneqq d(l_1,l_2)$, where $d(k,l)\coloneqq \min_{n\in\mathbb{Z}} | l - k + n \, L |$ is the distance function (metric) on $\mathbb{Z}_L$. Indeed, $\Psi_p(\choice{l_1-1,l_2-1}{l_1+1,l_2+1}) = \bra{l_1,l_2} U \ket{\Psi_2;p} = \E^{\I p}\Psi_p(l_1,l_2)$, so $\Psi_p(l_1,l_2) = \E^{\choice{-}{} \I p\,l_1} \Psi'_p(d_{12}) = \E^{\choice{-}{}\I p\,l_2} \Psi''_p(d_{12})$ for some function $\Psi'_p$ (or equivalently $\Psi''_p$) depending on the lattice only through the separation between the two flipped spins. However, the values of this function for different~$d_{12}$ are not related by the symmetries~\eqref{s2:eq:Ham symm group}.} To diagonalize the larger blocks of the Hamiltonian we have to be smart: this is where the Bethe ansatz comes in.

\subsection{The coordinate Bethe ansatz}\label{s2:method}

In a ground-breaking paper from 1931, Bethe solved the ferromagnetic regime of the \textsc{xxx} spin chain~\cite{Bet31}. This subsection introduces the coordinate Bethe ansatz (\textsc{cba}) for any spin chain obeying properties (i)--(iii) above. The focus lies on the structure and physics of the method; computational details can be found in Appendix~\ref{sM}.

The basic idea of the \textsc{cba} is to parametrize the states in the $M$-particle sector via parameters~$p_m$, $1\leq m\leq M$. In the simplest case this boils down to the result~\eqref{s2:eq:magnon}, while in general it is a symmetrized version of the Fourier transform. The spectrum of the Hamiltonian follows once the values of the~$p_m$ are found from a system of coupled nonlinear equations: the \emph{Bethe-ansatz equations}~(\textsc{bae}). Thus, in principle, the \textsc{cba} provides a concrete and physical way of converting the problem of diagonalizing the Hamiltonian to that of solving the \textsc{bae}.

\paragraph{Trials for $M=2$.} To understand where the \textsc{cba} comes from we first consider the case~$M=2$. Expand $\ket{\Psi_2;p}$ as in \eqref{s2:eq:M vector via coord basis}. Note that we only have to define the wave function $\Psi_{p}(l_1,l_2)$ for~$1\leq l_1<l_2 \leq L$. Inspired by the result~\eqref{s2:eq:magnon} for~$M=1$ a reasonable first guess is to include \emph{two} parameters, $p_1$ and~$p_2$, and try
\begin{equation}\label{s2:eq:naive M=2}
	\Psi_{p_1,\,p_2}(l_1,l_2) \stackrel{?}{=} \Psi_{p_1}(l_1) \Psi_{p_2}(l_2) \propto \E^{\choice{-}{} \I (p_1 l_1 + p_2 l_2)} \ , \qquad\qquad l_1 < l_2 \ ,
\end{equation}
where an overall normalization, not depending on the lattice sites, is suppressed. The periodic boundary conditions require
\begin{equation}\label{s2:eq:periodicity M=2}
	\Psi_{p_1,\,p_2}(l_2,l_1+L) = \Psi_{p_1,\,p_2}(l_1,l_2) \ , \qquad\qquad 1 \leq l_1 < l_2 \leq L \ .
\end{equation}
where for \eqref{s2:eq:naive M=2} the left-hand side is given by
\begin{equation}\label{s2:eq:naive M=2 periodicity}
	\E^{\choice{-}{} \I p_2 L} \, \E^{\choice{-}{} \I (p_1 l_2 + p_2 l_1)} \ .
\end{equation}
\begin{exercise}
Check that for \eqref{s2:eq:naive M=2} the only solution to \eqref{s2:eq:periodicity M=2} is $p_1=p_2=0$.
\end{exercise}

To improve our guess for the wave functions we notice that \eqref{s2:eq:naive M=2 periodicity} describes two excited spins, like \eqref{s2:eq:naive M=2}, but with the positions~$l_m$ of the excitations with parameters~$p_m$ interchanged. Thus \eqref{s2:eq:naive M=2 periodicity} can interpreted as the result of scattering of the two excitations from \eqref{s2:eq:naive M=2}. Correcting our initial trial to take into account such scattering we thus try $A\, \E^{\choice{-}{}\I (p_1 l_1 + p_2 l_2)} + A'\, \E^{\choice{-}{}\I  (p_1 l_2 + p_2 l_1)}$. Through \eqref{s2:eq:periodicity M=2}, periodicity now relates the~$p_m$ to the coefficients as $\E^{\choice{-}{}\I p_1 L} = A'/A$ and $\E^{\choice{-}{}\I p_2 L} = A/A'$. Setting $A=A'$ would result in the~$p_m$ each taking values as for a single, free, magnon. To allow for interactions between the flipped spins we promote the coefficients to functions:
\begin{equation}\label{s2:eq:CBA M=2}
	\Psi_{p_1,\,p_2}(l_1,l_2) = A(p_1,p_2)\, \E^{\choice{-}{}\I (p_1 l_1 + p_2 l_2)} + A'(p_1,p_2)\, \E^{\choice{-}{}\I  (p_1 l_2 + p_2 l_1)} \ , \qquad\qquad l_1 < l_2 \ .
\end{equation}
This is the ansatz (hypothesis, educated guess) proposed by Bethe for the two-particle sector.
\begin{exercise}
Check that the vector $\ket{\Psi_2;p_1,p_2}$ given by \eqref{s2:eq:CBA M=2} has momentum $p=p_1+p_2$.
\end{exercise}

The \emph{two-body $S$-matrix} (which, despite its name, is just $1\times1$)
\begin{equation}\label{s2:eq:S-matrix}
	S(p_1,p_2) \coloneqq \frac{A'(p_1,p_2)}{A(p_1,p_2)}
\end{equation}
describes the scattering of the two excitations, as can be seen by rewriting \eqref{s2:eq:CBA M=2} in the form $\Psi_{p_1,\,p_2}(l_1,l_2) \propto \E^{\choice{-}{}\I (p_1 l_1 + p_2 l_2)} + S(p_1,p_2)\, \E^{\choice{-}{}\I(p_1 l_2 + p_2 l_1)}$. 

The periodic boundary conditions \eqref{s2:eq:periodicity M=2} impose two equations:
\begin{equation}\label{s2:eq:BAE M=2}
	\E^{\choice{-}{}\I p_1 L} = S(p_1,p_2) \ , \qquad \E^{\choice{-}{}\I p_2 L} = S(p_1,p_2)^{-1} \ .
\end{equation}
These are the \textsc{bae} for the parameters~$p_m$ in the two-particle sector. Physically they say that when either of the excited spins is moved once around the chain (in clockwise direction) it scatters on the other excitation. Note that the \textsc{bae} together imply the periodicity condition~$\E^{\I (p_1 + p_2) L}=1$ and hence momentum quantization $p=p_1+p_2\in \frac{2\pi}{L}\mathbb{Z}_L$ for the two-particle sector. The dependence on the details of the spin chain are hidden in the two-body $S$-matrix on the right-hand side of the \textsc{bae}. Until a model is specified we cannot say whether the \textsc{bae} have the right amount of, or even \emph{any}, solutions for the~$p_m$. We will look at the results for the \textsc{xxz} model in Section~\ref{s2:results}.

Before proceeding to the $M$-particle sector let us quickly recap our notation for $M=2$. We write `$\ket{\Psi_2}$' for an arbitrary vector in the two-particle sector, `$\ket{\Psi_2;p}$' for any vector in $\cH_2$ that is translationally invariant (with momentum~$p$), and `$\ket{\Psi_2;p_1,p_2}$' for the specific vectors (with parameters $p_1$ and~$p_2$) given by \eqref{s2:eq:CBA M=2}.

\paragraph{\textsc{cba} for general $M$.} Expand $\ket{\Psi_M} \in \cH_M$  in terms of the coordinate basis~$\ket{l_1,\To,l_M}$ as in~\eqref{s2:eq:M vector via coord basis}. Again we associate to each excitation~$l_m$ a parameter~$p_m$ that is to be determined. We abbreviate $\Vector{l}\coloneqq (l_1,\To,l_M)$ and $\Vector{p}\coloneqq (p_1,\To,p_M)$.

By property~(i) above we only have (very) short-ranged interactions, so the $M$~excited spins do not interact when they are \emph{well separated}, i.e.\ when no two excitations are next to each other. Thus, for such well-separated configurations it makes sense to look for a wave function of the product form $\Psi_{p_1}(l_1)\cdots \Psi_{p_M}(l_M)\propto \exp(\choice{-}{}\I\Vector{p} \!\cdot \Vector{l})$: this is the generalization of \eqref{s2:eq:naive M=2} to the $M$-particle sector. 

Next, as for $M=2$, we include all scattered configurations, labelled by permutations in~$S_M$ describing the ordering after the scattering. A linear combination of these $M!$ configurations, with coefficients depending on $\Vector{p}$ to account for interactions between the excitations, gives the \textsc{cba} for the (unnormalized) wave function in the $M$-particle sector:
\begin{equation}\label{s2:eq:CBA M}
	\Psi_{\Vector{p}}(\Vector{l}) = \sum_{\pi\in S_M} A_\pi(\Vector{p})\, \E^{\choice{-}{}\I\Vector{p}_\pi \!\cdot \Vector{l}} \ , \qquad\qquad l_1 < \cdots < l_M \ .
\end{equation}
Here $\Vector{p}_\pi$ is a shorthand for the (right) action of $\pi\in S_M$ on~$\Vector{p}$; concretely this just means that $\Vector{p}_\pi \!\cdot \Vector{l} = \sum_m p_{\pi(m)}\,l_m$. The ansatz~\eqref{s2:eq:CBA M} is also referred to as the \emph{Bethe wave function}. As a check we note that \eqref{s2:eq:CBA M} reduces to the magnon~\eqref{s2:eq:magnon} when $M=1$, while for $M=2$ the sum in~\eqref{s2:eq:CBA M} runs over two elements, the identity and a transposition, correctly reproducing \eqref{s2:eq:CBA M=2}.

\paragraph{Strategy.} The Bethe wave functions \eqref{s2:eq:CBA M} yield eigenstates of the Hamiltonian in the $M$-particle sector if we can solve the equations
\begin{equation}\label{s2:eq:xxz eqns}
	\bra{\Vector{l}}H_\textsc{xxz}\ket{\Psi_M;\Vector{p}} = E_M(\Vector{p})\Psi_{\Vector{p}}(\Vector{l}) \ , \qquad\qquad 1\leq l_1 < \cdots < l_M \leq L \ .
\end{equation}
The unknowns are the energy eigenvalues~$E_M(\Vector{p})$ and the coefficients $A_\pi(\Vector{p})$ (up to an overall normalization) as functions of~$\Vector{p}$, together with the actual values of the~$\Vector{p}$. The strategy to determine these consists of three steps:
\begin{enumerate}
	\item Solve the equations in~\eqref{s2:eq:xxz eqns} for configurations~$\Vector{l}$ with well-separated excitations. This is quite easy and will yield the $M$-particle dispersion relation $\varepsilon_M(\Vector{p})\coloneqq E_M(\Vector{p})-E_0$.
	\item The equations in~\eqref{s2:eq:xxz eqns} with at least one pair of neighbouring excitations in~$\Vector{l}$. Although this is more tricky, it can be done, giving the coefficients $A_\pi(\Vector{p})/A_e(\Vector{p})$.
	\item Impose the periodic boundary conditions. This will result in one equation for each exponent in \eqref{s2:eq:CBA M}: the \textsc{bae}, a~priori $M!$ in total, for the allowed (`on shell') values of~$\Vector{p}$.
\end{enumerate}

Of course this is really only half of the work: the \textsc{bae} still have to be solved, which has not been done for general $M$ and $L$, and one has to let $L \to \infty$ to study the thermodynamic properties of the model. In addition it remains to be seen whether all eigenstates are of the Bethe-form, so that the \textsc{cba} does really produce the full spectrum. The above strategy is carried out for the $M$-particle sector of the \textsc{xxz} model in Appendix~\ref{sM}; let us turn to the results.

\subsection{Results and Bethe-ansatz equations}\label{s2:results}

For brevity we set $\hbar=J=1$. This essentially only affects the energy eigenvalues, which can be restored by multiplication by~$\hbar^2 J$.

\paragraph{Results for $M=2$.} We first collect the results of the strategy in the two-particle sector. The computations can be found in Appendix~\ref{sM}.

\subparagraph{Step 1.} For the well-separated case the equations \eqref{s2:eq:xxz eqns} are satisfied by the \textsc{cba}~\eqref{s2:eq:CBA M=2} provided the energy is given by the dispersion relation
\begin{equation}\label{s2:eq:xxz energy M=2}
	\varepsilon_2(p_1,p_2) = 2\,\Delta-\cos p_1 -\cos p_2 = \varepsilon_1(p_1) + \varepsilon_1(p_2)  \ .
\end{equation}
This is a nice result: the energy consists of two contributions, one from each of the magnons. Although the values of the~$p_m$ remain to be determined, and will be different from the free (single-magnon) case to give rise to an interaction energy, this means that the excited spins in the two-particle sector behave like magnons. In particular, the two parameters~$p_m$ can be interpreted as the \emph{quasimomenta} of these magnons.

\subparagraph{Step 2.} For neighbouring excitations the equations can be solved using a trick, see Appendix~\ref{sM}, yielding the two-body $S$-matrix~\eqref{s2:eq:S-matrix}:
\begin{equation}\label{s2:eq:xxz S-matrix}
	S(p_1,p_2) = -\frac{1-2\,\Delta\E^{\choice{-}{}\I p_2} + \E^{\choice{-}{}\I (p_1+p_2)}}{1-2\,\Delta\E^{\choice{-}{}\I p_1} + \E^{\choice{-}{}\I (p_1+p_2)}} 
\end{equation}
Two-body scattering is unitary by virtue of the property $|S(p_1,p_2)|^2=1$ for $p_m\in\mathbb{R}$. This property, sometimes called \emph{physical unitarity}, suggests defining the (real-valued) function $\Theta(p_1,p_2) \coloneqq \choice{}{-}\I \log S(p_1,p_2)$ known as the \emph{two-body scattering phase}.

The result \eqref{s2:eq:xxz S-matrix} has two important physical implications. As $S(p_2,p_1) = S(p_1,p_2)^{-1}$ the Bethe wave function $\Psi_{p_1,p_2}(l_1,l_2)$ is symmetric in the two $p_m$ upon normalizing~\eqref{s2:eq:CBA M=2} by $A(p_1,p_2)=S(p_1,p_2)^{-1/2}$. Thus $\ket{\Psi_2;p_1,p_2}=\ket{\Psi_2;p_2,p_1}$: the magnons obey \emph{Bose-Einstein statistics}. Interestingly, \eqref{s2:eq:xxz S-matrix} also satisfies the fermion-like property $S(p_1,p_1)=-1$. Therefore $\ket{\Psi_2;p_1,p_1}=0$, yielding a \emph{Pauli exclusion principle} for the quasimomenta.$^\#$\footnote{Anticipating the Pauli exclusing principle some authors include a sign in the \textsc{cba} from the start, replacing \eqref{s2:eq:CBA M} by $\Psi_{\Vector{p}}(\Vector{l}) = \sum_{\pi\in S_M} (-1)^{\sgn(\pi)} A_\pi(\Vector{p})\E^{\choice{-}{}\I\Vector{p}_\pi \!\cdot \Vector{l}}$.} However, there is no spin-statistics connection in $1+1$~dimensions, so these two properties are compatible.

\subparagraph{Step 3.} The remaining task is to find equations for the values of the $p_m$. Plugging \eqref{s2:eq:xxz S-matrix} into \eqref{s2:eq:BAE M=2} we obtain the \textsc{bae} for the two-particle sector of the \textsc{xxz} model:
\begin{equation}\label{s2:eq:xxz BAE M=2}
\begin{aligned}
	& \E^{\choice{-}{}\I p_1 L} = -\frac{1-2\,\Delta\E^{\choice{-}{}\I p_2} + \E^{\choice{-}{}\I (p_1+p_2)}}{1-2\,\Delta\E^{\choice{-}{}\I p_1} + \E^{\choice{-}{}\I (p_1+p_2)}} = \E^{-\I \Theta(p_1,p_2)} \ , \\
	& \E^{\choice{-}{}\I p_2 L} = -\frac{1-2\,\Delta\E^{\choice{-}{}\I p_1} + \E^{\choice{-}{}\I (p_1+p_2)}}{1-2\,\Delta\E^{\choice{-}{}\I p_2} + \E^{\choice{-}{}\I (p_1+p_2)}} = \E^{-\I \Theta(p_2,p_1)} \ .
\end{aligned}
\end{equation}
Taking the logarithm shows that these are just quantization conditions for the quasimomenta:
\begin{equation}\label{s2:eq:BAE M=2 log form}
	L \, p_1 = 2\pi I_1 - \Theta(p_1,p_2) \ , \qquad L \, p_2 = 2\pi I_2 + \Theta(p_1,p_2) \ , \qquad\qquad I_m \in\mathbb{Z}_L \ ,
\end{equation}
where the $I_m$ are known as the \emph{Bethe quantum numbers}. Thus the quasimomenta in the two-particle sector are no longer the `bare' quantities (valued in $\frac{2\pi}L\mathbb{Z}_L$) of a free theory: \eqref{s2:eq:BAE M=2 log form} shows that they are `renormalized' as a result of interactions between the two magnons.

\begin{exercise}\label{s2:ex:arctan relation}
Find an expression for $\Theta(p_1,p_2)$ by multiplying the numerator and denominator in \eqref{s2:eq:xxz S-matrix} by $\E^{\choice{}{-}\I (p_1+p_2)/2}$ and using $\log\frac{u+\I v}{u-\I v}=2\I\arctan\frac{v}{u}$. 
\end{exercise}

\begin{exercise}
Check \eqref{s2:eq:xxz energy M=2} and \eqref{s2:eq:xxz S-matrix} without consulting Appendix~\ref{sM}.
\end{exercise}

\paragraph{Two-particle spectrum for $\Delta=1$.} To see whether we have succeeded in diagonalizing the Hamiltonian for~$M=2$ one has to check whether the \textsc{bae} admit $\dim(\cH_2) = \binom{L}{2}$  solutions giving rise to linearly independent states. To get some feeling for the physics we briefly discuss the spectrum that Bethe found for the ferromagnetic \textsc{xxx} model. More details, including some nice plots, can be found in~\cite{KM98} (for the antiferromagnetic case see~\cite{KHM98,*KHM00}). The solutions fall into three classes:
\begin{itemize}
\item[i)] There are $L$ solutions describing a superposition of two free magnons, $p_1=0$ and $p=p_2= 2\pi I_2/L$. These can also be understood as the $\mathfrak{su}(2)$-descendants of the states in the one-particle sector, $\sum_{l_1<l_2}(\E^{\choice{-}{}\I p \, l_1}+\E^{\choice{-}{}\I p \, l_2})\ket{l_1,l_2}\propto S^- \ket{\Psi_1;p}$.
\item[ii)] The remaining solutions $0<p_1<p_2<2\pi$ can be interpreted as nearly free superpositions of magnons whose interactions vanish for $L\to\infty$. Together with the first class these are \emph{scattering states}. 
\end{itemize}
However, there are not enough of these solutions. To find the remaining states the \textsc{cba} has to be improved by extending the quasimomenta to \emph{complex} values, $p_m\in\mathbb{C}$. Unitarity of the shift operator~$U=\E^{\I P}$ requires the total momentum~$p=p_1+p_2$ to remain real. These account for the third class of solutions:
\begin{itemize}
\item[iii)] Quasimomenta with $\Im(p_1)=-\Im(p_2)\neq 0$ cause $|\Psi_{\Vector{p}}(l_1,l_2)|$ to decrease with increasing separation between the magnons. These solutions can be interpreted as \emph{bound states}.
\end{itemize}
Such results were confirmed by neutron-scattering experiments, also for other models solvable by Bethe-ansatz techniques, see e.g.~\cite{ME+13,*GB+13,*KWW06}.

Let us briefly comment on a source of possible confusion. The appearance of the second $\mathfrak{su}(2)$-descendent of the pseudovacuum in the $M=2$ spectrum (with $p_1=p_2=0$) may appear to conflict with the Pauli exclusion principle for the two-body $S$-matrix. This issue is resolved by noticing that in the isotropic case the two-body $S$-matrix,
\begin{equation}\label{s2:eq:xxx S-matrix}
	S(p_1,p_2)|_{\Delta=1} = -\frac{1-2\,\E^{\choice{-}{}\I p_2} + \E^{\choice{-}{}\I (p_1+p_2)}}{1-2\,\E^{\choice{-}{}\I p_1} + \E^{\choice{-}{}\I (p_1+p_2)}} = \frac{\choice{-}{}\cot \tfrac{p_1}{2} \choice{+}{-}\cot \tfrac{p_2}{2} + 2 \I}{\choice{-}{}\cot \tfrac{p_1}{2} \choice{+}{-} \cot \tfrac{p_2}{2} - 2 \I} \ ,
\end{equation}
is not continuous at the origin. Indeed, along the diagonal in the quasimomentum plane we have $S(p_1,p_1) = -1$, cf.~the Pauli exclusion principle. However, along either axis \eqref{s2:eq:xxx S-matrix} satisfies $S(p_1,0)|_{\Delta=1} = S(0,p_2)|_{\Delta=1} = +1$, so the $\mathfrak{su}(2)$-descendants do not vanish. This discontinuity hints at the fact that the \textsc{xxz} model with `generic' anisotropy $\Delta\neq\pm 1$ is mathematically better behaved than the isotropic spin chain (cf.~Appendix~\ref{sY}). From this perspective $\Delta$ can be seen as a regulator for the \textsc{xxx} spin chain.

\begin{exercise}
Check that, as $\Delta\to1$, the result of Exercise~\ref{s2:ex:arctan relation} matches the expression for $\Theta(p_1,p_2)|_{\Delta=1}$ obtained directly from \eqref{s2:eq:xxx S-matrix}.
\end{exercise}

\paragraph{Results for general $M$.} Working out the strategy for the general $M$-particle sector, see Appendix~\ref{sM}, gives the following results.

\subparagraph{Step 1.} The equations \eqref{s2:eq:xxz eqns} are solved by the \textsc{cba}~\eqref{s2:eq:CBA M} for well-separated excitations if the contribution to the energy is
\begin{equation}\label{s2:eq:xxz energy M}
	\varepsilon_M(\Vector{p}) = M\,\Delta - \sum_{m=1}^M \cos p_m = \sum_{m=1}^M \varepsilon_1(p_m) \ .
\end{equation}
Thus, the dispersion relation behaves additively for general~$M$ as well: the energy splits into separate contributions for each~$p_m$. Let us stress once more that this does not mean that~\eqref{s2:eq:xxz energy M} is simply the sum of free-magnon contributions; the quasimomenta~$p_m$ are determined by the \textsc{bae} to account for the interaction energy. At any rate, \eqref{s2:eq:xxz energy M} does justify our quasiparticle interpretation for all~$M$, so that we may conclude that $M$ counts the number of magnons. In particular, since the $\cH_M$ are preserved by the Hamiltonian, the magnon-number is conserved: there is no magnon production or annihilation.

\subparagraph{Step 2.} The coefficients $A_\pi(\Vector{p})$ in the Bethe wave function~\eqref{s2:eq:CBA M} also have to satisfy \eqref{s2:eq:xxz eqns} for one or more pairs of neighbouring excitations. For general $M$ there are many more such equations than unknowns, but it turns out that they can in fact be solved. Up to an overall factor the $A_\pi$ are expressed in terms of the two-body $S$-matrix~\eqref{s2:eq:xxz S-matrix}:
\begin{equation}\label{s2:eq:A solution}
	\frac{A_\pi(\Vector{p})}{A_e(\Vector{p})} \ = \ \prod_{\substack{1 \leq m < m' \leq M \\ \text{s.t.\ } \pi(m)>\pi(m')}} S(p_m,p_{m'}) \ .
\end{equation}
This is a very remarkable result: physically, \eqref{s2:eq:A solution} says that $M$-magnon scattering is \emph{two-body reducible}, i.e.\ factors into successive two-magnon scattering processes, governed by the two-body $S$-matrix~\eqref{s2:eq:xxz S-matrix}. This is a extremely useful aspect of the model; we essentially know everything about the many-body scattering of magnons once we understand two-magnon scattering. Models exhibiting this property are rather special. Indeed, note that if the scattering were \emph{one}-body reducible, magnons would move freely along the spin chain. The \textsc{xxz} model, with \emph{two}-body reducible scattering, is just one level up in complexity, allowing one to study interesting dynamics in a controlled setting.

The results that there is no magnon-production and that scattering is two-body reducible hint at the existence of hidden symmetries and conserved quantities that render the \textsc{xxz} spin chain quantum integrable, see Section~\ref{s5:fact scatt}. This is indeed the case; we will find these conserved quantities in Sections \ref{s3:results} and \ref{s4:conserved quantities}.

There is a nice graphical way to understand the result~\eqref{s2:eq:A solution}~\cite{LL63}. Depict the initial and final configurations of quasimomenta (magnons) in the scattering process~$\pi\in S_M$ as follows:
	\[ \begin{array}{lcccc} \text{final:} \qquad & p_{\pi(1)} & p_{\pi(2)} & \cdots & p_{\pi(M)} \\ & & & & \\ & & & & \\ \text{initial:} \qquad & p_1 & p_2 & \cdots & p_M \end{array} \]
Now connect equal quasimomenta by arrows in such a way that there are no points where three or more lines meet. Typically there are several ways in which this can be done; these (must and do) give the same result. For every pair~$n<m$ that is switched by~$\pi$ there is a crossing
	$\begin{tikzpicture}[>=latex,scale=0.5,font=\small]
		\draw (-125:.3) -- (55:.3);
		\draw (-55:.3) -- (125:.3);
		\fill [black] (0,0) circle (.5ex);
	\end{tikzpicture}$
of $p_n$ and $p_m$, contributing a factor of $S(p_n,p_m)$ to $A_\pi(\Vector{p})$. For example, the coefficient~$A_\pi$ describing a three-magnon scattering process is depicted in Figure~\ref{s2:fg:factorized scattering}. 
\begin{exercise}
Rewrite the result \eqref{s2:eq:A solution} in terms of the two-body scattering phase for the normalization $A_e(\Vector{p}) = \prod_{m<m'} S(p_m,p_{m'})^{-1/2}$.
\end{exercise}
\begin{exercise}
Prove the Pauli exclusion principle for the quasimomenta in the $M$-particle sector by showing that \eqref{s2:eq:CBA M} with \eqref{s2:eq:A solution} vanishes whenever $p_m=p_n$ for some $m\neq n$.
\end{exercise}

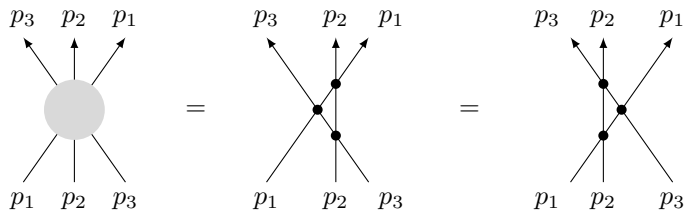
\begin{figure}[h]
	\centering
	\begin{tikzpicture}[>=latex,scale=0.8,font=\small]
	\pgfmathsetmacro\csc{1/sin(55)}
		\begin{scope}[shift={(-4,0)}]
			\draw[->] (-125:1.2*\csc) node[below]{$p_1$} -- (55:1.2*\csc) node[above]{$p_1$};
			\draw[->] (-90:1.2) node[below]{$p_2$} -- (90:1.2) node[above]{$p_2$};
			\draw[->] (-55:1.2*\csc) node[below]{$p_3$} -- (125:1.2*\csc) node[above]{$p_3$};
			
			\fill [gray!30] (0,0) circle (.5cm);
		\end{scope}
		
		\draw (-2,0) node{$=$};
		
		\draw[->] (-125:1.2*\csc) coordinate(a1) node[below]{$p_1$} -- (55:1.2*\csc) coordinate(b3) node[above right]{$p_1$};
		\draw[->] ($(-90:1.2)+(.3,0)$) coordinate(a2) node[below]{$p_2$} -- ($(90:1.2)+(.3,0)$) coordinate(b2) node[above]{$p_2$};
		\draw[->] (-55:1.2*\csc) coordinate(a3) node[below right]{$p_3$} -- (125:1.2*\csc) coordinate(b1) node[above]{$p_3$};
		
		\fill [black] (intersection of a1--b3 and a2--b2) circle (.5ex);
		\fill [black] (intersection of a1--b3 and a3--b1) circle (.5ex);
		\fill [black] (intersection of a2--b2 and a3--b1) circle (.5ex);
		
		\draw (2.5,0) node{$=$};
		
		\begin{scope}[shift={(5,0)}]
			\draw[->] (-125:1.2*\csc) coordinate(A1) node[below left]{$p_1$} -- (55:1.2*\csc) coordinate(B3) node[above]{$p_1$};
			\draw[->] ($(-90:1.2)-(.3,0)$) coordinate(A2) node[below]{$p_2$} -- ($(90:1.2)-(.3,0)$) coordinate(B2) node[above]{$p_2$};
			\draw[->] (-55:1.2*\csc) coordinate(A3) node[below]{$p_3$} -- (125:1.2*\csc) coordinate(B1) node[above left]{$p_3$};
			
			\fill [black] (intersection of A1--B3 and A2--B2) circle (.5ex);
			\fill [black] (intersection of A1--B3 and A3--B1) circle (.5ex);
			\fill [black] (intersection of A2--B2 and A3--B1) circle (.5ex);
		\end{scope}
	\end{tikzpicture}
	\caption{Diagrammatic representation of (the coefficient~$A_\pi$ describing) three-body scattering where magnons 1 and 3 are interchanged. There are two ways in which this can be done, so the second equality expresses a consistency condition --- which is trivially satisfied since the two-body $S$-matrix is $1\times 1$.}
	\label{s2:fg:factorized scattering}
\end{figure}

\subparagraph{Step 3.} Finally periodic boundary conditions must be imposed on the Bethe wave function~\eqref{s2:eq:CBA M} with coefficients~\eqref{s2:eq:A solution}. It turns out that there are $M$ independent \textsc{bae} for the quasimomenta $\Vector{p}$ in the $M$-particle sector:
\begin{equation}\label{s2:eq:xxz BAE M}
	\E^{\choice{-}{}\I p_m L} = \prod_{\substack{n=1 \\ n\neq m}}^M S(p_{n},p_m) \ , \qquad\qquad 1 \leq m \leq M \ .
\end{equation}
Like \eqref{s2:eq:BAE M=2} these have a nice physical interpretation. Since $p_m$ is the quasimomentum of the $m$th magnon, $\E^{\choice{-}{} \I p_m L}$ is the phase acquired by that magnon as it is moved once around the spin chain (in clockwise direction). The \textsc{bae} \eqref{s2:eq:xxz BAE M} say that this phase consists of contributions due to scattering on the other magnons.

To sum up, under the assumption that all states are of the Bethe form, the \textsc{cba} converts the problem of diagonalizing the Hamiltonian in the $M$-particle sector to that of solving the $M$~coupled equations \eqref{s2:eq:xxz BAE M} for $\Vector{p}\in\mathbb{C}^M$. The \textsc{bae} are rather complicated (see also Appendix~\ref{sY}), but that was to be expected: no approximations were used to obtain them; they are \emph{exact}. The \textsc{bae} can be studied numerically as well as analytically. By plugging in the resulting on-shell values for the~$p_m$ one finally obtains the actual eigenstates~$\ket{\Psi_M}$ and their energies.

Since the \textsc{bae} are usually hard to solve one may wonder what we have gained by all of this. Notice that, although the \textsc{bae} become more complicated as $M$ increases, they are not so sensitive to the length of the spin chain, unlike the size $\binom{L}{M}$ of the $M$-particle block of the Hamiltonian. This renders them useful for studying the ground states, elementary excitations and several thermodynamic quantities even as $L\to\infty$ (under certain assumptions on the nature of the solutions in that limit).

\paragraph{Rapidities.} To conclude this subsection we introduce alternative variables that are in some sense more natural than the quasimomenta~$p_m$. Indeed, due to the factorization of magnon scattering the two-body $S$-matrix plays an important role in the analysis of the model. Thus it is convenient to switch to coordinates for which the two-body $S$-matrix takes a simple form. 

For the \textsc{xxx} spin chain, \eqref{s2:eq:xxx S-matrix} suggests defining \emph{rapidities} as
\begin{equation}\label{s2:eq:xxx lambda}
	\lambda_m \coloneqq \choice{-}{} \frac 12 \cot\frac{p_m}{2}
\end{equation}
so that the two-body $S$-matrix simply depends on the rapidity difference (cf.~Section~\ref{s5:fact scatt}),
\begin{equation}\label{s2:eq:xxx S via lambda}
	S(\lambda_n,\lambda_m)|_{\Delta=1} = \frac{\lambda_m-\lambda_n+\I}{\lambda_m-\lambda_n-\I} \ .
\end{equation}
(Depending on the context, rescaled or shifted versions of these are also used in literature.)
\begin{exercise}
Invert \eqref{s2:eq:xxx lambda} and use \eqref{s2:eq:xxz energy magnon} to check that the quasimomentum and energy contribution of a magnon with rapidity~$\lambda$ are
\begin{equation}\label{s2:eq:xxx p and E via lambda} 
	p(\lambda)|_{\Delta=1} = \choice{\I}{\frac 1\I} \log \frac{\lambda+\I/2}{\lambda-\I/2} \ , \qquad \varepsilon_1(\lambda)|_{\Delta=1} = \frac{2}{4 \, \lambda^2 + 1} \ .
\end{equation}
\end{exercise}
We conclude that in terms of rapidities the \textsc{bae}~\eqref{s2:eq:xxz BAE M} for the $M$-particle sector of the \textsc{xxx} spin chain read
\begin{equation}\label{s2:eq:xxx BAE via lambda}
	\biggl( \frac{\lambda_m+\I/2}{\lambda_m-\I/2} \biggr)^{\!L} = \ \prod_{\substack{n=1 \\ n\neq m}}^M \frac{\lambda_m-\lambda_n+\I}{\lambda_m-\lambda_n-\I} \ , \qquad\qquad 1 \leq m \leq M \ .
\end{equation}

In the regime $|\Delta| \leq 1$ the \textsc{xxz} spin chain involves hyperbolic generalizations of the above expressions: parametrizing the anisotropy as $\Delta = \cos\gamma$ we have
\begin{align}
	& p(\lambda) = \choice{\I}{\frac 1\I} \log \frac{\sinh(\lambda+\I\gamma/2)}{\sinh(\lambda-\I\gamma/2)} \ , \label{s2:eq:xxz p via lambda} \\
	& \varepsilon_1(\lambda) = \frac{1}{2} \frac{\sin^2\gamma}{\sinh(\lambda + \I \gamma/2) \, \sinh(\lambda - \I \gamma/2)} \ , \label{s2:eq:xxz E via lambda}
\end{align}
and the \textsc{bae} become
\begin{equation}\label{s2:eq:xxz BAE via lambda}
	\biggl( \frac{\sinh(\lambda_m+\I\gamma/2)}{\sinh(\lambda_m-\I\gamma/2)} \biggr)^{\!L} = \ \prod_{\substack{n=1 \\ n\neq m}}^M \frac{\sinh(\lambda_m-\lambda_n+\I\gamma)}{\sinh(\lambda_m-\lambda_n-\I\gamma)} \ , \qquad\qquad 1 \leq m \leq M \ .
\end{equation}

\begin{exercise}
Compute $p'(\lambda_m)$ and compare the result with $\varepsilon_1(\lambda_m)$. (We will understand where this relation comes from in Section~\ref{s4:ABA}.)
\end{exercise}

\begin{exercise}\label{s2:ex:isotropic limit}
Note that \eqref{s2:eq:xxz p via lambda}--\eqref{s2:eq:xxz BAE via lambda} become trivial in the isotropic limit. Check that the equations for the \textsc{xxx} spin chain can nevertheless be recovered by rescaling the rapidities as $\lambda \mapsto \gamma \lambda$ before taking the limit~$\gamma\to 0$.
\end{exercise}

\section{Transfer matrices and the six-vertex model}\label{s3}

Before turning to the abstract algebraic but very powerful formalism to treat the \textsc{xxz} model once more in Section~\ref{s4} it is insightful to switch to the world of \emph{classical} statistical physics on a two-dimensional lattice. We focus on the six-vertex model, which will turn out to be intimately related to the \textsc{xxz} spin chain. Several concepts that we encounter along the way will play an important role in Section~\ref{s4} too.

\subsection{The six-vertex model}\label{s3:six-vertex model}

Any lattice can be turned into a statistical model by assigning some microscopic degrees of freedom to the lattice and specifying a rule $C\longmapsto w(\beta,C)$ that gives for each microscopic configuration~$C$ a temperature-dependent weight~$w(\tau,C)$, where $\tau\coloneqq k_\text{B}T$ as usual. Often these are Boltzmann weights, $w_\text{B}(\tau,C) \coloneqq \exp\bigl(-E(C)/\tau\bigr)$, and the energy~$E(C)$ of a configuration determines its weight. The main object in statistical physics is the partition function
\begin{equation}\label{s3:eq:partition function}
	Z(\tau) = \sum_C w(\tau,C)
\end{equation}
governing the statistical properties of the model.

A well-known class of examples are \emph{Ising models}, where the microscopic degrees of freedom are \emph{discrete} `spin' variables $\varepsilon_l =\pm1$ at the vertices of the lattice, labelled by~$l$, and the weights are determined by the energy $E(C)$ of the `spin' configurations~$C=\{\varepsilon_l\}_l$. These models describe molecules with highly anisotropic interactions in crystals. A discussion of several exactly solvable Ising models can be found in~\cite{Bax07}.
\begin{exercise}
Check that the (anti)ferromagnetic Ising model is in fact obtained from the \textsc{xxz} spin chain in the anisotropic limits $\Delta\to\pm\infty$.
\end{exercise}

More generally spin chains from Section~\ref{s2:spin chains} also fit within this formalism. The lattice is $\mathbb{Z}_L$, and the local degrees of freedom are the quantum-mechanical spins in the local spaces $V_l\simeq \mathbb{C}^2$. For a(n) (eigen)configuration $C\in\cH =\bigotimes_l V_l$ of spins the energy $E(C)$ is the eigenvalue of the Hamiltonian~$H$, from which we find the corresponding Boltzmann weight. The partition function arises as a trace over $\cH$:
\begin{equation}\label{s3:eq:partition function QM}
	Z(\tau)=\tr \, \exp\bigl(-H(C)/\tau\bigr) \ .
\end{equation}

For any model the goal is to get a grip on the typically \emph{huge} sum in~\eqref{s3:eq:partition function}. Indeed, interesting thermodynamics, like phase transitions, is related to non-smooth behaviour of $Z$ in~$1/\tau$. The weights usually depend smoothly on the temperature, so this can only occur in the limit where the lattice becomes infinite. For some statistical models there are methods that, in principle, allow for an \emph{exact} evaluation of \eqref{s3:eq:partition function}. The six-vertex model is an example of such an exactly solved model, and as we will see it can be tackled with the \textsc{cba} from Section~\ref{s2:method}. Since the thermodynamics will not be relevant for us the dependence on the temperature~$\tau$ will be suppressed from now on.

\paragraph{Vertex models.} The model that we are going to study is an example of a vertex model in two dimensions.$^\#$\footnote{In contrast to the quantum-mechanical spin chains, classical statistical models on a \emph{one}-dimensional lattice are usually not so interesting. Instead, the intriguing systems in statistical physics live in \emph{two} spatial dimensions. Actually, since $1+1=2+0$, this is not very surprising: through the (time-dependent) Schr\"odinger equation spin chains are really $(1+1)$-dimensional, whereas time does not play a role for statistical models in thermal equilibrium. In Section~\ref{s5:fact scatt} we will see that $2d$ is also special in \textsc{qft}.} Consider a finite square lattice consisting of $L$~rows and $K$~columns, with uniform lattice spacing. We impose periodic boundary conditions in both directions, yielding a discrete torus $\mathbb{Z}_L\times\mathbb{Z}_K$. The microscopic degrees of freedom are `spins', as for Ising models, but this time they are not assigned to the vertices of the lattice but rather to the edges, as shown on the left in Figure~\ref{s3:fg:example}.

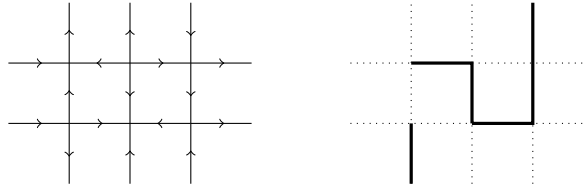
\begin{figure}[h]
	\centering
	\begin{minipage}{4.5cm}\centering
	\begin{tikzpicture}[scale=0.8,decoration={markings,mark=at position 0.55 with {\arrow{>}}}]
		\draw[postaction={decorate}] (0,1) -- (1,1);
		\draw[postaction={decorate}] (0,2) -- (1,2);
		
		\draw[postaction={decorate}] (1,1) -- (1,0);
		\draw[postaction={decorate}] (1,1) -- (1,2);
		\draw[postaction={decorate}] (1,2) -- (1,3);
			
		\draw[postaction={decorate}] (1,1) -- (2,1);
		\draw[postaction={decorate}] (2,2) -- (1,2);
		
		\draw[postaction={decorate}] (2,0) -- (2,1);
		\draw[postaction={decorate}] (2,2) -- (2,1);
		\draw[postaction={decorate}] (2,2) -- (2,3);
		
		\draw[postaction={decorate}] (3,1) -- (2,1);
		\draw[postaction={decorate}] (2,2) -- (3,2);
		
		\draw[postaction={decorate}] (3,0) -- (3,1);
		\draw[postaction={decorate}] (3,2) -- (3,1);
		\draw[postaction={decorate}] (3,3) -- (3,2);
		
		\draw[postaction={decorate}] (3,1) -- (4,1);
		\draw[postaction={decorate}] (3,2) -- (4,2);
	\end{tikzpicture}
	\end{minipage}\begin{minipage}{4.5cm}\centering
	\begin{tikzpicture}[scale=0.8]
		\draw[dotted] (0,1) -- (4,1) (0,2) -- (4,2) (1,0) -- (1,3) (2,0) -- (2,3) (3,0) -- (3,3);
		\draw[very thick] (1,0) -- (1,1) (1,2) -| (2,1) (2,1) -| (3,3);		
	\end{tikzpicture}
	\end{minipage}
	\caption{Example of a configuration of microscopic `spins' $\varepsilon = \pm1$ on the edges in a portion of a two-dimensional lattice. On the left the `spins' are indicated by arrows, with $\uparrow$ and $\rightarrow$ for $\varepsilon=-1$ and $\downarrow$ and $\leftarrow$ for $\varepsilon=+1$; on the right these values are represented by a dotted and thick line, respectively.}
	\label{s3:fg:example}
\end{figure}

For a \emph{vertex model} the weight of a configuration~$C$ on the entire lattice is obtained as the product of \emph{vertex weights}~$w(C,v)$ assigned to the vertices~$v$ of the lattice:
\begin{equation}\label{s3:eq:vertex weights}
	w(C)=\prod_{v\in\mathbb{Z}_L\times\mathbb{Z}_K} \!\! w(C,v) \ .
\end{equation}
For \emph{nearest-neighbour} interactions the vertex weights only depend on the four `spins' surrounding the vertex. If, in addition, the model is \emph{homogeneous} (translationally invariant in both directions) the vertex weights can be denoted as follows: given `spin' variables $\alpha,\beta,\gamma,\delta\in\{\pm 1\}$ on the four edges surrounding~$v$ as shown in Figure~\ref{s3:fg:vertex model} we write $w(C,v)=\w{\alpha}{\beta}{\gamma}{\delta}$. There are sixteen vertex weights~$\w{\alpha}{\beta}{\gamma}{\delta}$ that have to be specified, corresponding to the possible configurations of the `spins' on the surrounding edges.

\begin{figure}[h]
	\centering
	\begin{tikzpicture}[scale=0.8,font=\small]
	\draw (0,1) node[left]{$\beta$} -- (2,1) node[right]{$\delta$};
	\draw (1,0) node[below]{$\alpha$} -- (1,2) node[above]{$\gamma$};
	\end{tikzpicture}
	\caption{A vertex $v \in \mathbb{Z}_L \times \mathbb{Z}_K$ with `spin' variables $\alpha,\beta,\gamma,\delta \in \{\pm 1\}$ on the surrounding edges.}
	\label{s3:fg:vertex model}
\end{figure}
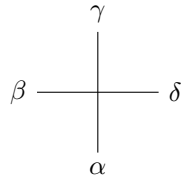

\paragraph{Main example.} The \emph{six-vertex} or \emph{ice-type} model describes hydrogen-bonded crystals. The vertices of the lattice represent larger atoms, oxygen in the case of water ice, and the edges model hydrogen bonds. (The square lattice is a reasonable two-dimensional approximation of the hexagonal structure of ice crystals found in nature, depicted in Figure~\ref{s3:fg:ice}.) The `spin' on the edge encodes at which end of each bond the proton is, say with `spin'~$-1$ corresponding to the right (top) of a horizontal (vertical) edge as in Figure~\ref{s3:fg:example}. For electric neutrality each oxygen atom should have precisely two hydrogen atoms close by. This translates to the \emph{ice rule} $\alpha+\beta=\gamma+\delta$, which leaves us with the six `allowed' vertices shown in Figure~\ref{s3:fg:six vertices}. For example, in Figure~\ref{s3:fg:example} the ice rule is only satisfied for the two vertices on the right.

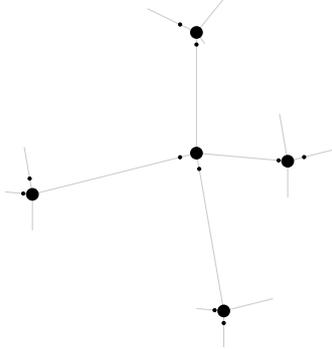
\begin{figure}[h]
	\centering
	\begin{tikzpicture}[x={(-0.866cm,-0.5cm)}, y={({1.5*0.866cm},-0.5cm)}, z={(0cm,1cm)}, scale=.8, font=\small]
		\pgfmathsetmacro\A{sin(120)}
		\pgfmathsetmacro\B{cos(120)}
		\pgfmathsetmacro\C{.1}
		
		\coordinate(v0) at (0,0,0); 
		\coordinate(v1) at ($2*(0,0,1)$);
		\coordinate(v2) at ($2*(-\A,0,\B)$); 
		\coordinate(v3) at ($2*(-\A*\B,-\A^2,\B)$); 
		\coordinate(v4) at ($2*(\A^3-\A*\B^2,-2*\A^2*\B,\B)$); 

		\coordinate(v1bar) at ($2*(0,0,-1)$);
		\coordinate(v2bar) at ($2*(-\A,0,-\B)$);
		\coordinate(v3bar) at ($2*(-\A*\B,-\A^2,-\B)$);
		\coordinate(v4bar) at ($2*(\A^3-\A*\B^2,-2*\A^2*\B,-\B)$);
		
		\draw[gray!40] (v1) -- (v0) -- (v2) -- ($(v2)+3*\C*(v1bar)$);
		\draw[gray!40] ($(v3)+3*\C*(v1bar)$) -- (v3) -- (v0) -- (v4) -- ($(v4)+3*\C*(v1bar)$);
		
		\draw[gray!40] ($(v1)+3*\C*(v2bar)$) -- (v1) -- ($(v1)+3*\C*(v3bar)$) (v1) -- ($(v1)+3*\C*(v4bar)$);
		\draw[gray!40] ($(v2)-3*\C*(v3)$) -- (v2) -- ($(v2)-3*\C*(v4)$);
		\draw[gray!40] ($(v3)-3*\C*(v2)$) -- (v3) -- ($(v3)-3*\C*(v4)$); \draw[gray!40] ($(v4)-3*\C*(v2)$) -- (v4) -- ($(v4)-3*\C*(v3)$);

		\fill [black] (v0) circle (3pt);
		\fill [black] (v1) circle (3pt);
		\fill [black] (v2) circle (3pt);
		\fill [black] (v3) circle (3pt);
		\fill [black] (v4) circle (3pt);
		
		\fill [black] ($\C*(v3)$) circle (1pt);
		\fill [black] ($\C*(v4)$) circle (1pt);
		\fill [black] ($(v1)+\C*(v1bar)$) circle (1pt);
		\fill [black] ($(v1)+\C*(v3bar)$) circle (1pt);
		\fill [black] ($(v2)-\C*(v2)$) circle (1pt);
		\fill [black] ($(v2)-\C*(v3)$) circle (1pt);
		\fill [black] ($(v3)-\C*(v2)$) circle (1pt);
		\fill [black] ($(v3)-\C*(v4)$) circle (1pt);
		\fill [black] ($(v4)-\C*(v1)$) circle (1pt);
		\fill [black] ($(v4)-\C*(v2)$) circle (1pt);
	\end{tikzpicture}
	\caption{In ordinary, `type \textsc{I}$_\text{h}$', ice the oxygens 	constitute a (nearly) perfect hexagonal crystal, where the four nearest neighbours of each oxygen form a tetrahedron centred at that oxygen. We have indicated the hydrogen bonds in grey. The protons near each oxygen satisfy the ice rule.}
	\label{s3:fg:ice}
\end{figure}	
	
In addition \emph{`spin'-reversal} or \emph{reflection} symmetry is often imposed: $\w{\alpha}{\beta}{\gamma}{\delta} = \w{\vphantom{\alpha}\smash{\bar\alpha}}{\vphantom{\beta}\smash{\bar\beta}}{\vphantom{\gamma}\smash{\bar\gamma}}{\vphantom{\delta}\smash{\bar\delta}}$, where the bar denotes negation. This can be interpreted as the absence of an external field, so that there is no preferred direction for the `spins'. This symmetry further cuts the number of independent vertex weights down to three, which are denoted by $a,b,c$ as shown in Figure~\ref{s3:fg:six vertices}. Thinking of these as (local) Boltzmann weights with energies $E_a$, $E_b$ and~$E_c$, we must have $a,b,c\geq 0$ for physical applications. The \emph{ice model} corresponds to the special case $a=b=c$ where each vertex is equally likely.

\begin{figure}[h]
	\centering
	\begin{minipage}{3cm}\begin{tikzpicture}[scale=0.8,font=\small]
		\draw[dotted] (1,0) -- (1,2) (0,1) -- (2,1);
		\draw (1,-.75) node{$\w{-}{-}{-}{-} = a$};
	\end{tikzpicture}\end{minipage}
	\begin{minipage}{3cm}\begin{tikzpicture}[scale=0.8,font=\small]
		\draw[dotted] (0,1) -- (2,1);
		\draw[very thick] (1,0) -- (1,2);
		\draw (1,-.75) node{$\w{+}{-}{+}{-} = b$};
	\end{tikzpicture}\end{minipage}
	\begin{minipage}{3cm}\begin{tikzpicture}[scale=0.8,font=\small]
		\draw[dotted] (0,1) -| (1,2);
		\draw[very thick] (1,0) |- (2,1);
		\draw (1,-.75) node{$\w{+}{-}{-}{+} = c$};
	\end{tikzpicture}\end{minipage}
	\\[\baselineskip]
	\begin{minipage}{3cm}\begin{tikzpicture}[scale=0.8,font=\small]
		\draw[very thick] (1,0) -- (1,2) (0,1) -- (2,1);
		\draw (1,-.75) node{$\w{+}{+}{+}{+} = a$};
	\end{tikzpicture}\end{minipage}	
	\begin{minipage}{3cm}\begin{tikzpicture}[scale=0.8,font=\small]
		\draw[dotted] (1,0) -- (1,2);
		\draw[very thick] (0,1) -- (2,1);
		\draw (1,-.75) node{$\w{-}{+}{-}{+} = b$};
	\end{tikzpicture}\end{minipage}
	\begin{minipage}{3cm}\begin{tikzpicture}[scale=0.8,font=\small]
		\draw[dotted] (1,0) |- (2,1);
		\draw[very thick] (0,1) -| (1,2);
		\draw (1,-.75) node{$\w{-}{+}{+}{-} = c$};
	\end{tikzpicture}\end{minipage}
	\caption{The `allowed' vertex configurations, with nonzero weights~$\w{\alpha}{\beta}{\gamma}{\delta}$, for the six-vertex model. The dotted and thick lines denote `spin' $-1$ and $+1$ on those edges, respectively.}
	\label{s3:fg:six vertices}
\end{figure}
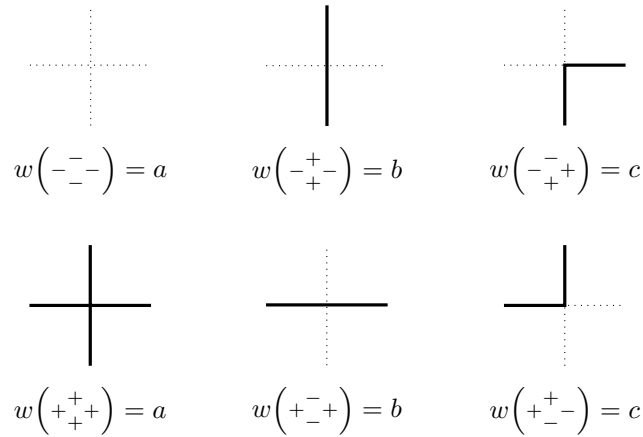

\begin{exercise}
Argue that, because of the periodic boundaries, the two vertices shown on the right in Figure~\ref{s3:fg:six vertices} must occur in equal amounts in any configuration contribution to the partition function. Conclude that one may take them to have equal vertex weights even without imposing `spin'-reversal symmetry.
\end{exercise}

Another example of a vertex model is the \emph{eight-vertex model}. This is a generalization of the six-vertex model where each edge still has two possible `spin' configurations but the ice rule no longer holds, thus allowing for two more vertices with vertex weight~$d$. These are the two vertices in the middle of the configuration from Figure~\ref{s3:fg:example}. We will briefly come back to the eight-vertex model at the end of Section~\ref{s3:results}.

\paragraph{Graphical notation.} To set up the formalism in Sections \ref{s3:method}, \ref{s4:conserved quantities} and~\ref{s4:YBA} we use a graphical notation. It is based on the following four rules:
\begin{enumerate}[label=\roman*)]
\item The basic building blocks are the vertex weights~$\w{\alpha}{\beta}{\gamma}{\delta}$, drawn as in Figure~\ref{s3:fg:vertex model}.
\item Fixed `spins' are depicted using dotted ($\varepsilon=-1$) and thick ($\varepsilon=+1$) lines like in Figure~\ref{s3:fg:six vertices}.
\item There is a summation convention for internal lines: whenever two vertices are connected by an ordinary (i.e.\ not dotted or thick) line there is an implicit sum over the two possible values of the `spins' on the connecting edge. Thus
\begin{equation}\label{s3:eq:convention}
	\tikz[baseline={([yshift=-.5*11pt*0.8]current bounding box.center)},scale=0.8,font=\small]{ 
		\draw (0,1) node[left]{$\beta\vphantom{'}$} -- (3,1) node[right]{$\delta'$};
		\draw (1,0) node[below]{$\alpha\vphantom{'}$} -- (1,2) node[above]{$\gamma$};
		\draw (2,0) node[below]{$\alpha'$} -- (2,2) node[above]{$\gamma'$};}
	\ \ \coloneqq \ \
	\tikz[baseline={([yshift=-.5*11pt*0.8]current bounding box.center)},scale=0.8,font=\small]{
		\draw (0,1) node[left]{$\beta\vphantom{'}$} -- (1,1);
		\draw[dotted] (1,1) -- (2,1);
		\draw (2,1) -- (3,1) node[right]{$\delta'$};
		\draw (1,0) node[below]{$\alpha\vphantom{'}$} -- (1,2) node[above]{$\gamma$};
		\draw (2,0) node[below]{$\alpha'$} -- (2,2) node[above]{$\gamma'$};}
	\ + \ 
	\tikz[baseline={([yshift=-.5*11pt*0.8]current bounding box.center)},scale=0.8,font=\small]{
		\draw (0,1) node[left]{$\beta\vphantom{'}$} -- (1,1);
		\draw[very thick] (1,1) -- (2,1);
		\draw (2,1) -- (3,1) node[right]{$\delta'$};
		\draw (1,0) node[below]{$\alpha\vphantom{'}$} -- (1,2) node[above]{$\gamma$};
		\draw (2,0) node[below]{$\alpha'$} -- (2,2) node[above]{$\gamma'$};}
\end{equation}
represents $\sum_{\varepsilon\in\{\pm1\}} \w{\alpha}{\beta}{\gamma}{\varepsilon} \, \w{\vphantom{\alpha}\smash{\alpha'}}{\varepsilon}{\vphantom{\gamma}\smash{\gamma'}}{\vphantom{\delta}\smash{\delta'}}$.
\item In view of the periodic boundary conditions we also need a way to indicate that opposite edges of a row or column in the lattice are connected. We draw little hooks to depict this periodicity. For example, the partition function specified by~\eqref{s3:eq:partition function} and \eqref{s3:eq:vertex weights} becomes
\begin{equation}\label{s3:eq:6v partition function}
	Z(C) \ \ = \ \ \
	\tikz[baseline={([yshift=-.5*11pt*0.8]current bounding box.center)},scale=0.8,font=\small]{
		\foreach \x in {1,2,4} {
			\draw (0,\x+.2) arc(90:270:.2 and .1) -- (2.6,\x) (3.4,\x) -- (5,\x) arc(-90:90:.2 and .1);
			\foreach \y in {-1,...,1} \draw (3+.2*\y,\x) node{$\cdot\mathstrut$};
			\draw (\x+.2,0) arc(0:-180:.1 and .2) -- (\x,2.6) (\x,3.4) -- (\x,5) arc(180:0:.1 and .2);
			\foreach \y in {-1,...,1} \draw (\x,3+.2*\y) node{$\cdot\mathstrut$};
		}
	}
\end{equation}
This represents a rather complicated expression involving $2KL$ sums as in \eqref{s3:eq:convention}, with the summands being products of $KL$ vertex weights $a$, $b$, $c$ from Figure~\ref{s3:fg:six vertices}.
\end{enumerate}

A nice exercise to get some feeling for this notation and the six-vertex model is the following~\cite[\textsection2.2]{JM99}. Suppose that $c\gg a,b$. The ground state, with maximal $w(C)$, in this regime only involves the two vertices on the right in Figure~\ref{s3:fg:six vertices}. There are two such states: one is shown in Figure~\ref{s3:fg:ground state} and the other is obtained from this by a translation by one unit in the horizontal or vertical direction. To compute the leading correction to the partition function we may therefore restrict our attention to one ground state and calculate $Z/2$ instead.
\begin{exercise}
Use the graphical notation to verify that to ninth order in~$a,b$ the partition function is given by
\begin{equation}
	\tfrac{1}{2} Z = 1 + V\,a^2b^2 + V \, a^2 b^2(a^2+b^2) + \tfrac{1}{2} V(V+1)\,a^4b^4+ V\, a^2b^2(a^4+b^4)+\cdots \ ,
\end{equation}
where $V\coloneqq KL$ and we have set $c=1$ for convenience.
\end{exercise}

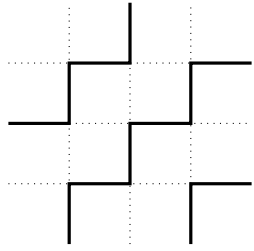
\begin{figure}[h]
	\centering
	\begin{tikzpicture}[scale=0.8,font=\small]
		\draw[dotted] (0,3) -| (1,4);
		\draw[very thick] (0,2) -| (1,3) -| (2,4);
		\draw[dotted] (0,1) -| (1,2) -| (2,3) -| (3,4);
		\draw[very thick] (1,0) |- (2,1) |- (3,2) |- (4,3);
		\draw[dotted] (2,0) |- (3,1) |- (4,2);
		\draw[very thick] (3,0) |- (4,1);
	\end{tikzpicture}
	\caption{A portion of one of the two ground states of the `$F$-model' in the `low-temperature' regime, arising as the special case of the six-vertex model for which $c\gg a,b$.}
	\label{s3:fg:ground state}
\end{figure}

\subsection{The transfer-matrix method and \textsc{cba}}\label{s3:method}

The six-vertex model was solved for three important special cases in 1967 by Lieb~\cite{Lie67a,*Lie67b,*Lie67c} and then in general by Sutherland~\cite{Sut67}. Their solutions combine the transfer-matrix method, allowing one to rewrite the partition function in a quantum-mechanical (linear-algebraic) language, with the \textsc{cba} from Section~\ref{s2:method}. We use tildes to distinguish the six-vertex model's set-up, developed in this subsection, from that of the \textsc{xxz} spin chain.

\paragraph{Transfer-matrix method.} The transfer-matrix method enables one to treat classical statistical systems as if they are quantum mechanical by rewriting the partition function as the trace of some operator to get something like in \eqref{s3:eq:partition function QM}. The basic idea of the method is to divide \eqref{s3:eq:6v partition function} into pieces corresponding to the rows of the lattice.

To define the Hilbert space over which the trace is taken we start locally, like we did for spin chains. Consider a \emph{vertical} edge in the $l$th column of the lattice. To this edge we assign a two-dimensional vector space $\svV_l$ with basis vectors $\ket{\alpha_l}$ labelled by the `spin' $\alpha_l\in\{\pm 1\}$ on that edge:
\begin{equation}\label{s3:eq:Hilb local}
	\svV_l = \mathbb{C} \ket{-}_l \oplus \mathbb{C} \ket{+}_l \ = \ \mathbb{C} \ \, \tikz[baseline={([yshift=-.5*11pt*0.8]current bounding box.center)},scale=0.8,font=\small]{\draw[dotted] (0,0) -- (0,1);} \ \oplus \mathbb{C} \ \, \tikz[baseline={([yshift=-.5*11pt*0.8]current bounding box.center)},scale=0.8,font=\small]{\draw[very thick] (0,0) -- (0,1);} \ \ .
\end{equation}
The Hilbert space associated to a row of vertical edges is constructed as a tensor product of these local vector spaces:
\begin{equation}\label{s3:eq:Hilb global}
	\svcH \coloneqq \bigotimes_{l\in\mathbb{Z}_L} \svV_l = \bigoplus_{\Vector{\alpha}\in\{\pm 1\}^L} \mathbb{C} \, \ket{\Vector{\alpha}} = \bigoplus_{\Vector{\alpha}\in\{\pm 1\}^L} \mathbb{C} \
	\Bigg( \ \smash{
		\tikz[baseline={([yshift=-.5*11pt*0.8+8pt]current bounding box.center)}, 
		scale=0.8,font=\scriptsize,triple/.style={postaction={draw,-,shorten >=.05},double,double distance=4pt,-implies}]{
			\draw (0,0) node[below]{$\alpha_1$} -- (0,1);
			\draw (1,0) node[below]{$\alpha_2$} -- (1,1);
			\draw (3,0) node[below]{$\alpha_L$} -- (3,1);
			\foreach \y in {-1,...,1} \draw (.2*\y+2,.5) node{$\cdot\mathstrut$};
		}
	} \, \Bigg) \ .
\end{equation}

The next step is to define the (row-to-row) \emph{transfer matrix} $t$ on $\svcH$, which counts the contribution to the partition function from the vertices in one row of the lattice: 
\begin{equation}\label{s3:eq:transfer matrix}
	t \ \coloneqq \ \
	\tikz[baseline={([yshift=-.5*11pt*0.8+8pt]current bounding box.center)},
	scale=0.8,font=\scriptsize]{
		\draw (0,1.2) arc(90:270:.2 and .1) -- (2.6,1) (3.4,1) -- (5,1) arc(-90:90:.2 and .1);
		\foreach \x in {1,2,4} \draw (\x,0) -- +(0,2);
		\foreach \y in {-1,...,1} \draw (.2*\y+3,1) node{$\cdot\mathstrut$};
		\draw (1,0) node[below]{$1$} (2,0) node[below]{$2$} (4,0) node[below]{$L$};
	} \ \in \End(\svcH) \ .
\end{equation}
More concretely, $t$ transfers $\ket{\Vector{\alpha}}\in\svcH$ (which we think of as a configuration below some row) to a linear combination of $\ket{\Vector{\gamma}}\in\svcH$ (which we imagine living above that row),
\begin{equation}\label{s3:eq:6v transfer matrix acting}
	t \, \ket{\Vector{\alpha}} \ = \ \sum_{\Vector{\gamma} \in \{\pm 1\}^L} \left(
	\tikz[baseline={([yshift=-.5*11pt*0.8]current bounding box.center)},scale=0.8,font=\small]{
		\draw (0,1.2) arc(90:270:.2 and .1) -- (2.6,1) (3.4,1) -- (5,1) arc(-90:90:.2 and .1);
		\draw (1,0) node[below]{$\alpha_1$} -- (1,2) node[above]{$\gamma_1$};
		\draw (2,0) node[below]{$\alpha_2$} -- (2,2) node[above]{$\gamma_2$};
		\draw (4,0) node[below]{$\alpha_L$} -- (4,2) node[above]{$\gamma_L$};
		\foreach \y in {-1,...,1} \draw (.2*\y+3,1) node{$\cdot\mathstrut$};
	} \right) \ \ket{\Vector{\gamma}} \ .
\end{equation}
The coefficients~$\bra{\Vector{\gamma}}\,t\,\ket{\Vector{\alpha}}$ are polynomials in $a,b,c$ (homogeneous of degree~$L$) that encode how likely each $\ket{\Vector{\gamma}}$ is for a given $\ket{\Vector{\alpha}}$. Taking into account all possible `spin' configurations on the intermediate horizontal edges, each of these polynomials in principle consists of $2^L$ terms; we will soon see that luckily many of these terms are zero.
\begin{exercise}
Use the graphical notation to compute the matrix of $t$ with respect to the basis $\ket{--}$, $\ket{-+}$, $\ket{+-}$, $\ket{++}$ for $L=2$. What do you notice about the form of this matrix? 
\end{exercise}

The use of the transfer matrix comes from the following observation. Powers $t^k$ of the transfer matrix can be depicted as
\begin{equation}
	t^k \ \ = \ \ \
	\tikz[baseline={([yshift=-.5*11pt*0.8+8pt]current bounding box.center)},
	scale=0.8,font=\scriptsize]{
		\foreach \x in {1,2,4} {
			\draw (0,\x+.2) arc(90:270:.2 and .1) -- (2.6,\x) (3.4,\x) -- (5,\x) arc(-90:90:.2 and .1);
			\foreach \y in {-1,...,1} \draw (3+.2*\y,\x) node{$\cdot\mathstrut$};
			\draw (\x,0) -- (\x,2.6) (\x,3.4) -- (\x,5);
			\foreach \y in {-1,...,1} \draw (\x,3+.2*\y) node{$\cdot\mathstrut$};
		}
		\foreach \y in {-1,...,1} \draw (7,3+.2*\y) node{$\cdot\mathstrut$};
		\draw (7,1) node{\small{$1$}} (7,2) node{\small{$2$}} (7,4) node{\small{$k$}};
		\draw (1,0) node[below]{$1$} (2,0) node[below]{$2$} (4,0) node[below]{$L$};
	}
\end{equation}
Since the partition function~\eqref{s3:eq:6v partition function} consists of $K$ such rows, with periodic boundary conditions also imposed in the vertical direction, we have
\begin{equation}\label{s3:eq:partition function as trace}
	Z = \sum_{\Vector{\alpha} \in \{\pm 1\}^L} \bra{\Vector{\alpha}} \, t^K \, \ket{\Vector{\alpha}} = \tr \bigl(t^K\bigr) \ ,
\end{equation}
where the trace is taken over $\svcH$. In this way the computation of the partition function amounts to the diagonalization of the transfer matrix. 

The transfer-matrix method was famously used by Onsager in 1944 to solve the Ising model on a two-dimensional square lattice. Although Onsager allowed for different horizontal and vertical interaction strengths it turned out that the model's behaviour near the critical temperature is universal in the sense that it does not depend on the ratio between the horizontal and vertical coupling constants. This led to the idea of \emph{universality} in statistical physics, and in the following decades more models were found exhibiting the same critical behaviour \cite[\textsection1.3]{Bax07}. Only in 1972, with Baxter's solution of the eight-vertex model, it became clear that there are several different universality classes. (It is generally believed that at criticality every universality class contains an integrable model, which may allow for the exact calculation of the order parameters for that universality class.)

\paragraph{Towards the \textsc{cba}.} Being familiar with the work on the \textsc{xxz} spin chain, Lieb and Sutherland realized that the transfer matrix of the six-vertex model can be diagonalized using the \textsc{cba}. To understand why this is so let us compare the settings of the two models. The first observation is that the two Hilbert spaces, $\cH$ from~\eqref{s2:eq:Hilb global} for the \textsc{xxz} spin chain and $\svcH$ from~\eqref{s3:eq:Hilb global} for the six-vertex model, clearly have the same form. Call the edges of $\mathbb{Z}_L \times \mathbb{Z}_K$ \emph{vacant} when they have `spin'~$-1$ (dotted line) and \emph{occupied} for `spin'~$+1$ (thick line). The two Hilbert spaces are isomorphic via the identification of spin down and up with vacant and occupied vertical edges, respectively:
\begin{equation}\label{s3:eq:isom}
	\cH \ni \ket{\Vector{l}} = \ket{\underset{1}{\uparrow} \cdots \uparrow \underset{l_1}{\downarrow} \uparrow \cdots } \quad \longleftrightarrow \quad \ket{\Vector{\alpha}} = \ket{\underset{1}{-} \cdots - \underset{l_1}{+} - \cdots} \in \svcH \ .
\end{equation}
(Note the difference between the labellings on the two sides: $\Vector{l}$ has just $M$ components $1\leq l_1 < \cdots < l_M \leq L$, while the corresponding $\Vector{\alpha}$ always has $L$ components, $M$ of which are a plus.)
\begin{exercise}\label{s3:ex:M=1}
Compute the matrix elements $\bra{k} t \ket{l}$ from \eqref{s3:eq:6v transfer matrix acting} for $\ket{k},\ket{l}\in\cH_1$, distinguishing the cases $k<l$, $k=l$, $k>l$.
\end{exercise}

The second thing to observe is that the operators that we want to diagonalize, $H_\textsc{xxz}$ from~\eqref{s2:eq:Ham xxz} and $t$ from~\eqref{s3:eq:transfer matrix}, are different. A quick way to see this is by comparing the parameters: the \textsc{xxz} Hamiltonian depends on two parameters~$(J,\Delta)$ while the transfer matrix depends on the values of the three vertex weights~$(a,b,c)$. Despite this difference it may of course be possible to diagonalize the two operators using the same Bethe-ansatz technique.
\begin{exercise}\label{s3:ex:t vs H}
For another difference between $H_\textsc{xxz}$ and $t$ compare the ways in which excitations and occupations in \eqref{s3:eq:isom} are moved by (a single application of) these operators, and compare the number of terms in $H_\textsc{xxz}\ket{\Vector{l}}$ and $t\ket{\Vector{l}}$.
\end{exercise}

Let us take a closer look at the properties of $H_\textsc{xxz}$ and~$t$. In Section~\ref{s2:spin chains} we exploited the fact that the \textsc{xxz} Hamiltonian is
\begin{enumerate}[label=\roman*),noitemsep]
	\item nearest neighbour;
	\item translationally invariant; and
	\item partially isotropic.
\end{enumerate}
Since the six-vertex model satisfies properties (i) and~(ii), so does the transfer matrix. The isomorphism~\eqref{s3:eq:isom} suggests that the six-vertex analogue of the $M$-particle sector~$\cH_M \subseteq \cH$ is the subspace $\svcH_M \subseteq \svcH$ with basis vectors $\ket{\Vector{\alpha}}$ containing precisely $M$ occupancies. By~(iii) the $M$-particle sectors are preserved by the \textsc{xxz} Hamiltonian. Let us now show that the `$M$-occupancy sectors'~$\svcH_M$ are similarly left invariant by the transfer matrix: thinking of the vertical direction of the lattice as (periodic and discrete) time, the occupancy number~$M$ is conserved for the six-vertex model as a consequence of the ice rule. This can be clearly seen by drawing the vertices from Figure~\ref{s3:fg:six vertices} as in Figure~\ref{s3:fg:line conservation}. Indeed, if there are $M$ occupancies below some row then, due to the horizontal periodicity, there must be $M$ occupancies above that row as well. This \emph{line conservation} is the six-vertex analogue of the $U(1)_z$-rotational symmetry of the \textsc{xxz} spin chain discussed in Section~\ref{s2:spin chains}. Therefore the decomposition $\svcH =\oplus_M \svcH_M$ is preserved by the transfer matrix too, so $t$ is block diagonal, and we can diagonalize it in these $M$-occupancy sectors separately.

\begin{figure}[h]
	\centering
	\begin{minipage}{3cm}\begin{tikzpicture}[scale=0.8,font=\small]
		\draw[dotted] (0,1) -- (2,1) (1,0) -- (1,2);
	\end{tikzpicture}\end{minipage}
	\begin{minipage}{3cm}\begin{tikzpicture}[scale=0.8,font=\small]
		\draw[dotted] (0,1) -- (2,1) (1,0) -- (1,2);
		\draw[very thick] (1,0) -- (1,2);
	\end{tikzpicture}\end{minipage}
	\begin{minipage}{3cm}\begin{tikzpicture}[scale=0.8,font=\small]
		\draw[dotted] (0,1) -- (2,1) (1,0) -- (1,2);
		\draw[very thick,rounded corners=8pt] (1,0) |- (2,1);
	\end{tikzpicture}\end{minipage}
	\\[\baselineskip]
	\begin{minipage}{3cm}\begin{tikzpicture}[scale=0.8,font=\small]
		\draw[dotted] (0,1) -- (2,1) (1,0) -- (1,2);
		\draw[very thick,rounded corners=8pt] (0,1) -| (1,2) (1,0) |- (2,1);
	\end{tikzpicture}\end{minipage}	
	\begin{minipage}{3cm}\begin{tikzpicture}[scale=0.8,font=\small]
		\draw[dotted] (0,1) -- (2,1) (1,0) -- (1,2);
		\draw[very thick] (0,1) -- (2,1);
	\end{tikzpicture}\end{minipage}
	\begin{minipage}{3cm}\begin{tikzpicture}[scale=0.8,font=\small]
		\draw[dotted] (0,1) -- (2,1) (1,0) -- (1,2);
		\draw[very thick,rounded corners=8pt] (0,1) -| (1,2);
	\end{tikzpicture}\end{minipage}
	\caption{The six vertex configurations with nonzero weights from Figure~\ref{s3:fg:six vertices} redrawn such that the ice rule for a vertex can be interpreted as line conservation at that vertex.}
	\label{s3:fg:line conservation}
\end{figure}
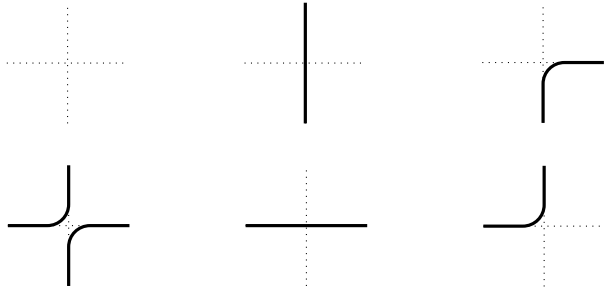

\begin{exercise}\label{s3:ex:at most two terms}
Convince yourself that, by line conservation, for given $\Vector{\alpha}$ and $\Vector{\gamma}$ \emph{at most two} terms in \eqref{s3:eq:6v transfer matrix acting} have nonzero statistical weight, and that there are \emph{two} precisely for the diagonal matrix entries ($\Vector{\gamma}=\Vector{\alpha}$). Compute these diagonal entries for the $M$-occupancy sector. (See Figure~\ref{s3:fg:transfer example} for some examples with $L=3$.)
\end{exercise}

\begin{figure}[h]
	\centering
	\begin{tikzpicture}[scale=0.8,font=\small]
		\draw (0,1.2) arc(90:270:.2 and .1) -- (4,1) arc(-90:90:.2 and .1);
		\draw[very thick] (1,0) -- (1,1); \draw[dotted] (1,1) -- (1,2);
		\draw[very thick] (2,0) -- (2,1.96); \draw (2,2) -- (2,2);
		\draw[dotted] (3,0) -- (3,2); \draw[very thick] (3,1) -- (3,2);
		\draw (5,1) node{=};
		\begin{scope}[shift={(6,0)}]
			\draw[dotted] (0,1.2) arc(90:270:.2 and .1) -- (4,1) arc(-90:90:.2 and .1);
			\foreach \x in {1,...,3} \draw[dotted] (\x,0) -- (\x,2);
			\draw[very thick,rounded corners=8pt] (1,0) -- (1,1) -- (2,1) -- (2,2);
			\draw[very thick,rounded corners=8pt] (2,0) -- (2,1) -- (3,1) -- (3,2);
		\end{scope}
	\end{tikzpicture}\\[\baselineskip]
	\begin{tikzpicture}[scale=0.8,font=\small]
		\draw (0,1.2) arc(90:270:.2 and .1) -- (4,1) arc(-90:90:.2 and .1);
		\draw[very thick] (1,0) -- (1,2); 
		\draw[very thick] (2,0) -- (2,2); 
		\draw[dotted] (3,0) -- (3,2);
		\draw (5,1) node{=};
		\begin{scope}[shift={(6,0)}]
			\draw[dotted] (0,1.2) arc(90:270:.2 and .1) -- (4,1) arc(-90:90:.2 and .1);
			\draw[very thick] (1,0) -- (1,2);
			\draw[very thick] (2,0) -- (2,2);
			\draw[dotted] (3,0) -- (3,2);
		\end{scope}
		\draw (11,1) node{+};
		\begin{scope}[shift={(12,0)}]
			\draw[very thick,rounded corners=8pt] (0,1.2) arc(90:270:.2 and .1); \draw[very thick,rounded corners=8pt] (4,1) arc(-90:90:.2 and .1); \foreach \x in {1,...,3} \draw[dotted] (\x,0) -- (\x,2);
			\draw[very thick,rounded corners=8pt] (0,1) -| (1,2);
			\draw[very thick,rounded corners=8pt] (1,0) |- (2,1) -- (2,2);
			\draw[very thick,rounded corners=8pt] (2,0) |- (4,1);
		\end{scope}
	\end{tikzpicture}
	\caption{Two typical examples of the graphical computation of the matrix elements of the transfer matrix for $L=3$ and $M=2$. The top shows that $\bra{-++}\,t\,\ket{++-}=ac^2$ and the bottom says that $\bra{++-}\,t\,\ket{++-} = a b^2 + a^2 b$.}
	\label{s3:fg:transfer example}
\end{figure}
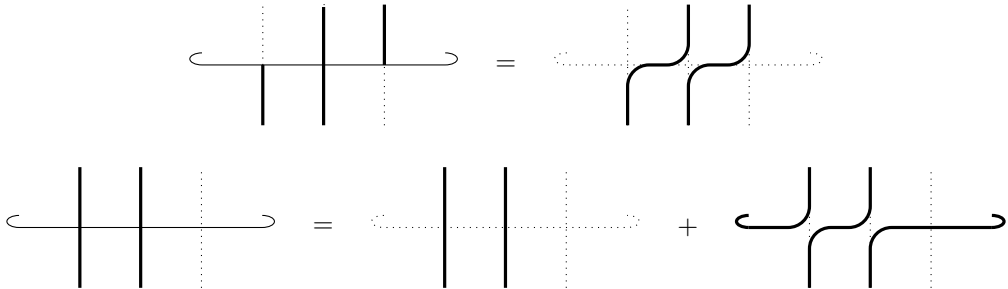

\paragraph{The \textsc{cba}.} Having identified the relevant properties of the transfer matrix we are all set to apply the \textsc{cba} for the diagonalization of the transfer matrix in $\svcH_M$. The basic idea is similar to what we described for \textsc{xxz} spin chain in Section~\ref{s2:method} so we keep it brief; the details of the \textsc{cba} for the six-vertex model can be found in \cite[\textsection8.3--8.4]{Bax07}, see also the end of Appendix~\ref{sM}.$^\#$\footnote{In \cite[\textsection8.2]{Bax07} the transfer matrix is thought of as acting from \emph{above} to \emph{below} a given row, as opposed to our convention in \eqref{s3:eq:6v transfer matrix acting}. The actual \textsc{cba} in \cite[\textsection8.3]{Bax07} agrees with \eqref{s3:eq:CBA M}. One can check that, as a consequence, Baxter's equations and results match those in Section~\ref{s3:results} upon replacing $z_m\leftrightarrow z_m^{-1}$.}

We want to find $\ket{\svPsi_M}\in\cH_M \subseteq \svcH$ solving the eigenvalue problem $t\,\ket{\svPsi_M} = \Lambda_M \, \ket{\svPsi_M}$. The identification $\svcH\simeq \cH$ from \eqref{s3:eq:isom} allows us to use the more convenient `occupancy basis' $\ket{\Vector{l}}$ instead of~$\ket{\Vector{\alpha}}$. Expand $\ket{\svPsi_M}$ in terms of the $M$-occupancy basis $\ket{\Vector{l}}\in \cH_M \simeq \svcH_M$ like in \eqref{s2:eq:M vector via coord basis}. The \textsc{cba} for the coefficients involves parameters~$\svp_m \in\mathbb{C}$ or equivalently $z_m \coloneqq \E^{\choice{-}{} \I \svp_m} \in\mathbb{C}$,
\begin{equation}\label{s3:eq:CBA M}
	\svPsi_{\Vector{z}}(\Vector{l}) = \sum_{\pi\in S_M} \svA_\pi(\Vector{z}) \Vector{z}_\pi {}^{\Vector{l}} \ , \qquad \Vector{z}_\pi {}^{\Vector{l}} \coloneqq \prod_{m=1}^M (z_{\pi(m)})^{l_m} \ , \qquad\qquad l_1 < \cdots < l_M \ .
\end{equation}
This produces eigenvectors for the transfer matrix provided we can solve the equations
\begin{equation}\label{s3:eq:6v eqns}
	\bra{\Vector{l}} \, t \, \ket{\svPsi_M; \Vector{z}} = \Lambda_M(\Vector{z}) \, \svPsi_{\Vector{z}}(\Vector{l}) \ , \qquad\qquad 1\leq l_1 < \cdots < l_M \leq L \ ,
\end{equation}
for the eigenvalues~$\Lambda_M(\Vector{z})$, the coefficients~$\svA_\pi(\Vector{z})$ and the values of the parameters~$\Vector{z}$. The strategy is roughly as before:
\begin{enumerate}
	\item Focus on the \emph{wanted} terms, i.e.\ terms proportional to $\Vector{z}_\pi {}^{\Vector{l}}$ as in the \textsc{cba}, to find $\Lambda_M(\Vector{z})$.
	\item Find $\svA_\pi(\Vector{z})$ so as to cancel certain unwanted (`internal') terms.
	\item Demand that remaining unwanted (`boundary') terms also cancel to get the \textsc{bae} determining the allowed values of~$\Vector{z}$.
\end{enumerate}
However, in accordance with Exercise~\ref{s3:ex:t vs H}, the left-hand side of \eqref{s3:eq:6v eqns} contains many more terms than its \textsc{xxz}-analogue~\eqref{s2:eq:xxz eqns}. Correspondingly the precise formulation of the strategy is a bit more involved than before too, see the end of Appendix~\ref{sM}. Let us proceed to the outcome.

\subsection{Unexpected results}\label{s3:results}

In Section~\ref{s3:method} we have already found some striking similarities between the \textsc{xxz} spin chain and the six-vertex model. Now we will see that the results of the \textsc{cba} uncover a much deeper relation between the two models.

\paragraph{Results for general $M$.} The results of the \textsc{cba} for the transfer matrix are as follows.

\subparagraph{Step 1.} The eigenvalues of the transfer matrix are
\begin{equation}\label{s3:eq:6v eigenvalues}
	\Lambda_M(\Vector{z}) = a^L \prod_{m=1}^M \frac{b\,(a-b\,z_m^{-1})+c^2\,z_m^{-1}}{a\,(a-b\,z_m^{-1})} + b^L \prod_{m=1}^M \frac{a\,(a-b\,z_m^{-1})-c^2}{b\,(a-b\,z_m^{-1})} \ .
\end{equation}
Not being additive, the result is clearly different from \eqref{s2:eq:xxz energy M}, once more showing that $H_\textsc{xxz}$ and~$t$ really are different operators.

\subparagraph{Step 2.} Again the coefficients $\svA_\pi$ in the \textsc{cba} factor into two-occupancy contributions. Interestingly they are very similar to what we found for the \textsc{xxz} spin chain:
\begin{equation}\label{s3:eq:A solution}
	\frac{\svA_\pi(\Vector{z})}{\svA_e(\Vector{z})} \ = \ \prod_{\substack{1 \leq m<m' \leq M \\ \text{s.t.\ } \pi(m)>\pi(m')}} \svS(z_m,z_{m'}) \ , \qquad 
	\svS(z,z') \coloneqq -\frac{1-2\,\svDelta(a,b,c)\, z' + z\,z'}{1-2\,\svDelta(a,b,c)\, z + z\,z'} \ .
\end{equation}
The only difference with \eqref{s2:eq:A solution} and \eqref{s2:eq:xxz S-matrix} is that the anisotropy parameter~$\Delta$ of the \textsc{xxz} model is replaced by a particular combination of the six-vertex weights,
\begin{equation}\label{s3:eq:Delta(a,b,c)}
	\svDelta(a,b,c) \coloneqq \frac{a^2 + b^2 - c^2}{2\,a\,b} \ .
\end{equation}
This striking similarity will play a crucial role in what follows.

\subparagraph{Step 3.} There are $M$ \textsc{bae} for the parameters $\Vector{z}$:
\begin{equation}\label{s3:eq:6v BAE M}
	z_m^L = (-1)^{M-1} \prod_{\substack{n=1 \\ n\neq m}}^M \frac{1-2\,\svDelta(a,b,c)\, z_m + z_m z_n}{1-2\,\svDelta(a,b,c)\,z_n + z_m z_n} \ , \qquad\qquad \ 1 \leq m \leq M \ .
\end{equation}
Up to the dependence on \eqref{s3:eq:Delta(a,b,c)} instead of $\Delta$ these are identical to the \textsc{xxz} \textsc{bae}~\eqref{s2:eq:xxz BAE M}.

\begin{exercise}
Use the results of Exercises \ref{s3:ex:M=1} and~\ref{s3:ex:at most two terms} to check \eqref{s3:eq:6v eigenvalues} and \eqref{s3:eq:6v BAE M} for $M=0,1$. For $M=1$ recognize geometric series in $z$ to sum many terms on the left-hand side of \eqref{s3:eq:6v eqns}.
\end{exercise}

\paragraph{Commuting transfer matrices.} The solution \eqref{s3:eq:A solution} for~$A_\pi(\Vector{z})$, and therefore~$\ket{\svPsi_M; \Vector{z}}$, is remarkable. Firstly, the coefficients~\eqref{s3:eq:A solution} and the \textsc{bae}~\eqref{s3:eq:6v BAE M} are \emph{precisely} the same as those of the \textsc{xxz} spin chain when the function~\eqref{s3:eq:Delta(a,b,c)} has fixed value $\svDelta(a,b,c)=\Delta$ equal to the \textsc{xxz} anisotropy parameter. Indeed, from \eqref{s3:eq:6v BAE M} we see that the allowed values of the parameters $\svp_m = \choice{}{-}\I \log z_m$ match those of the $p_m$. Secondly, the eigenvectors of the transfer matrix \emph{only} depend on the six-vertex weights through the combination~\eqref{s3:eq:Delta(a,b,c)}. Therefore varying the values of $a,b,c$ while keeping \eqref{s3:eq:Delta(a,b,c)} fixed does not change the (Bethe) eigenvectors of $t(a,b,c)$. 

Under the \emph{assumption} that all $2^L$ eigenvectors of $t$ and $H_\textsc{xxz}$ are of the Bethe form (see Appendix~\ref{sY}), these two facts mean that the \textsc{cba} \emph{simultaneously} diagonalizes the \textsc{xxz} Hamiltonian and all transfer matrices with matching value of~\eqref{s3:eq:Delta(a,b,c)}:
\begin{align}
	[\, t(a,b,c) , \, H_\textsc{xxz}(\Delta) \,] \: & = 0 \qquad\qquad \text{if} \quad   \svDelta(a,b,c)= \Delta \ , \label{s3:eq:commuting t H} \\
	[\, t(a,b,c) , \, t(a',b',c') \,] & = 0 \qquad\qquad \text{if} \quad  \svDelta(a,b,c)= \svDelta(a',b',c') \ . \label{s3:eq:commuting t's}
\end{align}
As we will soon see these two observations hold the key to understanding the integrability of the six-vertex model --- and that of the \textsc{xxz} spin chain.

To understand the consequences of \eqref{s3:eq:commuting t H}--\eqref{s3:eq:commuting t's} let us first look at the degrees of freedom contained in the six-vertex model's parameters~$(a,b,c)$. 
\begin{exercise}\label{s3:ex:rescaling}
Check that simultaneous nonzero rescalings $(a,b,c)\longmapsto (r\,a,r\,b,r\,c)$ do not affect the combination~\eqref{s3:eq:Delta(a,b,c)} and only modify the partition function~\eqref{s3:eq:6v partition function} by an overall factor.
\end{exercise}
Motivated by this let us fix the ratio $a:b:c$ and the value of the function \eqref{s3:eq:Delta(a,b,c)}. This leaves a single remaining degree of freedom, known as the \emph{spectral parameter}, which we denote by~$u$. Observe that, through the vertex weights, the transfer matrix also depends on the spectral parameter, $t(u) = t\bigl(a(u),b(u),c(u)\bigr)$. We can now recast \eqref{s3:eq:commuting t H}--\eqref{s3:eq:commuting t's} in the form
\begin{align}
	[\, t(u) , \, H_\textsc{xxz} \,] & = 0 \qquad\qquad \text{for all $u$} \ , \label{s3:eq:commuting t H v2} \\
	[\, t(u) , \, t(v) \,] \; & = 0 \qquad\qquad \text{for all $u,v$} \ . \label{s3:eq:commuting t's v2}
\end{align}
Therefore there is a one-parameter family of six-vertex models, with $a:b:c$ and $\svDelta$ fixed but varying $u$, whose transfer matrices~$t(u)$ commute with $H_\textsc{xxz}$ and each other.
\begin{exercise}
Check that the parametrization
\begin{equation}\label{s3:eq:parametrization}
	a = \rho \sinh(u+\I\gamma) \ , \qquad b = \rho \sinh u \ , \qquad c = \rho \sinh(\I\gamma) = \I \rho \sin\gamma 
\end{equation}
does the job, with $\svDelta(a,b,c)=\cos\gamma$. Determine the values of the \emph{crossing parameter}~$\gamma$ corresponding to the regimes $\Delta<-1$, $-1\leq \Delta \leq 1$ and $\Delta>1$ of the \textsc{xxz} model. (Correspondingly, shifted or rescaled parameters are also commonly used in the literature.)
\end{exercise}

\paragraph{$Z$-invariant models.} How should the commutator \eqref{s3:eq:commuting t's v2} be interpreted from the vertex-model viewpoint? In terms of our graphical notation it consists of two terms of the form
\begin{equation}\label{s3:eq:tu tv}
	t(u)\, t(v) \ \ = \ \ \
		\tikz[baseline={([yshift=-.5*11pt*0.8+8pt]current bounding box.center)},
		scale=0.8,font=\scriptsize]{
			\foreach \x in {1,2} {
				\draw (0,\x+.2) arc(90:270:.2 and .1) -- (2.6,\x) (3.4,\x) -- (5,\x) arc(-90:90:.2 and .1);
				\foreach \y in {-1,...,1} \draw (3+.2*\y,\x) node{$\cdot\mathstrut$};
			}
			\foreach \x in {1,2,4} {
				\draw (\x,0) -- (\x,3);
			}
			\draw (6,1) node{\small{$u$}} (6,2) node{\small{$v$}};
			\draw (1,0) node[below]{$1$} (2,0) node[below]{$2$} (4,0) node[below]{$L$};
		}
\end{equation}
with a separate spectral parameter associated to each row as indicated. This can be viewed as a portion of a vertex model with different values of the spectral parameter --- hence different vertex weights yielding the same value of \eqref{s3:eq:Delta(a,b,c)} --- for each row of horizontal edges in the lattice. By \eqref{s3:eq:commuting t's v2} the partition function~$Z$~\eqref{s3:eq:partition function} of such vertex models are invariant under the exchange of any two rows in the lattice; accordingly those models are called \emph{$Z$-invariant}. Thus the six-vertex model admits inhomogeneous generalizations that can still be tackled using the \textsc{cba}: the translational invariance in the vertical direction is broken in such a way that the model remains exactly solvable.

\paragraph{Analyticity.} Baxter realized that it is extremely useful to allow for complex vertex weights and let $u \in \mathbb{C}$. Indeed, \eqref{s3:eq:parametrization} then gives an \emph{analytic} parametrization of the six-vertex weights which is even \emph{entire} (in fact, this is how that parametrization can be found, see~\cite[\textsection9.7]{Bax07}). The real power of the transfer-matrix method lies in the fact that all functions $u \longmapsto \bra{\Vector{k}} t(u) \ket{\Vector{l}}$ are entire as well, because they are polynomial in~$a,b,c$. This highly constrains the properties of the \textsc{xxz} and six-vertex model and ultimately renders these models exactly solvable. For example, the \textsc{bae} have a natural interpretation in this context:
\begin{exercise}\label{s3:ex:z via a b}
Since the transfer matrix is entire in $u$, so must be its eigenvalues~$\Lambda_z(\Vector{z})$. However, the right-hand side of \eqref{s3:eq:6v eigenvalues} seems to have a simple pole for each $u$ such that $a(u)/b(u) = z_m^{-1}$ for some $1\leq m\leq M$. Use that $\Res(f,z_*) = g(z_*)/h'(z_*)$ when $f(z)=g(z)/h(z)$ with $h(z_*)=0$ but $h'(z_*)\neq 0$ to check that the residues at these poles satisfy
\begin{equation}
	\Res\big(\Lambda_M, z_m = \tfrac{b}{a}\big) \propto a^L \prod_{\substack{n=1 \\ n \neq m}}^M \left(1- \frac{b^2-c^2}{a\,b} z_n^{-1} \right) - b^L \prod_{\substack{n=1 \\ n \neq m}}^M \left(\frac{a^2-c^2}{a\,b} \, z_m^{-1} - z_m^{-1} \, z_n^{-1} \right) \ ,
\end{equation}
and conclude that the poles of $\Lambda_M$ disappear by virtue of the \textsc{bae}~\eqref{s3:eq:6v BAE M} with $z_m = b/a$.
\end{exercise}

In particular, if one would be able to obtain the eigenvalues \eqref{s3:eq:6v eigenvalues} of $t$ in a different way, one might be able to derive \textsc{bae} also for models for which the Bethe-ansatz techniques described in these lecture notes fail, such as the \textsc{xyz} spin chain and eight-vertex model. This is precisely what Baxter's \emph{$TQ$-method} manages to do by constructing another one-parameter family of commuting operators $Q(u)\in\End(\svcH)$ that satisfy certain `$TQ$-relations' determining $\Lambda_M$. For more about the $TQ$-method we refer to~\cite[\textsection9]{Bax07}; see \cite[\textsection4--5]{TF79} and \cite[4.2]{GRS96} for an account in the algebraic language of Section~\ref{s4}.

\paragraph{Quantum integrability.} To get a better understanding of the importance of the relations \eqref{s3:eq:commuting t H v2}--\eqref{s3:eq:commuting t's v2} let us parametrize the vertex weights as in~\eqref{s3:eq:parametrization}. Since the transfer matrix then is a Laurent polynomial in~$\E^u$, it makes sense to take logarithmic derivatives and define operators $H_k$ via the \emph{trace identities}
\begin{equation}\label{s3:eq:trace identities}
	H_k \coloneqq \left.\frac{\D^k}{\D u^k}\right|_{u=u_*} \log t(u) \ \in \  \End(\svcH) \simeq \End(\cH)
\end{equation}
for some value~$u_*$ of the spectral parameter. (In Section~\ref{s4:conserved quantities} we will see that $u_* = 0$ is a convenient choice for our parametrization.) 

The equations \eqref{s3:eq:commuting t H v2}--\eqref{s3:eq:commuting t's v2} then imply
\begin{align}
	[H_k , H_\textsc{xxz}] & = 0 \qquad\qquad \text{for all $k$} \ , \label{s3:eq:commuting Hi H} \\
	[H_j, H_k] \;\, & = 0 \qquad\qquad \text{for all $j,k$} \ . \label{s3:eq:commuting Hi's}
\end{align}
Now we can see the fruits of our labour more clearly. According to \eqref{s3:eq:commuting Hi H} the operators~$H_k$ are conserved quantities: we have found symmetries of the \textsc{xxz} spin chain! Moreover, by \eqref{s3:eq:commuting Hi's} these symmetry operators commute with each other (they are in involution). The presence of such a tower of commuting conserved charges is a very special property; it `proves' that the model is \emph{quantum integrable} in analogy with the notion of Liouville integrability and explains why magnon-scattering is two-body reducible, see Section~\ref{s5:fact scatt}. Thus, from the spin-chain viewpoint, there is a one-parameter family of six-vertex models whose transfer matrices produce symmetries~$H_k$ of the \textsc{xxz} model through the trace identities. (In more mathematical terms the $t(u)$ generate an abelian subalgebra in~$\End(\cH)$ that commutes with $H_\textsc{xxz}$.) What about the six-vertex model itself?

Consider a six-vertex model with vertex weights $(a_0,b_0,c_0)$ and transfer matrix $t_0 \coloneqq t(a_0,b_0,c_0)$. Setting $\Delta_0 \coloneqq \svDelta(a_0,b_0,c_0)$, by the above argument there exists a one-parameter family of six-vertex models with commuting transfer matrices, like in \eqref{s3:eq:commuting Hi's}, such that $t(u_0) = t_0$ for some~$u_0$. From the original six-vertex model's perspective each of these transfer matrices generates a discrete Euclidean `time' evolution with respect to which the $H_k$ are `conserved'. In particular it follows that 
\begin{equation}
	[H_k, t_0] = 0 \qquad\qquad \text{for all $k$} \ , \label{s3:commuting Hn t0}
\end{equation}
so $t_0$ enjoys the \emph{same} symmetries~\eqref{s3:eq:trace identities} as $H_\textsc{xxz}(\Delta_0)$!
\begin{exercise}
Go back to Section~\ref{s2:spin chains} to find a few operators that one may expect (or hope) to find amongst the $H_k\in \End(\cH)$. 
\end{exercise}

\pagebreak[4]

\paragraph{Summary.} The preceding discussion can be schematically summarized as follows:
\begin{equation}
	\tikz[baseline={([yshift=-.5*11pt*0.8]current bounding box.center)},scale=0.8,descr/.style={fill=white,font=\small}]{
		\matrix(m)[matrix of math nodes, text height=1.5ex, text depth=0.25ex, row sep=4em, column sep=4em,ampersand replacement=\&]{
			\& t(u) \& \\ H_\textsc{xxz} \& \text{symmetries} \& t_0 \\};
		\path[dotted,->](m-1-2) edge node[descr]{?} (m-2-1);
		\path[->](m-1-2) edge node[descr]{$u=u_*$} (m-2-2)
			(m-1-2) edge node[descr]{$u=u_0$} (m-2-3)
			(m-2-2) edge (m-2-1)
			(m-2-2) edge (m-2-3)
			(m-2-1) edge [bend left=40] node[descr]{$\svDelta=\Delta$} (m-1-2)
			(m-2-3) edge [bend right=40] node[descr]{$\svDelta=\Delta_0$} (m-1-2);
	}
\end{equation} 
Although we have not uncovered the exact relation between the \textsc{xxz} and six-vertex models yet, Table~\ref{s3:tb:table} contains a dictionary with our findings so far. Let us stress once more that the correspondence between the two models is \emph{not} a bijection: the \textsc{xxz} spin chain with anisotropy~$\Delta$ corresponds to a whole \emph{family} of six-vertex models parametrized by~$u$. The precise connection between the two sides will become clear in Section~\ref{s4:conserved quantities} when we compute the first few $H_k$ contained in the transfer matrix. 

\begin{table}[h]
	\centering
	\begin{tabular}{c|c}
		\textbf{\textsc{xxz} spin chain} \qquad & \textbf{(Family of) six-vertex models} \\ \hline \strut
		lattice $\mathbb{Z}_L$ & row of vertical edges in $\mathbb{Z}_L\times \mathbb{Z}_K$ \\ 
		basis vector $\ket{\Vector{l}} \in \cH$ & configuration $\ket{\Vector{\alpha}} \in \svcH$ on a row \\
		pseudovacuum~$\ket{\Omega}$ & configuration $\ket{--\,\cdots\, -}$ \\ 
		excited spin at site~$l$ & occupancy at edge~$l$ \\ \hline \strut
		translational symmetry & horizontal translational symmetry \\
		partial isotropy & ice rule/line conservation \\
		anisotropy $\Delta$ & $\svDelta(a,b,c) = (a^2 + b^2 - c^2)/2ab$
		\\ \hline \strut
		quasimomentum $p_m$ & parameter $\svp_m = \choice{}{-}\I \log z_m$ 
	\end{tabular}
	\caption{Comparison between the ingredients of the \textsc{xxz} spin chain and those of the corresponding one-parameter family of six-vertex models.}
	\label{s3:tb:table}
\end{table}

\section{The quantum inverse-scattering method}\label{s4}

In the previous sections we studied the \textsc{cba} for the \textsc{xxz} spin chain and the six-vertex model. The details were deferred to Appendix~\ref{sM}: solving the equations \eqref{s2:eq:xxz eqns} for the $M$-particle sector of $H_\textsc{xxz}$ is rather involved as all cases with neighbouring excitations must be taken into account, and the equations \eqref{s3:eq:6v eqns} for the $M$-occupancy sector of the transfer matrix are still harder to obtain. It would be nice if there is an easier way to diagonalize these operators and derive the \textsc{bae}.

Next, through the transfer-matrix method in Sections \ref{s3:method} and~\ref{s3:results} we found a correspondence between the \textsc{xxz} spin chain with anisotropy~$\Delta$ and a one-parameter family of six-vertex models parametrized by the spectral parameter~$u \in \mathbb{C}$. In particular, $H_\textsc{xxz}$ and $t$ are simultaneously diagonalized, and the family of commuting transfer matrices $t(u)$ generates a tower of symmetries~$H_k$ for both sides via the trace identities. However, the precise relation between $t(u)$ and $H_\textsc{xxz}$ is not clear yet. In addition, computing the $H_k$ directly from the transfer matrix is rather cumbersome; in fact the special value~$u_*$ of the spectral parameter still has to be determined.

In this section the \emph{quantum inverse-scattering method}~(\textsc{qism}) is introduced. This algebraic formalism can be used to rederive the results from Sections \ref{s2:results} and~\ref{s3:results} while addressing the above issues, and more:
\begin{itemize}[noitemsep]
\item it provides a convenient way to find an appropriate value $u_*$ and compute the~$H_k$ using the trace identities;
\item it has the all-important commutativity of the $t(u)$ built in;
\item the Fock space of (Bethe) states is constructed via creation and annihilation operators, and the eigenvalues and \textsc{bae} are derived with a single computation for general $M$;
\item it allows one to define several new quantum-integrable models.
\end{itemize}
Since the \textsc{xxz} and six-vertex models are treated simultaneously in the \textsc{qism} we no longer need to distinguish between $\svcH\simeq\cH$, et cetera, and can safely drop the tildes used in Sections \ref{s3:method} and~\ref{s3:results} from now on. We keep $\hbar=J=1$.

\subsection{Conserved quantities from Lax operators}\label{s4:conserved quantities}

We keep using the graphical notation introduced in Section~\ref{s3:six-vertex model}, with one additional rule:
\begin{enumerate}
\item[v)] To keep track of the ordering of the various operators in products we add little arrows to the end of lines. (Thus these arrows are \emph{not} related to the `spins'.) 
\end{enumerate}
We will see an example of this rule soon, e.g.\ in \eqref{s4:eq:transfer matrix}.

\paragraph{Lax operators.} From the six-vertex point of view the choice to associate a vector space to the rows of $\mathbb{Z}_L\times \mathbb{Z}_K$, but not its columns, is somewhat unnatural. To treat the horizontal and vertical edges on a more equal footing we introduce the vector space
\begin{equation}
	V_a \coloneqq \mathbb{C} \, \ket{-}_a \oplus \mathbb{C} \, \ket{+}_a \ = \ \mathbb{C} \ \, 
		\tikz[baseline={([yshift=-.5*11pt*0.8+1pt]current bounding box.center)},scale=0.8,font=\small]{
			\draw[dotted] (0,0) -- (1,0);
		} \,
	\oplus \mathbb{C} \ \,
		\tikz[baseline={([yshift=-.5*11pt*0.8+1pt]current bounding box.center)},scale=0.8,font=\small]{
			\draw[very thick] (0,0) -- (1,0);
		}
	\ .
\end{equation}
spanned by the two possible `spins' on a horizontal edge in the lattice. (The subscript in $V_a$ is not related to the vertex weight~$a(u)$.) Motivated by the spin-chain viewpoint, the `vertical' $V_l$ and $\cH$ from \eqref{s3:eq:Hilb local}--\eqref{s3:eq:Hilb global} are then often called \emph{physical} or \emph{quantum} spaces, while $V_a$ is an \emph{auxiliary} space. It is quite convenient to think of the spectral parameter~$u$ of the transfer matrix as being associated to~$V_a$.

The auxiliary space allows us to introduce `local' (vertex) operators acting at a single vertex of the lattice: the \emph{Lax operator},$^\#$\footnote{This operator is sometimes called the `$R$-matrix', but we follow Faddeev~\cite{Fad95a} and reserve the latter terminology for a closely related operator acting on $V_a\otimes V_b$, see Section~\ref{s4:YBA}.} defined as
\begin{equation}\label{s4:eq:Lax operator}
	L_{al}(u) \ \coloneqq \
		\tikz[baseline={([yshift=-.5*11pt*0.8]current bounding box.center)}, scale=0.8,font=\scriptsize]{
			\draw[->] (0,1) node[left]{$a$} -- (2,1) node[right]{$a$};
			\draw[->] (1,0) node[below]{$l$} -- (1,2) node[above]{$l$};
		} \ \in \ \End(V_a\otimes V_l) \ .
\end{equation}
The subscripts in $L_{al}(u)$ remind us on which vector spaces this operator acts nontrivially, in accordance with the tensor-leg notation from the start of Section~\ref{s2:spin chains}. The labels on the top and right of \eqref{s4:eq:Lax operator} will be omitted in the graphical notation from now on.

Explicitly, writing $\ket{\beta,\alpha}\coloneqq\ket{\beta}\otimes\ket{\alpha}$ for the (pure) vectors in $V_a\otimes V_l$, \eqref{s4:eq:Lax operator} means that
\begin{equation}\label{s4:eq:Lax operator acting}
	L_{al}(u) \, \ket{\beta,\alpha} \ = \sum_{\gamma,\delta \in \{\pm 1\}} \ \left(
	\tikz[baseline={([yshift=-.5*11pt*0.8]current bounding box.center)},scale=0.8,font=\small]{
		\draw (0,1) node[left]{$\beta$} -- (2,1) node[right]{$\delta$};
		\draw (1,0) node[below]{$\alpha$} -- (1,2) node[above]{$\gamma$};
	} \right) \, \ket{\gamma,\delta} \ = \sum_{\gamma,\delta \in \{\pm 1\}} \wspec{u}{\alpha}{\beta}{\gamma}{\delta} \, \ket{\delta,\gamma} \ ,
\end{equation}
where we also indicated the dependence of the vertex weights on~$u\in\mathbb{C}$ on the right-hand side. 

We point out that one has to be a bit careful when reading off the coefficients for the `outgoing' vector $\ket{\gamma,\delta}\in V_a \otimes V_l$ in our graphical notation: unlike for the `incoming' vector $\ket{\beta,\alpha}$, the order of the labels $\delta$ and $\gamma$ is reversed in the coefficients in the middle and on the right-hand side of \eqref{s4:eq:Lax operator acting}, cf.\ the labels in \eqref{s4:eq:Lax operator}. This reversal will come back in \eqref{s4:eq:monodromy acting} below.

\begin{exercise}
Use Figure~\ref{s3:fg:six vertices} to check that the matrix of the Lax operator with respect to the (standard) basis
\begin{equation}\label{s4:eq:tensor basis}
	\ket{--} = \ \tikz[baseline={([yshift=-.5*11pt*0.8+1pt]current bounding box.center)},scale=0.8,font=\small]{\draw[dotted] (0,0) -- (1,0);} \otimes \,  \tikz[baseline={([yshift=-.5*11pt*0.8]current bounding box.center)},scale=0.8,font=\small]{\draw[dotted] (0,0) -- (0,1);} \ , \quad 
	\ket{-+} = \ \tikz[baseline={([yshift=-.5*11pt*0.8+1pt]current bounding box.center)},scale=0.8,font=\small]{\draw[dotted] (0,0) -- (1,0);} \otimes \,  \tikz[baseline={([yshift=-.5*11pt*0.8]current bounding box.center)},scale=0.8,font=\small]{\draw[very thick] (0,0) -- (0,1);} \ , \quad 
	\ket{+-} = \ \tikz[baseline={([yshift=-.5*11pt*0.8+1pt]current bounding box.center)},scale=0.8,font=\small]{\draw[very thick] (0,0) -- (1,0);} \otimes \, \tikz[baseline={([yshift=-.5*11pt*0.8]current bounding box.center)},scale=0.8,font=\small]{\draw[dotted] (0,0) -- (0,1);} \ , \quad 
	\ket{++} = \ \tikz[baseline={([yshift=-.5*11pt*0.8+1pt]current bounding box.center)},scale=0.8,font=\small]{\draw[very thick] (0,0) -- (1,0);} \otimes \, \tikz[baseline={([yshift=-.5*11pt*0.8]current bounding box.center)},scale=0.8,font=\small]{\draw[very thick] (0,0) -- (0,1);}
\end{equation}
of $V_a\otimes V_l$ is given by
\begin{equation}\label{s4:eq:Lax operator matrix}
	L_{al}(u) \ = \ \begin{pmatrix}	a(u) & & & \\ & b(u) & c(u) & \\ & c(u) & b(u) & \\ & & & a(u) \end{pmatrix}_{al} \ .
\end{equation}
\end{exercise}

\begin{exercise}
Argue that the ice rule for the vertex weights is equivalent to the invariance of the Lax operator under simultaneous $U(1)_z$-rotations in $V_a$ and $V_l$:
\begin{equation}\label{s4:eq:ice rule Lax}
	[\, S^z_a + S^z_l , \, L_{al}(u) \,] = 0 \ .
\end{equation}
\end{exercise}

\begin{exercise}
Check that \eqref{s4:eq:Lax operator matrix} can be written in a basis-independent way as
\begin{equation}\label{s4:eq:Lax operator via sigmas}
	L_{al}(u) = \frac{a(u)+b(u)}{2} \, \id_{al} \, + \,  c(u) \, \sigma^+_a S^-_l \, + \, c(u) \, \sigma^-_a S^+_l \, + \, \big(a(u)-b(u)\big) \, \sigma^z_a S^z_l \ , 
\end{equation}
and use this to verify \eqref{s4:eq:ice rule Lax} directly via the $\mathfrak{su}(2)$-relations~\eqref{s2:eq:su(2) ladder}. 
\end{exercise}

\paragraph{Trace identities.} Lax operators can be used as building blocks of other, more involved operators. In particular, the transfer matrix~\eqref{s3:eq:transfer matrix} can be constructed as
\begin{equation}\label{s4:eq:transfer matrix}
	t(u) \ = \ \tr_a \big( L_{aL}(u)\,\cdots\,L_{a2}(u)L_{a1}(u) \big) \ = \
	\tikz[baseline={([yshift=-.5*11pt*0.8+8pt]current bounding box.center)}, 
	scale=0.8,font=\scriptsize]{
		\draw (0,1.2) arc(90:270:.2 and .1) -- (2.6,1); \draw[->] (3.4,1) -- (5,1)arc(-90:90:.2 and .1);
		\draw[->] (1,0) node[below]{$1$} -- (1,2);
		\draw[->] (2,0) node[below]{$2$} -- (2,2);
		\draw[->] (4,0) node[below]{$L$} -- (4,2);
		\foreach \y in {-1,...,1} \draw (3+.2*\y,1) node{$\cdot\mathstrut$};
	} \ \in \End(\cH) \ .
\end{equation}
This way of writing the transfer matrix is very useful for the computation of conserved charges~$H_k$ of the \textsc{xxz} and six-vertex models using the trace identities. 

As we will soon see, it is actually convenient to alter \eqref{s3:eq:trace identities} slightly by setting
\begin{equation}\label{s4:eq:trace identities}
	H_k \coloneqq  \choice{\frac{(-1)^k}{\I}}{\I} \left.\frac{\D^k}{\D u^k}\right|_{u=u_*} \log \frac{t(u)}{a(u)^L} \ \in \ \End(\cH) \ .
\end{equation}
From Section~\ref{s3:results} we know that the trace identities yield symmetries for any choice of the value~$u_*$ of the spectral parameter; let us see whether there are any particularly convenient choices. We parametrize vertex weights by \eqref{s3:eq:parametrization} with $\rho=1$, 
\begin{equation}\label{s4:eq:parametrization}
	a(u) = \sinh(u+\I\gamma) \ , \qquad b(u) = \sinh(u) \ , \qquad c = \sinh(\I\gamma) = \I \sin\gamma \ .
\end{equation}
Observe that $b(u)$ vanishes at $u_*=0$, while $a(u_*)=c$. At this point the Lax operator takes a particularly simple form:
\begin{equation}\label{s4:eq:Lax operator at u*}
	L_{al}(u_*) = c \, \perm_{al} \ , 
\end{equation}
where the \emph{permutation operator} (braiding) is defined as
\begin{equation}\label{s4:eq:perm}
	\perm_{al} \ = \ 
	\tikz[baseline={([yshift=-.5*11pt*0.8+1pt]current bounding box.center)},scale=0.8,font=\scriptsize]{
		\draw[->,rounded corners=8pt] (0,1) node[left]{$a$} -| (1,2) node[above]{$l$}; 
		\draw[->,rounded corners=8pt] (1,0) node[below]{$l$} |- (2,1) node[right]{$a$};
	} 
	\ \coloneqq \frac{1}{2} \id_{al} + \, \sigma^+_a S^-_l  + \sigma^-_a S^+_l + \sigma^z_a S^z_l \ \in \End(V_a \otimes V_l) \ .
\end{equation}
\begin{exercise}
To justify this graphical notation, check that the permutation operator switches vectors: $\perm_{al} \ket{\beta,\alpha} = \ket{\alpha,\beta}$ for (basis) vectors $\ket{\beta,\alpha} = \ket{\beta}\otimes\ket{\alpha} \in V_a \otimes V_l$. Show that $\tr_a \perm_{al} = \id_l$ both algebraically and using the graphical notation.
\end{exercise}

\paragraph{Conserved charges.} Now we are all set to compute the first few symmetries using \eqref{s4:eq:trace identities}. $H_0 = \log t(u_*)$ is easy to find. By \eqref{s4:eq:transfer matrix} and~\eqref{s4:eq:Lax operator at u*} we have $t(u_*) = c^L \tr_a\left( \perm_{aL} \cdots  \perm_{a2} \perm_{a1} \right)$. Focus on the product of permutation operators. It is quite standard to rearrange such products by repeated application of the rule~$\perm_{ak}\perm_{al} = \perm_{al} \perm_{kl}$, see e.g.\ Faddeev~\cite{Fad95a}. For a slightly more slick way to do this we exploit the relations in the permutation group by introducing for each permutation~$\pi \in S_{L+1}$ an operator~$\perm_\pi \in \End(V_a \otimes \cH )$ switching vectors in $V_a \otimes \cH = V_a \otimes \, \bigotimes_l V_l$ in the way specified by~$\pi$. For example, from the transposition $(al)\in S_{L+1}$ we recover $\perm_{(al)} = \perm_{al}$. (The above rule now is clear from $\perm_{ak}\perm_{al} = \perm_{(alk)} = \perm_{al} \perm_{kl}$.) We calculate
\begin{equation}\label{s4:eq:t at u*}
	t(u_*) = c^L \tr_a \left( \perm_{(a12\cdots L)} \right) = c^L \tr_a \left( \perm_{a1} \perm_{(12\cdots L)} \right) = c^L \tr_a(\perm_{a1}) \, \perm_{(12\cdots L)} = c^L \, \choice{U}{U^{-1}} \ ,
\end{equation}
where we have used $\tr_a \perm_{al} = \id_l$ and recognized the cyclic permutation operator~$\perm_{(12\cdots L)}$ as the \choice{}{inverse of the} shift operator $U = \E^{\I P} \in \End(\cH)$. Thus 
\begin{equation}\label{s4:eq:H_0}
	H_0 = \choice{\frac{1}{\I}}{\I} \log \frac{t(u_*)}{a(u_*)^L} = P
\end{equation}
is the momentum operator of the \textsc{xxz} spin chain!

\begin{exercise}\label{s3:ex:perm exercise}
Check that $X_i \perm_{\pi} = \perm_{\pi} X_{\pi(i)}$ for any $X_i \in \End(V_i)\subseteq\End(V_a\otimes\cH)$, $i=a,1,\To,L$, by applying both sides to any arbitrary (basis) vector $\ket{\beta,\alpha_1,\To,\alpha_L}=\ket{\beta}\otimes\ket{\Vector{\alpha}}\in V_a\otimes\cH$. 
\end{exercise}
\begin{exercise}
To find $H_1$ it is convenient to set $\check{L}_{al}(u)\coloneqq \perm_{al} L_{al}(u)$. Apply the result of Exercise~\ref{s3:ex:perm exercise} to $X_{al} = \check{L}_{al}'(u)$ to check that $t'(u_*) = c^{-1} t(u_*) \sum_l \check{L}_{l-1,l}'(u_*)$. Next compare $\check{L}_{l-1,l}'(u_*)$ for \eqref{s4:eq:parametrization} with \eqref{s2:eq:xxz Ham L=2}. Finally use \eqref{s4:eq:trace identities} to show that $H_1$ is nothing but the \textsc{xxz} Hamiltonian in disguise:
\begin{equation}\label{s4:eq:H_1}
	H_1 = \I\, t(u_*)^{-1} t'(u_*) - \I \, L \, \frac{a'(u_*)}{a(u_*)} \id = \frac{2}{\sin\gamma} \, ( H_\textsc{xxz} - E_0 \id ) \ ,
\end{equation}
where $E_0$ is the vacuum energy~\eqref{s2:eq:xxz energy pseudovacuum}. (Note that $H_1$'s eigenvalues are proportional to the $\varepsilon_M$.)
\end{exercise}
The higher~$H_k$ can in principle be computed in a similar fashion. The result is a sum of more and more nonlocal operators: $H_2$ consists of terms involving next-to-nearest neighbour interactions, see~\cite[Ex.\ 2.7]{GRS96}, and so on. 

We conclude that the one-parameter family of commuting transfer matrices~$t(u)$ of the six-vertex model contain important observables of the \textsc{xxz} spin chain. The trace identities in fact provide a concrete relation between the two sides, connecting physical properties of the \textsc{xxz} model, such as the momentum and energy, to the eigenvalues of the transfer matrix, determining the partition function. Together with an equation relating the correlation functions of the two models (see e.g.~\cite[\textsection~2.3]{JM99}), this establishes the precise correspondence between the models from Sections \ref{s2} and \ref{s3}.

\subsection{The Yang-Baxter algebra}\label{s4:YBA}

In Section~\ref{s3:results} we found, for fixed $\Delta$, a family of commuting operators $t(u)\in\End(\cH)$ that generate symmetries~$H_k$ rendering the \textsc{xxz} and six-vertex models quantum integrable. In this subsection we get to the heart of the \textsc{qism} starting from such commuting transfer matrices. It turns out that there is a sufficient `local' condition: the \emph{fundamental commutation relations} (\textsc{fcr}). These relations are closely related to the \emph{Yang-Baxter equation} (\textsc{ybe}) for the \emph{$R$-matrix}.

\paragraph{Monodromy matrix.} Rather than directly imposing horizontal periodicity to obtain the transfer matrix it is useful to define the `global' \emph{monodromy matrix} on $V_a\otimes \cH$ as an ordered product of Lax operators:
\begin{equation}\label{s4:eq:monodromy}
\begin{aligned}
	T_a (u) \ \coloneqq \ \ordprod_{l\in\mathbb{Z}_L } L_{al} (u) \ \coloneqq & \ L_{aL}(u)\,\cdots\,L_{a2}(u)L_{a1}(u) \\ = & \
	\tikz[baseline={([yshift=-.5*11pt*0.8+8pt]current bounding box.center)}, 
	scale=0.8,font=\scriptsize,triple/.style={postaction={draw,-,shorten >=.05},double,double distance=4pt,-implies}]{
		\draw (0,1) node[left]{$a$} -- (2.6,1); \draw[->] (3.4,1) -- (5,1);
		\draw[->] (1,0) node[below]{$1$} -- (1,2);
		\draw[->] (2,0) node[below]{$2$} -- (2,2);
		\draw[->] (4,0) node[below]{$L$} -- (4,2);
		\foreach \y in {-1,...,1} \draw (3+.2*\y,1) node{$\cdot\mathstrut$};
	} \ \in \End(V_a\otimes \cH) \ .
\end{aligned}
\end{equation}
The harpoon in `$\ordprod$' points in the direction of increasing~$l$; notice that this order of the Lax operators is consistent with the order indicated by the little arrows in our graphical notation. (Also note that, as always, subscripts corresponding to the `global' space~$\cH$ are omitted in the tensor-leg notation.) The transfer matrix $t(u) = \tr_a T_a(u)$ arises as a trace of the monodromy matrix over the auxiliary space, corresponding to horizontal periodic boundary conditions, cf.~\eqref{s3:eq:transfer matrix}.

Like in~\eqref{s4:eq:Lax operator acting} one should notice that the order of the labels $\delta$ and $\Vector{\gamma}\,$ in the coefficients of the `outgoing' vectors are reversed in the graphical notation:
\begin{equation}\label{s4:eq:monodromy acting}
	T_a(u) \, \ket{\beta,\Vector{\alpha}} \ = \ \sum_{\delta\in\{\pm1\}} \sum_{\Vector{\gamma} \in \{\pm 1\}^L} \left(
	\tikz[baseline={([yshift=-.5*11pt*0.8]current bounding box.center)},
	scale=0.8,font=\small,triple/.style={postaction={draw,-,shorten >=.05},double,double distance=4pt,-implies}]{
		\draw (0,1) node[left]{$\beta$} -- (2.6,1); \draw (3.4,1) -- (5,1) node[right]{$\delta$};
		\draw (1,0) node[below]{$\alpha_1$} -- (1,2) node[above]{$\gamma_1$};
		\draw (2,0) node[below]{$\alpha_2$} -- (2,2) node[above]{$\gamma_2$};
		\draw (4,0) node[below]{$\alpha_L$} -- (4,2) node[above]{$\gamma_L$};
		\foreach \y in {-1,...,1} \draw (3+.2*\y,1) node{$\cdot\mathstrut$};
	} \right) \ \ket{\delta,\Vector{\gamma}} \ .
\end{equation}

\paragraph{$RTT$-relation.} For graphical computations it is often convenient to depict vectors in the global Hilbert space~\eqref{s3:eq:Hilb global} simply by a single `fat' arrow, which we indicate by a triple line. This leads to following graphical shorthand:
\begin{equation}\label{s4:eq:shorthand}
	T_a(u) \ = \ \ 
	\tikz[baseline={([yshift=-.5*11pt*0.8+8pt]current bounding box.center)}, 
	scale=0.8,font=\scriptsize,triple/.style={postaction={draw,-,shorten >=.05},double,double distance=4pt,-implies}]{
		\draw[triple] (1,0) node[below]{$1\cdots L$} -- (1,2);
		\draw[->] (0,1) node[left]{$a$} -- (2,1);
	}
	\ \ , \qquad\qquad t(u) \ = \ \ 
	\tikz[baseline={([yshift=-.5*11pt*0.8+8pt]current bounding box.center)}, 
		scale=0.8,font=\scriptsize,triple/.style={postaction={draw,-,shorten >=.05},double,double distance=4pt,-implies}]{
		\draw[triple] (1,0) node[below]{$1\cdots L$} -- (1,2);
		\draw[->] (0,1.2) arc(90:270:.2 and .1) -- (2,1) arc(-90:90:.2 and .1);
	} \ \ .
\end{equation}
The commutativity~\eqref{s3:eq:commuting t's v2} of $t(u)$ and~$t(v)$ can then be depicted, cf.~\eqref{s3:eq:tu tv}, as
\begin{equation}\label{s4:eq:tu tv}
	\tikz[baseline={([yshift=-.5*11pt*0.8+8pt]current bounding box.center)}, 
	scale=0.8,font=\scriptsize,triple/.style={postaction={draw,-,shorten >=.05},double,double distance=4pt,-implies}]{
		\draw[triple] (1,0) node[below]{$1\cdots L$} -- (1,3);
		\foreach \y in {1,2} \draw[->] (0,\y+.2) arc(90:270:.2 and .1) -- (2,\y) arc(-90:90:.2 and .1);
		\draw (3,1) node{\small{$v$}} (3,2) node{\small{$u$}};
	}
	\qquad = \qquad
	\tikz[baseline={([yshift=-.5*11pt*0.8+8pt]current bounding box.center)}, 
	scale=0.8,font=\scriptsize,triple/.style={postaction={draw,-,shorten >=.05},double,double distance=4pt,-implies}]{
		\draw[triple] (1,0) node[below]{$1\cdots L$} -- (1,3);
		\foreach \y in {1,2} \draw[->] (0,\y+.2) arc(90:270:.2 and .1) -- (2,\y) arc(-90:90:.2 and .1);
		\draw (3,1) node{\small{$u$}} (3,2) node{\small{$v$}};
	} \ \ .
\end{equation}

Now consider two copies of the auxiliary space, $V_a$ and~$V_b$, with spectral parameters $u$ and~$v$. Of course \eqref{s4:eq:tu tv} holds if $T_a(u)$ and~$T_b(v)$ would commute in~$\End(V_a\otimes V_b \otimes \cH)$,
\begin{equation}\label{s4:eq:Tu Tv?}
	\tikz[baseline={([yshift=-.5*11pt*0.8+8pt]current bounding box.center)},	scale=0.8,font=\scriptsize,triple/.style={postaction={draw,-,shorten >=.05},double,double distance=4pt,-implies}]{
		\draw[triple] (1,0) node[below]{$1\cdots L$} -- (1,3);
		\draw[->] (0,2) node[left]{$a$} -- (2,2);
		\draw[->] (0,1) node[left]{$b$} -- (2,1);
	}
	\qquad \overset{?}{=} \qquad
	\tikz[baseline={([yshift=-.5*11pt*0.8+8pt]current bounding box.center)}, 
	scale=0.8,font=\scriptsize,triple/.style={postaction={draw,-,shorten >=.05},double,double distance=4pt,-implies}]{
		\draw[triple] (1,0) node[below]{$1\cdots L$} -- (1,3);
		\draw[->] (0,2) node[left]{$b$} -- (2,2);
		\draw[->] (0,1) node[left]{$a$} -- (2,1);
	} \ \ ,
\end{equation}
but a direct check (even for $L=1$, in which case $T_a$ is just the Lax operator) shows that this is not true for generic values of $u$ and~$v$.

Exploiting the horizontal periodicity, however, we can write down another equation that is not too restrictive while still guaranteeing~\eqref{s4:eq:tu tv}: the \textsc{fcr}. For this we need an operator $R_{ab}(w)\in\End(V_a\otimes V_b)$, rather unimaginatively called the \emph{$R$-matrix}. This operator is allowed to depend on some spectral parameter~$w$, and should be invertible for generic values of~$w$. We depict the $R$-matrix and its inverse as
\begin{equation}\label{s4:eq:R graphical}
	R_{ab}(w) \ = \
	\tikz[baseline={([yshift=-.5*11pt*0.8]current bounding box.center)},cross line/.style={-,preaction={draw=white,-,line width=6pt}},scale=0.8,font=\scriptsize]{
		\draw[->] (0,0) node[left]{$b$} -- (1,1);
		\draw[cross line,->] (0,1) node[left]{$a$} -- (1,0);
	} 
	\ \ , \qquad 
	R_{ab}^{-1}(w) \ = \
	\tikz[baseline={([yshift=-.5*11pt*0.8]current bounding box.center)},cross line/.style={-,preaction={draw=white,-,line width=6pt}},scale=0.8,font=\scriptsize]{
		\draw[->] (0,1) node[left]{$b$} -- (1,0);
		\draw[cross line,->] (0,0) node[left]{$a$} -- (1,1);
	}
	\ \ ,
\end{equation}
so that the products
\begin{equation}\label{s4:eq:R and R^-1}
	\tikz[baseline={([yshift=-.5*11pt*0.8]current bounding box.center)},cross line/.style={-,preaction={draw=white,-,line width=6pt}},scale=0.8,font=\scriptsize]{
		\draw[->] (0,1) node[left]{$b$} -- (1,0) -- (2,1);
		\draw[cross line,->] (0,0) node[left]{$a$} -- (1,1) -- (2,0);
	} 
	\ = \ 
	\tikz[baseline={([yshift=-.5*11pt*0.8]current bounding box.center)},scale=0.8,font=\scriptsize]{
			\draw[->] (0,0) node[left]{$a$} -- (2,0);
			\draw[->] (0,1) node[left]{$b$} -- (2,1);
	} \coloneqq \id_{ab} \ \ , \qquad
	\tikz[baseline={([yshift=-.5*11pt*0.8]current bounding box.center)},cross line/.style={-,preaction={draw=white,-,line width=6pt}},scale=0.8,font=\scriptsize]{
		\draw[->] (0,0) node[left]{$b$} -- (1,1) -- (2,0);
		\draw[cross line,->] (0,1) node[left]{$a$} -- (1,0) -- (2,1);
	}
	\ = \ 
	\tikz[baseline={([yshift=-.5*11pt*0.8]current bounding box.center)},scale=0.8,font=\scriptsize]{
		\draw[->] (0,1) node[left]{$a$} -- (2,1);
		\draw[->] (0,0) node[left]{$b$} -- (2,0);
	}
	\ \coloneqq \id_{ba}
\end{equation}
are the identity operators, while the square of either operator in~\eqref{s4:eq:R graphical} is not. (Since $u$ and~$v$ are associated to $V_a$ and~$V_b$ respectively, cf.~\eqref{s4:eq:tu tv}, one may hope to be able to express $w$ in terms of $u$ and~$v$; this will indeed be the case.) As for the Lax operator and the monodromy matrix, the order of the `incoming' and `outgoing' labels is reversed for the coefficients in the graphical notation:
\begin{equation}\label{s4:eq:R acting}
	R_{ab}(w) \, \ket{\beta,\beta'} \ = \sum_{\delta,\delta' \in \{\pm 1\}} \ \left(
	\tikz[baseline={([yshift=-.5*11pt*0.8]current bounding box.center)},cross line/.style={-,preaction={draw=white,-,line width=6pt}},scale=0.8,font=\small]{
		\draw[->] (0,0) node[left]{$\beta'$} -- (1,1) node[right]{$\delta'$};
		\draw[cross line,->] (0,1) node[left]{$\beta$} -- (1,0) node[right]{$\delta$};
	} \right) \, \ket{\delta,\delta'} \ . 
\end{equation}
These coefficients have to be determined.

The use of the $R$-matrix comes from two theorems that give `global' and `local' conditions on the $R$-matrix guaranteeing \eqref{s4:eq:tu tv}. We start with the `global' theorem:

\begin{thm}\label{s4:th:RTT thm}
If there exists an $R$-matrix~$R_{ab}(w) \in \End(V_a\otimes V_b)$ which
\begin{enumerate}[label=\roman*),noitemsep]
\item is generically invertible;
\item satisfies the following `global' \textsc{fcr} in~$\End(V_a\otimes V_b \otimes \cH)$:
\begin{equation}\label{s4:eq:RTT}
	R_{ab}(w) \, T_a(u) \, T_b(v) = T_b(v) \, T_a(u) \, R_{ab}(w) \ ,
\end{equation}
\end{enumerate} 
then the transfer matrices $t(u)$ and~$t(v)$ commute.
\end{thm}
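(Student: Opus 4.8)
The plan is to take the partial trace of the global \textsc{fcr}~\eqref{s4:eq:RTT} over the two auxiliary spaces $V_a \otimes V_b$ and exploit cyclicity. Recall from~\eqref{s4:eq:transfer matrix} and~\eqref{s4:eq:monodromy} that $t(u) = \tr_a T_a(u)$, where $\tr_a$ traces out only $V_a$, leaving an operator in $\End(\cH)$. Since $R_{ab}(w)$ is generically invertible by hypothesis~(i), I would first rearrange~\eqref{s4:eq:RTT} into the form
\begin{equation*}
	T_a(u)\, T_b(v) = R_{ab}^{-1}(w)\, T_b(v)\, T_a(u)\, R_{ab}(w) \ ,
\end{equation*}
an identity in $\End(V_a \otimes V_b \otimes \cH)$, and then apply the partial trace $\tr_{ab} = \tr_a \tr_b$ to both sides.

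For the left-hand side I would use that $T_a(u)$ acts trivially on $V_b$ and $T_b(v)$ acts trivially on $V_a$, so the two traces factor through:
\begin{equation*}
	\tr_{ab}\bigl(T_a(u)\, T_b(v)\bigr) = \bigl(\tr_a T_a(u)\bigr)\bigl(\tr_b T_b(v)\bigr) = t(u)\, t(v) \ ,
\end{equation*}
a product taken in $\End(\cH)$. For the right-hand side the key move is cyclicity of the trace: I would transport the factor $R_{ab}(w)$ from the far right around to the far left, where it meets $R_{ab}^{-1}(w)$ and cancels, leaving
\begin{equation*}
	\tr_{ab}\bigl(R_{ab}^{-1}(w)\, T_b(v)\, T_a(u)\, R_{ab}(w)\bigr) = \tr_{ab}\bigl(T_b(v)\, T_a(u)\bigr) = t(v)\, t(u) \ .
\end{equation*}
Equating the two sides yields $t(u)\, t(v) = t(v)\, t(u)$, which is the claim.

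The one step that deserves genuine care --- and the part I would spell out rather than wave at --- is the cyclicity used on the right-hand side. Ordinary cyclicity $\tr(XY)=\tr(YX)$ fails for \emph{partial} traces in general, so the cancellation of $R_{ab}(w)$ is not automatic. What saves the argument is that $R_{ab}(w)$ acts as the identity on the quantum space $\cH$: for an operator of the form $X \otimes \id_{\cH}$ with $X \in \End(V_a \otimes V_b)$ and any $Y \in \End(V_a \otimes V_b \otimes \cH)$, a short index computation shows $\tr_{ab}\bigl((X\otimes\id_\cH)\,Y\bigr) = \tr_{ab}\bigl(Y\,(X\otimes\id_\cH)\bigr)$, because the matrix entries of $X$ are scalars that commute past the $\cH$-valued entries of $Y$. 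Applying this with $X = R_{ab}(w)$ justifies moving the $R$-matrix cyclically through the partial trace while $T_b(v)\,T_a(u)$ stays put. Once this lemma is in hand the rest of the proof is purely formal.
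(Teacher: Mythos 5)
Your proposal is correct and follows essentially the same route as the paper: multiply the $RTT$-relation~\eqref{s4:eq:RTT} by $R_{ab}^{-1}(w)$, take the trace over both auxiliary spaces, and cancel the $R$-matrices by cyclicity to conclude $t(u)\,t(v)=t(v)\,t(u)$. The only difference is that you explicitly justify why cyclicity survives the \emph{partial} trace --- namely that $R_{ab}(w)$ acts as the identity on $\cH$, so its matrix entries are scalars relative to the quantum space --- a genuinely needed point that the paper's proof leaves implicit, so your version is if anything slightly more careful.
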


\begin{proof}
Keeping track of the order in which the operators act, \eqref{s4:eq:RTT} can be depicted as 
\begin{equation}\label{s4:eq:RTT graphical}
	\tikz[baseline={([yshift=-.5*11pt*0.8+8pt]current bounding box.center)},	scale=0.8,font=\scriptsize,triple/.style={postaction={draw,-,shorten >=.05},double,double distance=4pt,-implies},cross line/.style={-,preaction={draw=white,-,line width=6pt}}]{
		\draw[triple] (1,0) node[below]{$1\cdots L$} -- (1,3);
		\draw (0,2) node[left]{$a$} -- (2,2);
		\draw[->] (0,1) node[left]{$b$} -- (2,1) -- (3,2);
		\draw[cross line,->] (2,2) -- (3,1);
	}
	\qquad = \qquad
	\tikz[baseline={([yshift=-.5*11pt*0.8+8pt]current bounding box.center)}, 
	scale=0.8,font=\scriptsize,triple/.style={postaction={draw,-,shorten >=.05},double,double distance=4pt,-implies},cross line/.style={-,preaction={draw=white,-,line width=6pt}}]{
		\draw[triple] (2,0) node[below]{$1\cdots L$} -- (2,3);
		\draw[->] (0,1) node[left]{$b$} -- (1,2) -- (3,2);
		\draw[cross line] (0,2) node[left]{$a$} -- (1,1);
		\draw[->] (1,1) -- (3,1);
	} \ \ .
\end{equation}
By multiplying both sides in~\eqref{s4:eq:RTT graphical} from the left by the inverse of the $R$-matrix we obtain the equivalent relation 
\begin{equation}\label{s4:eq:RTTR}
	\tikz[baseline={([yshift=-.5*11pt*0.8+8pt]current bounding box.center)},	scale=0.8,font=\scriptsize,triple/.style={postaction={draw,-,shorten >=.05},double,double distance=4pt,-implies},cross line/.style={-,preaction={draw=white,-,line width=6pt}}]{
		\draw[triple] (2,0) node[below]{$1\cdots L$} -- (2,3);
		\draw[->] (0,2) node[left]{$b$} -- (1,1) -- (3,1) -- (4,2);
		\draw[cross line] (0,1) node[left]{$a$} -- (1,2);
		\draw (1,2) -- (3,2);
		\draw[cross line,->] (3,2) -- (4,1);
	}
	\qquad = \qquad
	\tikz[baseline={([yshift=-.5*11pt*0.8+8pt]current bounding box.center)}, 
	scale=0.8,font=\scriptsize,triple/.style={postaction={draw,-,shorten >=.05},double,double distance=4pt,-implies}]{
		\draw[triple] (1,0) node[below]{$1\cdots L$} -- (1,3);
		\draw[->] (0,2) node[left]{$b$} -- (2,2);
		\draw[->] (0,1) node[left]{$a$} -- (2,1);
	} \ \ .
\end{equation}
Taking the trace over both auxiliary spaces we conclude that $t(u)$ and~$t(v)$ commute using the cyclic property of the trace.
\end{proof}

Thus we ask for the monodromy matrices $T_a(u)$ and~$T_b(v)$ to commute up to conjugation by the $R$-matrix (in more algebraic terms: we want the $R$-matrix to intertwine the actions of the two monodromies). Equation \eqref{s4:eq:RTT} is often referred to as the \emph{$RTT$-relation} for obvious reasons. It is global in the sense that it involves operators $T_a$ and~$T_b$ acting on the `global' Hilbert space~$\cH$. Since $T_a(u) \, T_b(v)=\big(T_a(u)\otimes\id\big)\big(\id\otimes\, T_b(v)\big)=T_a(u) \otimes T_b(v)$, \eqref{s4:eq:RTT} can rewritten as 
\begin{equation}\label{s4:eq:RTT alt}
	R_{ab}(w) \, \big(T(u) \otimes T(v)\big){}_{ab} = \big(T(v) \otimes T(u)\big){}_{ab} \, R_{ab}(w) \ .
\end{equation}

Before we continue let us simplify our graphical notation a bit. In \eqref{s4:eq:R graphical} we used under- and overcrossings to distinguish between the $R$-matrix and its inverse, both of which were necessary for the above proof. However we have no further need to depict $R^{-1}_{ab}(w)$ from now on. Thus we may drop this inverse from our graphical notation, and update \eqref{s4:eq:R graphical} to the simpler rule
\begin{equation}\label{s4:eq:R graphical new}
	R_{ab}(w) \ = \
		\tikz[baseline={([yshift=-.5*11pt*0.8]current bounding box.center)},scale=0.8,font=\scriptsize]{
			\draw[->] (0,1) node[left]{$a$} -- (1,0);
			\draw[->] (0,0) node[left]{$b$} -- (1,1);
		} \ \ .
\end{equation}
In this notation the $R$-matrix still differs from the Lax operator by the labels of the lines.

The task of finding a suitable $R$-matrix is simplified by the following `local' version of Theorem~\ref{s4:th:RTT thm}.

\begin{thm}\label{s4:th:FCR thm}
If there exists an $R$-matrix~$R_{ab}(w) \in \End(V_a\otimes V_b)$ which
\begin{enumerate}[label=\roman*),noitemsep]
\item is generically invertible;
\item satisfies for any (and hence all) $l\in\mathbb{Z}_L$ the following `local' \textsc{fcr} in~$\End(V_a\otimes V_b \otimes V_l)$:
\begin{equation}\label{s4:eq:FCR}
	R_{ab}(w) \, L_{al}(u) \, L_{bl}(v) = L_{bl}(v) \, L_{al}(u) \, R_{ab}(w) \ ,
\end{equation}
\end{enumerate} 
then the transfer matrices $t(u)$ and~$t(v)$ commute.
\end{thm}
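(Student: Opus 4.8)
The plan is to reduce this `local' statement to the `global' Theorem~\ref{s4:th:RTT thm}, which has already been proved: if I can show that the local \textsc{fcr}~\eqref{s4:eq:FCR} forces the global \textsc{fcr}~\eqref{s4:eq:RTT}, then commutativity of $t(u)$ and $t(v)$ follows at once. So the whole task is to assemble the $RTT$-relation out of many copies of the local one.

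First I would unpack the monodromy matrices as ordered products of Lax operators, $T_a(u) = \ordprod_l L_{al}(u)$ and $T_b(v) = \ordprod_l L_{bl}(v)$, and rewrite the product $T_a(u)\,T_b(v)$. The crucial input here is \emph{ultralocality}: $L_{al}(u)$ and $L_{bk}(v)$ act nontrivially on disjoint physical spaces whenever $k\neq l$, so they commute. Using this repeatedly to interleave the two ordered products, I can regroup them site by site into a single ordered product of pairs,
\begin{equation*}
	T_a(u)\,T_b(v) = \ordprod_{l\in\mathbb{Z}_L} \bigl( L_{al}(u)\,L_{bl}(v) \bigr) \ .
\end{equation*}
The identical rearrangement with the roles of $a$ and $b$ exchanged gives $T_b(v)\,T_a(u) = \ordprod_l \bigl( L_{bl}(v)\,L_{al}(u) \bigr)$.

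Next I would multiply on the left by $R_{ab}(w)$ and push it through the product one factor at a time. Since $R_{ab}(w)$ acts only on $V_a\otimes V_b$, it commutes with anything living purely on the physical sites, so the sole obstruction to moving it rightward past a pair $L_{al}(u)\,L_{bl}(v)$ is exactly what the local \textsc{fcr}~\eqref{s4:eq:FCR} resolves, replacing $R_{ab}(w)\,L_{al}(u)\,L_{bl}(v)$ by $L_{bl}(v)\,L_{al}(u)\,R_{ab}(w)$. Applying this once for each of the $L$ pairs, the $R$-matrix travels from the far left to the far right, turning every pair $L_{al}L_{bl}$ into $L_{bl}L_{al}$ along the way, and I land on $R_{ab}(w)\,T_a(u)\,T_b(v) = T_b(v)\,T_a(u)\,R_{ab}(w)$ — the global \textsc{fcr}. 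Theorem~\ref{s4:th:RTT thm} then completes the proof.

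Graphically this is the familiar `train' argument: the crossing~\eqref{s4:eq:R graphical new} is dragged across the lattice column by column, each step being a single application of~\eqref{s4:eq:FCR}, until it re-emerges on the other side with the two auxiliary lines interchanged. I expect the only real care to lie in the bookkeeping of the ordered products — keeping the site ordering $L,\dots,1$ consistent and invoking ultralocality correctly — rather than in any genuinely hard computation; once the regrouping is in place, the local \textsc{fcr} does the rest mechanically.
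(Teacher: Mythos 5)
Your proof is correct and follows essentially the same route as the paper: the paper's proof is precisely this `train argument' carried out graphically, with $L$ applications of the local \textsc{fcr}~\eqref{s4:eq:FCR} moving the $R$-matrix crossing through the monodromy to establish the $RTT$-relation~\eqref{s4:eq:RTT}, after which Theorem~\ref{s4:th:RTT thm} finishes the job. Your algebraic rendition merely makes explicit the ultralocality (interleaving of $L_{al}(u)$ and $L_{bk}(v)$ for $k\neq l$) that the graphical notation encodes automatically, since lines attached to distinct sites never intersect.
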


\begin{proof}
By Theorem~\ref{s4:th:RTT thm} it suffices to show that the $RTT$-relation~\eqref{s4:eq:RTT} is equivalent to \eqref{s4:eq:FCR}. Since the latter is obtained from \eqref{s4:eq:RTT} in the special case with only one lattice site~($L=1$), which we may always label by~$l$, the $RTT$-relation implies the local \textsc{fcr}. To see that \eqref{s4:eq:FCR} is also sufficient we use the following graphical (yet rigorous!) `train argument'.

Suppose that \eqref{s4:eq:FCR} holds, so that Lax operators acting in the same local physical space, but different auxiliary spaces, commute up to conjugation by the $R$-matrix. Diagrammatically \eqref{s4:eq:FCR} says that the vertical line, corresponding to Lax operators acting on two vertices above each other, can be moved through the crossing representing the $R$-matrix:
\begin{equation}\label{s4:eq:FCR graphical}
	\tikz[baseline={([yshift=-.5*11pt*0.8+8pt]current bounding box.center)},
	scale=0.8,font=\scriptsize,triple/.style={postaction={draw,-,shorten >=.05},double,double distance=4pt,-implies}]{
		\draw[->] (0,1) node[left]{$b$} -- (2,1) -- (3,2);
		\draw[->] (0,2) node[left]{$a$} -- (2,2) -- (3,1);
		\draw[->] (1,0) node[below]{$l$} -- (1,3);
	}
	\qquad = \qquad
	\tikz[baseline={([yshift=-.5*11pt*0.8+8pt]current bounding box.center)},
	scale=0.8,font=\scriptsize,triple/.style={postaction={draw,-,shorten >=.05},double,double distance=4pt,-implies}]{
		\draw[->] (0,1) node[left]{$b$} -- (1,2) -- (3,2);
		\draw[->] (0,2) node[left]{$a$} -- (1,1) -- (3,1);
		\draw[->] (2,0) node[below]{$l$} -- (2,3);
	} \ \ .
\end{equation}
From the definition~\eqref{s4:eq:monodromy} of the monodromy matrix, $L$ applications of \eqref{s4:eq:FCR graphical} do the job: 
\begin{align*}
	\tikz[baseline={([yshift=-.5*11pt*0.8+8pt]current bounding box.center)},
	scale=0.8,font=\scriptsize,triple/.style={postaction={draw,-,shorten >=.05},double,double distance=4pt,-implies}]{
		\draw[->] (0,1) node[left]{$b$} -- (2.6,1) (3.4,1) -- (5,1) -- (6,2);
		\draw[->] (0,2) node[left]{$a$} -- (2.6,2) (3.4,2) -- (5,2) -- (6,1);
		\foreach \y in {-1,...,1} \draw (.2*\y+3,1) node{$\cdot\mathstrut$};
		\foreach \y in {-1,...,1} \draw (.2*\y+3,2) node{$\cdot\mathstrut$};
		\draw[->] (1,0) node[below]{$1$} -- (1,3);
		\draw[->] (2,0) node[below]{$2$} -- (2,3);
		\draw[->] (4,0) node[below]{$L$} -- (4,3);
	}
	\qquad & = \qquad
	\tikz[baseline={([yshift=-.5*11pt*0.8+8pt]current bounding box.center)},
	scale=0.8,font=\scriptsize,triple/.style={postaction={draw,-,shorten >=.05},double,double distance=4pt,-implies}]{
		\draw[->] (0,1) node[left]{$b$} -- (2.6,1) (3.4,1) -- (4,1) -- (5,2) -- (7,2);
		\draw[->] (0,2) node[left]{$a$} -- (2.6,2) (3.4,2) -- (4,2) -- (5,1) -- (7,1);
		\foreach \y in {-1,...,1} \draw (.2*\y+3,1) node{$\cdot\mathstrut$};
		\foreach \y in {-1,...,1} \draw (.2*\y+3,2) node{$\cdot\mathstrut$};
		\draw[->] (1,0) node[below]{$1$} -- (1,3);
		\draw[->] (2,0) node[below]{$2$} -- (2,3);
		\draw[->] (6,0) node[below]{$L$} -- (6,3);
	} \\
	& = \qquad \cdots \\
	& = \qquad
	\tikz[baseline={([yshift=-.5*11pt*0.8+8pt]current bounding box.center)},
	scale=0.8,font=\scriptsize,triple/.style={postaction={draw,-,shorten >=.05},double,double distance=4pt,-implies}]{
		\draw[->] (0,1) node[left]{$b$} -- (1,2) -- (3.6,2) (4.4,2) -- (5,2) -- (6,2);
		\draw[->] (0,2) node[left]{$a$} -- (1,1) -- (3.6,1) (4.4,1) -- (5,1) -- (6,1);
		\foreach \y in {-1,...,1} \draw (.2*\y+4,1) node{$\cdot\mathstrut$};
		\foreach \y in {-1,...,1} \draw (.2*\y+4,2) node{$\cdot\mathstrut$};
		\draw[->] (2,0) node[below]{$1$} -- (2,3);
		\draw[->] (3,0) node[below]{$2$} -- (3,3);
		\draw[->] (5,0) node[below]{$L$} -- (5,3);
	}  \ \ .
\end{align*} 
\\[-2\baselineskip]
\end{proof}

In statistical mechanics \eqref{s4:eq:FCR} is often referred to as the \emph{star-triangle relation}.$^\#$\footnote{We reserve the name `Yang-Baxter equation' for the rather similar and intimately related (but algebraically still more fundamental) equation that we will encounter soon.} If a model admits an $R$-matrix that satisfies conditions~(i)--(ii) one can construct symmetries~$H_k$ from the transfer matrix as described in Section~\ref{s4:conserved quantities}. Such models are moreover solvable via techniques like the algebraic Bethe ansatz, see Section~\ref{s4:ABA}. For this reason such an $R$-matrix is the \emph{integrability datum} allowing one to study quantum-integrable models from an algebraic point of view. Indeed, in practice one often uses this structure to define `quantum integrability'.

\paragraph{$R$-matrix.} The upshot of the preceding discussion is that if we can find an $R$-matrix satisfying the \textsc{fcr}~\eqref{s4:eq:FCR} for a given Lax operator then the transfer matrices constructed from that Lax operator commute. In Appendix~\ref{sR} the \textsc{fcr} of the \textsc{xxz}/six-vertex model, with Lax operator~\eqref{s4:eq:Lax operator}, is solved for a nontrivial $R$-matrix that respects the symmetries of the model: it satisfies both line conservation (the ice rule) and spin-reversal symmetry. The result is of the same form as the Lax operator~\eqref{s4:eq:Lax operator matrix}:
\begin{equation}\label{s4:eq:R-matrix solution}
	R_{ab}(w) \ = \ \begin{pmatrix}	a(w) & & & \\ & b(w) & c(w) & \\ & c(w) & b(w) & \\ & & & a(w) \end{pmatrix}_{ab} \ ,
\end{equation}
where the functions $a$, $b$ and~$c$ were defined in~\eqref{s4:eq:parametrization}. Accordingly, the entries of the $R$-matrix may be interpreted as the vertex weights of another six-vertex model with the same value of~$\Delta=\cos\gamma$ but with different spectral parameter~$w$.

Clearly \eqref{s4:eq:R-matrix solution} is indeed invertible for almost all values of the spectral parameter. In Appendix~\ref{sR} we further show that this $R$-matrix solves the \textsc{fcr} provided the spectral parameters are related by the difference property $w=u-v$. (Of course one can also directly check that in this case \eqref{s4:eq:R-matrix solution} does indeed satisfy the \textsc{fcr}, see e.g.~\cite[\textsection10]{Fad95a}.) Due to the difference property the \textsc{fcr} is often written as
\begin{equation}\label{s4:eq:FCR w=u-v}
	R_{ab}(u-v) \, L_{al}(u) \, L_{bl}(v) = L_{bl}(v) \, L_{al}(u) \, R_{ab}(u-v) \ ,
\end{equation}
and likewise for the $RTT$-relation. This result nicely fits in the graphical notation if we straighten out the lines in~\eqref{s4:eq:FCR graphical}:
\begin{equation}\label{s4:eq:FCR graphical alt}
	\tikz[baseline={([yshift=-.5*11pt*0.8]current bounding box.center)},
	scale=0.8,font=\scriptsize]{
		\pgfmathsetmacro\secA{1/cos(180-150)}
		\pgfmathsetmacro\tanA{tan(180-150)}
		\pgfmathsetmacro\secB{1/cos(180-140)}
		\pgfmathsetmacro\tanB{tan(180-140)}
			\draw[->] (150:1+\secA) coordinate(a1) node[left]{$a$} -- (-30:1) coordinate(b3);
			\draw[->] ($(-90:1+\tanB)-(1,0)$) coordinate(a2) node[below]{$l$} -- ($(90:1+\tanA)-(1,0)$) coordinate(b2);
			\draw[->] (-140:1+\secB) coordinate(a3) node[left]{$b$} -- (40:1) coordinate(b1);
			\coordinate(c1) at (intersection of a1--b3 and a2--b2); \draw[thin,densely dotted] ($(c1)+(150:.4)$) arc(150:270:.4); \draw ($(c1)+(150+60:.65)$) node{\small $u$};
			\coordinate(c2) at (intersection of a1--b3 and a3--b1); \draw[thin,densely dotted] ($(c2)+(150:.4)$) arc(150:220:.4); \draw ($(c2)+(-85:.6)$) node{\small $u-v$};
			\coordinate(c3) at (intersection of a2--b2 and a3--b1); \draw[thin,densely dotted] ($(c3)+(-140:.4)$) arc(-140:-90:.4); \draw ($(c3)+(-90-20:.65)$) node{\small $v$};
	} \qquad = \qquad
	\tikz[baseline={([yshift=-.5*11pt*0.8]current bounding box.center)},
	scale=0.8,font=\scriptsize]{
		\pgfmathsetmacro\secA{1/cos(180-150)}
		\pgfmathsetmacro\tanA{tan(180-150)}
		\pgfmathsetmacro\secB{1/cos(180-140)}
		\pgfmathsetmacro\tanB{tan(180-140)}
			\draw[->] (150:1) coordinate(a1) node[left]{$a$} -- (-30:1+\secA) coordinate(b3);
			\draw[->] ($(-90:1+\tanA)+(1,0)$) coordinate(a2) node[below]{$l$} -- ($(90:1+\tanB)+(1,0)$) coordinate(b2);
			\draw[->] (-140:1) coordinate(a3) node[left]{$b$} -- (40:1+\secB) coordinate(b1);
			\coordinate(c1) at (intersection of a1--b3 and a2--b2); \draw[thin,densely dotted] ($(c1)+(150:.4)$) arc(150:270:.4); \draw ($(c1)+(150+60:.65)$) node{\small $u$};
			\coordinate(c2) at (intersection of a1--b3 and a3--b1); \draw[thin,densely dotted] ($(c2)+(150:.4)$) arc(150:220:.4); \draw ($(c2)+(195:.4)$) node[left]{\small $u-v$};
			\coordinate(c3) at (intersection of a2--b2 and a3--b1); \draw[thin,densely dotted] ($(c3)+(-140:.4)$) arc(-140:-90:.4); \draw ($(c3)+(-90-20:.65)$) node{\small $v$};
	}
\end{equation}
Here the spectral parameters of the operators are included as angles, and the \textsc{fcr} says that any single line may be shifted past the intersection point of the other two lines if it is kept parallel to the original line. (Note that such shifts are in fact used in Section~\ref{s5:fact scatt} to \emph{derive} the \textsc{ybe} for factorized scattering.)

\paragraph{Yang-Baxter algebra.} The following algebraic construction lies at the core of the \textsc{qism} and provides the mathematical setting for the computations in the framework of the algebraic Bethe ansatz, as we will see in Section~\ref{s4:ABA}. Since the monodromy matrix $T_a(u) \in \End(V_a\otimes \cH)$ also acts in auxiliary space we can write it as a matrix on $V_a$,
\begin{equation}\label{s4:eq:monodromy matrix}
	T_a(u) =  \begin{pmatrix} A(u) & B(u) \\ C(u) & D(u) \end{pmatrix}_a \ ,
\end{equation}
whose entries act on the physical space~$\cH$ of the spin chain.

\begin{exercise}
Check that in our graphical notation
\begin{equation}\label{s4:eq:ABCD graphical}
\begin{aligned}
	& A(u) \ = \ \tikz[baseline={([yshift=-.5*11pt*0.8+8pt]current bounding box.center)}, 
	scale=0.8,font=\scriptsize,triple/.style={postaction={draw,-,shorten >=.05},double,double distance=4pt,-implies}]{
		\draw[dotted] (0,1) -- (2,1);
		\draw[triple] (1,0) node[below]{$1\cdots L$} -- (1,2);
	} \ \ , \qquad
	& B(u) \ = \ \tikz[baseline={([yshift=-.5*11pt*0.8+8pt]current bounding box.center)}, 
	scale=0.8,font=\scriptsize,triple/.style={postaction={draw,-,shorten >=.05},double,double distance=4pt,-implies}]{
		\draw[very thick] (0,1) -- (1,1);
		\draw[dotted] (1,1) -- (2,1);
		\draw[triple] (1,0) node[below]{$1\cdots L$} -- (1,2);
	} \ \ , \\
	& C(u) \ = \ \tikz[baseline={([yshift=-.5*11pt*0.8+8pt]current bounding box.center)}, 
	scale=0.8,font=\scriptsize,triple/.style={postaction={draw,-,shorten >=.05},double,double distance=4pt,-implies}]{
		\draw[dotted] (0,1) -- (1,1);
		\draw[very thick] (1,1) -- (2,1);
		\draw[triple] (1,0) node[below]{$1\cdots L$} -- (1,2);
	} \ \ , \qquad
	& D(u) \ = \ \tikz[baseline={([yshift=-.5*11pt*0.8+8pt]current bounding box.center)}, 
	scale=0.8,font=\scriptsize,triple/.style={postaction={draw,-,shorten >=.05},double,double distance=4pt,-implies}]{
		\draw[very thick] (0,1) -- (2,1);
		\draw[triple] (1,0) node[below]{$1\cdots L$} -- (1,2);
	} \ \ .
\end{aligned}
\end{equation}
\end{exercise}

These (one-parameter families of) operators in $\End(\cH)$ generate a (unital, associative) algebra, known as the \emph{Yang-Baxter algebra}~(\textsc{yba}), whose commutation rules are given by the $RTT$-relation~\eqref{s4:eq:RTT} with $w=u-v$. The latter encodes $2^2 \times 2^2=16$ relations in $\End(V_a\otimes V_b)$ for the generators~\eqref{s4:eq:ABCD graphical}. The explicit form of these relations can be found from~\eqref{s4:eq:RTT alt} by straightforward matrix multiplication, see \cite[\textsection4]{Fad95a}. Instead one can also use the graphical form~\eqref{s4:eq:RTT graphical} of the $RTT$-relation to find these relations. For example, the $(1,4)$-entry of \eqref{s4:eq:RTT alt} corresponds to
\begin{equation}\label{s4:eq:BB graphical} 
	\tikz[baseline={([yshift=-.5*11pt*0.8+8pt]current bounding box.center)}, 
		scale=0.8,font=\scriptsize,triple/.style={postaction={draw,-,shorten >=.05},double,double distance=4pt,-implies}]{
		\draw[dotted] (1,1) -- (2,1) -- (3,2) (1,2) -- (2,2) -- (3,1);
		\draw[very thick] (0,1) -- (1,1) (0,2) -- (1,2);
		\draw[triple] (1,0) node[below]{$1\cdots L$} -- (1,3);
	}
	\quad = \quad
	\tikz[baseline={([yshift=-.5*11pt*0.8+8pt]current bounding box.center)}, 
		scale=0.8,font=\scriptsize,triple/.style={postaction={draw,-,shorten >=.05},double,double distance=4pt,-implies}]{
		\draw[dotted] (2,1) -- (3,1) (2,2) -- (3,2);
		\draw[very thick] (0,1) -- (1,2) -- (2,2) (0,2) -- (1,1) -- (2,1);
		\draw[triple] (2,0) node[below]{$1\cdots L$} -- (2,3);
	} \quad , 
\end{equation}
implying that
\begin{equation}\label{s4:eq:BB}
	B(u) \, B(v) = B(v) \, B(u) \ . 
\end{equation}
Likewise, paying attention to the different ordering of `incoming' and `outgoing' auxiliary vectors, the $(1,3)$- and $(3,4)$-entries of \eqref{s4:eq:RTT alt} correspond to
\begin{align}
	& \tikz[baseline={([yshift=-.5*11pt*0.8+8pt]current bounding box.center)}, scale=0.8,font=\scriptsize,triple/.style={postaction={draw,-,shorten >=.05},double,double distance=4pt,-implies}]{
		\draw[dotted] (0,1) -- (2,1) -- (3,2) (1,2) -- (2,2) -- (3,1);
		\draw[very thick] (0,2) -- (1,2);
		\draw[triple] (1,0) node[below]{$1\cdots L$} -- (1,3);
	}
	\ \ = \ \ 
	\tikz[baseline={([yshift=-.5*11pt*0.8+8pt]current bounding box.center)}, scale=0.8,font=\scriptsize,triple/.style={postaction={draw,-,shorten >=.05},double,double distance=4pt,-implies}]{
		\draw[dotted] (0,1) -- (.5,1.5) (2,1) -- (3,1) (2,2) -- (3,2);
		\draw[very thick] (0,2) -- (.5,1.5);
		\draw (2,1) -- (1,1) -- (.5,1.5) -- (1,2) -- (2,2);
		\draw[triple] (2,0) node[below]{$1\cdots L$} -- (2,3);
	}
	\ \ = \ \ 
	\tikz[baseline={([yshift=-.5*11pt*0.8+8pt]current bounding box.center)}, scale=0.8,font=\scriptsize,triple/.style={postaction={draw,-,shorten >=.05},double,double distance=4pt,-implies}]{
		\draw[dotted] (0,1) -- (1,2) -- (3,2) (2,1) -- (3,1);
		\draw[very thick] (0,2) -- (1,1) -- (2,1);
		\draw[triple] (2,0) node[below]{$1\cdots L$} -- (2,3);
	}
	\ \ + \ \
	\tikz[baseline={([yshift=-.5*11pt*0.8+8pt]current bounding box.center)}, scale=0.8,font=\scriptsize,triple/.style={postaction={draw,-,shorten >=.05},double,double distance=4pt,-implies}]{
		\draw[dotted] (0,1) -- (.5,1.5) -- (1,1) -- (3,1) (2,2) -- (3,2);
		\draw[very thick] (0,2) -- (.5,1.5) -- (1,2) -- (2,2);
		\draw[triple] (2,0) node[below]{$1\cdots L$} -- (2,3);
	}
	\quad , \label{s4:eq:AB graphical} \\
	& \tikz[baseline={([yshift=-.5*11pt*0.8+8pt]current bounding box.center)}, scale=0.8,font=\scriptsize,triple/.style={postaction={draw,-,shorten >=.05},double,double distance=4pt,-implies}]{
		\draw[dotted] (1,1) -- (2,1) -- (3,2);
		\draw[very thick] (0,1) -- (1,1) (0,2) -- (2,2)-- (3,1);
		\draw[triple] (1,0) node[below]{$1\cdots L$} -- (1,3);
	}
	\ \ + \ \
	\tikz[baseline={([yshift=-.5*11pt*0.8+8pt]current bounding box.center)}, scale=0.8,font=\scriptsize,triple/.style={postaction={draw,-,shorten >=.05},double,double distance=4pt,-implies}]{
		\draw[dotted] (1,2) -- (2,2) -- (2.5,1.5) -- (3,2);
		\draw[very thick] (0,1) -- (2,1) -- (2.5,1.5) -- (3,1) (0,2) -- (1,2);
		\draw[triple] (1,0) node[below]{$1\cdots L$} -- (1,3);
	}
	\ \ = \ \ 
	\tikz[baseline={([yshift=-.5*11pt*0.8+8pt]current bounding box.center)}, scale=0.8,font=\scriptsize,triple/.style={postaction={draw,-,shorten >=.05},double,double distance=4pt,-implies}]{
		\draw[dotted] (2.5,1.5) -- (3,2);
		\draw[very thick] (0,1) -- (1,1) (0,2) -- (1,2) (2.5,1.5) -- (3,1);
		\draw (1,1) -- (2,1) -- (2.5,1.5) -- (2,2) -- (1,2);
		\draw[triple] (1,0) node[below]{$1\cdots L$} -- (1,3);
	}
	\ \ = \ \ 
	\tikz[baseline={([yshift=-.5*11pt*0.8+8pt]current bounding box.center)}, scale=0.8,font=\scriptsize,triple/.style={postaction={draw,-,shorten >=.05},double,double distance=4pt,-implies}]{
		\draw[dotted] (2,2) -- (3,2);
		\draw[very thick] (0,1) -- (1,2) -- (2,2) (0,2) -- (1,1) -- (3,1);
		\draw[triple] (2,0) node[below]{$1\cdots L$} -- (2,3);
	}
	\ \ . \label{s4:eq:DB graphical} 
\end{align}
Upon interchanging $u\leftrightarrow v$ in \eqref{s4:eq:AB graphical} these yield more complicated commutation rules:
\begin{align}
	& A(u) \, B(v) = \frac{a(v-u)}{b(v-u)} \, B(v) \, A(u) - \frac{c(v-u)}{b(v-u)} \, B(u) \, A(v) \ , \label{s4:eq:AB} \\
	& D(u) \, B(v) = \frac{a(u-v)}{b(u-v)} \, B(v) \, D(u) - \frac{c(u-v)}{b(u-v)} \, B(u) \, D(v) \ . \label{s4:eq:DB}
\end{align}
In both of these relations the first term on the right-hand side just contains the commuted operators (up to a factor), whereas in the second term the two operators have in addition interchanged their spectral parameters. 

The physical use of the \textsc{yba} stems from the \emph{quantum inverse-scattering problem}, which asks whether it is possible to reconstruct arbitrary operators in $\End(V_l)$, and thus those in $\End(\cH)$, from $T_a(u)$. The solution to this problem was found for many models, including the \textsc{xxz}/six-vertex model, in \cite{MT00,*KMT00}. The conclusion is that $A(u),\To,D(u)$ generate all of $\End(\cH)$. For example, the transfer matrix is an element of the Yang-Baxter algebra:
\begin{equation}\label{s4:eq:t=A+D}
	t(u) \ = \
	\tikz[baseline={([yshift=-.5*11pt*0.8+8pt]current bounding box.center)}, 
	scale=0.8,font=\scriptsize,triple/.style={postaction={draw,-,shorten >=.05},double,double distance=4pt,-implies}]{
		\draw[triple] (1,0) node[below]{$1\cdots L$} -- (1,2);
		\draw[->] (0,1.2) arc(90:270:.2 and .1) -- (2,1) arc(-90:90:.2 and .1);
	}
	\ = \ 
	\tikz[baseline={([yshift=-.5*11pt*0.8+8pt]current bounding box.center)}, 
	scale=0.8,font=\scriptsize,triple/.style={postaction={draw,-,shorten >=.05},double,double distance=4pt,-implies}]{
		\draw[dotted] (0,1) -- (2,1);
		\draw[triple] (1,0) node[below]{$1\cdots L$} -- (1,2);
	} 
	\ + \ 
	\tikz[baseline={([yshift=-.5*11pt*0.8+8pt]current bounding box.center)}, 
	scale=0.8,font=\scriptsize,triple/.style={postaction={draw,-,shorten >=.05},double,double distance=4pt,-implies}]{
		\draw[very thick] (0,1) -- (2,1);
		\draw[triple] (1,0) node[below]{$1\cdots L$} -- (1,2);
	}
	\ = \ A(u) + D(u) \ .
\end{equation}
\begin{exercise}
Find relations like \eqref{s4:eq:BB graphical} for $A$ and for $D$. Next use \eqref{s4:eq:RTT graphical} to compute $[A(u),D(v)]$. Check in this way that $t(u)$ and $t(v)$ do indeed commute.
\end{exercise}

\paragraph{Yang-Baxter equation.} The entries of the $R$-matrix play the role of structure constants for the Yang-Baxter algebra. In the present context there also is an analogue of the Jacobi identity for these `structure constants'. Indeed, consider one more copy of the auxiliary space, $V_c$, with associated spectral parameter~$w$. The $RTT$-relation can be used to reverse the order in the product $T_a(u) \, T_b(v) \, T_c(w)$ to get $T_c(w) \, T_b(v) \, T_a(u)$ up to conjugation by products of $R$-matrices. Now there are two ways in which this can be done, corresponding to the two decompositions $(13)=(12)(23)(12)=(23)(12)(23)$ of the permutation switching the first and third monodromy matrix. To avoid $2^3 \times 2^3 = 64$ additional relations for the Yang-Baxter algebra, the two results must coincide. This is true when the $R$-matrix satisfies the famous \emph{Yang-Baxter equation} (\textsc{ybe}) in~$\End(V_a\otimes V_b\otimes V_c)$:
\begin{equation}\label{s4:eq:YBE}
	R_{ab}(u-v)\,R_{ac}(u-w)\,R_{bc}(v-w)=R_{bc}(v-w)\,R_{ac}(u-w)\,R_{ab}(u-v) \ .
\end{equation}
Like the Jacobi equation, this relation is cubic in the `structure constants'. One can check that the solution~\eqref{s4:eq:R-matrix solution} of the six-vertex \textsc{fcr} does indeed satisfy the \textsc{ybe}. In our graphical notation \eqref{s4:eq:YBE} becomes
\begin{equation}\label{s4:eq:YBE graphical}
	\tikz[baseline={([yshift=-.5*11pt*0.8]current bounding box.center)},scale=0.8,font=\scriptsize]{
		\draw[->] (0,2) node[left]{$c$} -- (2,0) -- (3,0);
		\draw[->] (0,1) node[left]{$b$} -- (1,2) -- (2,2) -- (3,1);
		\draw[->] (0,0) node[left]{$a$} -- (1,0) -- (3,2);
	} \qquad = \qquad 
	\tikz[baseline={([yshift=-.5*11pt*0.8]current bounding box.center)},scale=0.8,font=\scriptsize]{
		\draw[->] (0,2) node[left]{$c$} -- (1,2) -- (3,0);
		\draw[->] (0,1) node[left]{$b$} -- (1,0) -- (2,0) -- (3,1);
		\draw[->] (0,0) node[left]{$a$} -- (2,2) -- (3,2);
	} \quad .
\end{equation}
Readers familiar with the braid group may recognize this as an analogue of the braid relation but involving spectral parameters. Of course the lines may again be straightened out like we did in \eqref{s4:eq:FCR graphical alt}.

\paragraph{Summary.} To conclude this subsection we present a brief overview of the formalism that we have set up. The operators of the \textsc{qism}, the equations that they satisfy, and relation between these operators is shown in Table~\ref{s4:tb:qism}. Any physical operator, in $\End(\cH)$, can be expressed as an element of the \textsc{yba}, i.e.\ in terms of the generators $A(u),\To,D(u)$. In particular the \textsc{yba} can be used to construct the Bethe vectors, which is our next topic.

\begin{table}[h]
	\centering
	\tikz[baseline={([yshift=-.5*11pt*0.8]current bounding box.center)},scale=0.8,>=implies]{
		\matrix(m)[matrix of math nodes, text height=1.5ex, text depth=0.25ex, row sep=0em, column sep=2em,ampersand replacement=\&]{
			 \& \text{auxiliary } V_b \& \text{local } V_l \&\text{global } \cH \\ 
			\text{auxiliary } V_a \& R_{ab}(u):\ \textsc{ybe} \& L_{al}(u):\ \textsc{fcr} \& T_a(u):\ RTT \\
			 \& \& \& \phantom{T_a(u)} \\ \& \& \& \phantom{T_a(u)} \\
			\text{physical result} \& \phantom{\text{auxiliary } V_b} \& \&  \begin{tabular}{c} $A(u),\To,D(u):$\ \textsc{yba} \\ $t(u):$\ commute \end{tabular} 
			\\
		};
		\draw ([xshift=-1.5em]m-1-2.north west) -- ([xshift=-1.5em,yshift=-.5em]m-5-2.south west);
		\draw ([yshift=.05cm]m-2-1.north west) -- ([xshift=1.5em,yshift=.05cm]m-2-4.north east);	
		\draw[<->,double equal sign distance] (m-2-3) -- (m-2-4);
		\draw[->,double equal sign distance] (m-2-4) -- (m-4-4.center);	
		}
	\caption{Summary of the \textsc{qism} in the spin-chain language, where $V_a$ plays an auxiliary role.}
	\label{s4:tb:qism}
\end{table}

\subsection{The algebraic Bethe ansatz}\label{s4:ABA}

Our final task is to reproduce the results of the \textsc{cba} from Sections \ref{s2:results} and~\ref{s3:results} in the context of the \textsc{qism}. The goal is to diagonalize the transfer matrix~\eqref{s4:eq:t=A+D}; the spectrum of the \textsc{xxz} Hamiltonian~\eqref{s4:eq:H_1} then follows from the trace identity \eqref{s4:eq:H_1}. We proceed along the lines of Faddeev~\cite{Fad95a}. Although there are still some nontrivial calculations involved in the algebraic Bethe ansatz (\textsc{aba}), it is much easier to get the eigenvalues $\Lambda_M$ and the \textsc{bae} for any $M$-particle sector than it is with the \textsc{cba}.

\paragraph{Second quantization.} The \textsc{cba} from Section~\ref{s2:method} features the Bethe wave function~\eqref{s2:eq:CBA M}: this is the first-quantized approach to the quantum-mechanical spin chains. In contrast, the \textsc{aba} corresponds to second quantization through the explicit construction of a Fock space of states for the model. As a first attempt to construct such a Fock space let us briefly go back to the \textsc{cba}. We already have a good candidate for the Fock vacuum: the pseudovacuum~$\ket{\Omega} \in \cH_0$ from~\eqref{s2:eq:pseudovacuum}. For $M=1$, \eqref{s2:eq:magnon} suggests that $\sum_l \E^{\choice{-}{}\I p_m l } S^-_l$ may serve as a creation operator for a magnon with quasimomentum~$p_m$. Unfortunately already for $M=2$ we see that this cannot be true. Indeed, applying that operator twice on $\ket{\Omega}$ gives $A(p_1,p_2)=A'(p_1,p_2)$ in~\eqref{s2:eq:CBA M=2}, only allowing for trivial two-body scattering. To proceed we have to exploit the Yang-Baxter algebra from the previous subsection.

Again we start from the pseudovacuum, which is depicted in our shorthand as
\begin{equation}\label{s4:eq:pseudovacuum}
	\ket{\Omega} \ = \ \
	\tikz[baseline={([yshift=-.5*11pt*0.8]current bounding box.center)},	scale=0.8,font=\scriptsize,triple/.style={postaction={draw,-,shorten >=.05},double,double distance=4pt,-implies}]{
		\draw[triple,-,dotted] (0,0) -- (0,1);
	} \ \ . 
\end{equation}
In view of \eqref{s4:eq:monodromy acting} and \eqref{s4:eq:ABCD graphical} we have
\begin{equation}\label{s4:eq:A Omega}
	A(u)\,\ket{\Omega} = \sum_{\Vector{\gamma}\in\{\pm 1\}^L}
	\tikz[baseline={([yshift=-.5*11pt*0.8-.5*11pt]current bounding box.center)}, 
	scale=0.8,triple/.style={postaction={draw,-,shorten >=.05},double,double distance=4pt,-implies}]{
		\draw[dotted] (0,1) -- (2,1);
		\draw[triple,-,dotted] (1,0) -- (1,1);
		\draw[triple,-] (1,1) -- (1,2) node[above]{$\boldsymbol\gamma$};
	}
	\ \ket{\Vector{\gamma}} \ = \
	\tikz[baseline={([yshift=-.5*11pt*0.8]current bounding box.center)},	scale=0.8,triple/.style={postaction={draw,-,shorten >=.05},double,double distance=4pt,-implies}]{
		\draw[dotted] (0,1) -- (2,1);
		\draw[triple,-,dotted] (1,0) -- (1,2);
	}
	\ \ket{\Omega} = a(u)^L \ket{\Omega} \ ,
\end{equation}
where in the second equality the sum over outgoing configurations collapses to a single term by line conservation.

\begin{exercise}
Show in the same way that $\ket{\Omega}$ is also an eigenvector of $D(u)$ and $C(u)$,
\begin{equation}\label{s4:eq:CD Omega}
	C(u)\,\ket{\Omega} = 0 \ , \qquad D(u)\,\ket{\Omega} = b(u)^L \, \ket{\Omega} \ , 
\end{equation}
while
\begin{equation}\label{s4:eq:B Omega}
	B(u)\,\ket{\Omega} = \frac{c(u)}{b(u)} \, a(u)^L \sum_{l \in\mathbb{Z}_L} \left(\frac{b(u)}{a(u)}\right)^{\! l} \, \ket{l} \ \in \ \cH_1 \ . 
\end{equation}
\end{exercise}

Thus $B(u)$ and $C(u)$ present themselves as candidates for raising and lowering operators, respectively. For this to make sense $B(u)$ must map $\cH_M$ into $\cH_{M+1}$ for \emph{each} $M$-particle sector, while $C(u)$ should act in the opposite direction. Graphically it is obvious that this is indeed the case: by line conservation $B(u)$ `injects' an excitation (occupancy) into the global quantum space~$\cH$, while $C(u)$ `absorbs' one. We conclude that $B(u)$ and $C(u)$ may indeed be used to build a Fock space starting from $\ket{\Omega}$.

\begin{exercise}
For an alternative argument check that the ice rule $[S^z_a+S^z , T_a(u)]=0$, cf.~\eqref{s4:eq:ice rule Lax}, implies that the generators of the \textsc{yba} satisfy
\begin{align}
 	& [S^z , A(u)] = [S^z , D(u)] = 0 \ , \label{s4:eq:Sz AD} \\
 	& [S^z , B(u)] = - B(u) \ , \qquad [S^z , C(u)] = C(u) \ , \label{s4:eq:Sz BC}
\end{align}
and compare \eqref{s4:eq:Sz BC} with \eqref{s2:eq:su(2) ladder}.
\end{exercise}

For this construction to reproduce the results of the \textsc{cba} we also need a way to include the parameters~$p_m = \choice{}{-} \I \log z_m$. In the present set-up there is already an obvious candidate to fulfil this role: the spectral parameter~$u$. If $B(u)$ is to create a physical state we should in particular be able to match \eqref{s4:eq:B Omega} with the magnon-solution~\eqref{s2:eq:magnon} to reproduce the spectrum for $M=1$. This requires
\begin{equation}\label{s4:eq:u via p}
	z(u) = \E^{\choice{-}{} \I p(u)} = \frac{b(u)}{a(u)} \ ,
\end{equation}
which is consistent with Exercise~\ref{s3:ex:z via a b} in Section~\ref{s3:results}.

\paragraph{Algebraic Bethe ansatz.} According to \eqref{s4:eq:Sz AD} the Hilbert space~$\cH$ splits into $M$-particle sectors as in \eqref{s2:eq:M-part decomposition}, where each $\cH_M$ is preserved by the transfer matrix~\eqref{s4:eq:t=A+D}. Motivated by the preceding discussion, for suitable values of the spectral parameters $\Vector{u}\in\mathbb{C}^M$, let us look for eigenvectors in the $M$-particle sector of the form
\begin{equation}\label{s4:eq:ABA via B}
	\ket{\Psi_M; \Vector{u}} \coloneqq B(u_1) \cdots B(u_M) \, \ket{\Omega} \ \in \cH_M \ .
\end{equation}
This is the \emph{algebraic Bethe ansatz} for the \emph{Bethe vectors} employed to diagonalize the transfer matrix and spin-chain Hamiltonian. The strategy is as follows:
\begin{enumerate}
	\item Use \eqref{s4:eq:t=A+D} and the relations from the Yang-Baxter algebra to work out $t(u_0) \,\ket{\Psi_M ; \Vector{u} }$.
	\item Read off $\Lambda_M(\Vector{z})$ from the \emph{wanted} terms, proportional to $\ket{\Psi_M; \Vector{u}}$ as in the \textsc{aba}.
	\item Demand that the unwanted terms cancel to get the \textsc{bae} for the allowed values of~$\Vector{u}$.
\end{enumerate}
Like for the \textsc{cba}, the ansatz~\eqref{s4:eq:ABA via B} will only work for specific values of~$\Vector{u}$, but unlike before there are no unknown coefficients that have to be determined. Thus, this time all effort goes into Step~1, which can be done using a nice trick based on \eqref{s4:eq:BB}.

\subparagraph{Step 1.} We have to compute the two terms in
\begin{equation}\label{s4:eq:t Psi}
	t(u_0) \, \ket{\Psi_M;\Vector{u}} = A(u_0) \, \prod_{m=1}^M B(u_m) \, \ket{\Omega} + D(u_0) \, \prod_{m=1}^M B(u_m) \, \ket{\Omega} \ .
\end{equation}
We start with the first term on the right-hand side. Using \eqref{s4:eq:AB} we can move $A(u_0)$ past $B(u_1)$:
\begin{equation}\label{s4:eq:A Psi computation}
	A(u_0) \, \prod_{m=1}^M B(u_m) = \left( \frac{a(u_1-u_0)}{b(u_1-u_0)} \, B(u_1) \, A(u_0) - \frac{c(u_1-u_0)}{b(u_1-u_0)} \, B(u_0) \, A(u_1) \right) \prod_{m=2}^M B(u_m) \ .
\end{equation}
Continuing in this way we obtain $2^M$ terms, each proportional to $\big(\prod_{\nu\neq \mu}B(u_\nu)\big) A(u_\mu)$ for some $0\leq \mu \leq M$. As $\ket{\Omega}$ is an eigenvector of $A(u_\mu)$, see \eqref{s4:eq:A Omega}, the result must be of the form
\begin{equation}\label{s4:eq:A Psi}
	A(u_0) \, \ket{\Psi_M ; \Vector{u} } = \sum_{\mu = 0}^{M} M_\mu(u_0,\Vector{u}) \prod_{\substack{\nu = 0 \\ \nu \neq \mu}}^M B(u_\nu) \, \ket{\Omega} \ .
\end{equation}

Two of the coefficients $M_\mu$ are easy to compute. Firstly, only one of the $2^M$ terms contributes to $\mu=0$: this is the term where we always pick up the first term in~\eqref{s4:eq:AB}, giving
\begin{equation}\label{s4:eq:M_0}
	M_0(u_0,\Vector{u}) = a(u_0)^L \, \prod_{m=1}^M \frac{a(u_m-u_0)}{b(u_m-u_0)} \ .
\end{equation}
Secondly, the coefficient for $\mu=1$ also only has one contribution: this comes from the second term on the right-hand side of \eqref{s4:eq:A Psi computation}, where we always pick up the first term in the subsequent steps of \eqref{s4:eq:AB}. Thus we find
\begin{equation}\label{s4:eq:M_1}
	M_1(u_0,\Vector{u}) = - a(u_1)^L \, \frac{c(u_1-u_0)}{b(u_1-u_0)} \, \prod_{n = 2}^M  \frac{a(u_n-u_0)}{b(u_n-u_0)} \ . 
\end{equation}

The other coefficients receive more and more contributions, and their calculation appears to be a complicated task. Luckily there is a neat trick that exploits the \textsc{yba} to obtain the other coefficients without much effort. Indeed, recall that by~\eqref{s4:eq:BB} the $B$'s commute. (We actually already used this in writing an ordinary product in the \textsc{aba}; else we should have specified an ordering.) Therefore we may rearrange the creation operators in~\eqref{s4:eq:A Psi computation} in any way we like; in particular we may put $B(u_m)$ in front. Then, by switching $1$ and~$m$ in \eqref{s4:eq:M_1}, the above argument immediately yields 
\begin{equation}\label{s4:eq:M_m}
	M_m(u_0,\Vector{u}) = - a(u_m)^L \, \frac{c(u_m-u_0)}{b(u_m-u_0)} \, \prod_{\substack{n = 1 \\ n \neq m}}^M  \frac{a(u_n-u_m)}{b(u_n-u_m)} \ . 
\end{equation}

The coefficients $N_\mu(u_0,\Vector{u})$ in
\begin{equation}\label{s4:eq:D Psi}
	D(u_0) \, \ket{\Psi_M ; \Vector{u} } = \sum_{\mu = 0}^{M} N_\mu(u_0,\Vector{u}) \prod_{\substack{\nu = 0 \\ \nu \neq \mu}}^M B(u_\nu) \, \ket{\Omega}
\end{equation}
are computed in a similar way, now using relation~\eqref{s4:eq:DB} from the \textsc{yba} together with \eqref{s4:eq:CD Omega} and of course the trick. The result is
\begin{align}
	& N_0(u_0,\Vector{u}) = b(u_0)^L \, \prod_{m=1}^M \frac{a(u_0-u_m)}{b(u_0-u_m)} \ , \label{s4:eq:N^0} \\
	& N_m(u_0,\Vector{u}) = - b(u_m)^L \, \frac{c(u_0-u_m)}{b(u_0-u_m)} \, \prod_{\substack{n = 1 \\ n \neq m}}^M  \frac{a(u_m-u_n)}{b(u_m-u_n)} \ . \label{s4:eq:N^m}
\end{align}

\subparagraph{Step 2.} Since only the terms with $\mu=0$ in \eqref{s4:eq:A Psi} and \eqref{s4:eq:D Psi} are of the wanted form, the eigenvalues are given by
\begin{equation}\label{s4:eq:t eigenvalue}
	\Lambda_M(u_0;\Vector{u}) = a(u_0)^L \prod_{m=1}^M \frac{a(u_m-u_0)}{b(u_m-u_0)} + b(u_0)^L \prod_{m=1}^M \frac{a(u_0-u_m)}{b(u_0-u_m)} \ .
\end{equation}

\subparagraph{Step 3.} The remaining terms in \eqref{s4:eq:A Psi} and \eqref{s4:eq:D Psi} cancel when $M_m(u_0,\Vector{u})+N_m(u_0,\Vector{u})=0$ for all $1\leq m \leq M$, that is, when
\begin{equation}\label{s4:eq:BAE full}
	\left(\frac{b(u_m)}{a(u_m)}\right)^{\!\!L} 
	= -\frac{c(u_m-u_0)}{b(u_m-u_0)} \, \frac{b(u_0-u_m)}{c(u_0-u_m)} \prod_{\substack{n = 1 \\ n \neq m}}^M  \frac{a(u_n-u_m)}{b(u_n-u_m)} \, \frac{b(u_m-u_n)}{b(u_m-u_n)} \ , \qquad 1 \leq m \leq M \ .
\end{equation}

\paragraph{Results.} Notice that \eqref{s4:eq:t eigenvalue}--\eqref{s4:eq:BAE full} have the same form as $\Lambda_M$ and the \textsc{bae} found in Section~\ref{s3:results}. Moreover, the left-hand side of \eqref{s4:eq:BAE full} matches with that in \eqref{s2:eq:xxz BAE M} and in \eqref{s3:eq:6v BAE M} when \eqref{s4:eq:u via p} holds. To make contact with the results obtained through the \textsc{cba} we use the parametrization~\eqref{s4:eq:parametrization}.

\begin{exercise}
Using some trigonometric identities, show that \eqref{s4:eq:t eigenvalue} precisely matches with \eqref{s3:eq:6v eigenvalues} if we recognize $a(u_0) = a$, $b(u_0) = b$, $c(u_0) = c$ and $b(u_m)/a(u_m) = z_m$.
\end{exercise}

\begin{exercise}
Check that \eqref{s4:eq:BAE full} now reduces to
\begin{equation}\label{s4:eq:BAE}
	\left(\frac{b(u_m)}{a(u_m)}\right)^{\!\!L} = (-1)^{M-1} \prod_{\substack{n=1 \\ n\neq m}}^M \frac{a(u_n-u_m)}{a(u_m-u_n)} \ ,
\end{equation}
and that this correctly reproduces \eqref{s2:eq:xxz BAE via lambda} when the spectral parameters are identified with rapidities via
\begin{equation}\label{s4:eq:u via lambda}
	u_m = -(\lambda_m + \I\gamma/2) \ .
\end{equation}
\end{exercise}

Now let us use the trace identities to compute the momentum and energy of the Bethe vectors. Notice that the second term in~\eqref{s4:eq:t eigenvalue}, and almost all of its derivatives, vanish at~$u_* = 0$. By \eqref{s4:eq:H_0} the momentum of the Bethe vector~\eqref{s4:eq:ABA via B} is
\begin{equation}
	p(\Vector{u}) = \choice{\frac{1}{\I}}{\I}  \log \frac{\Lambda_M(u_*,\Vector{u})}{a(u_*)^L} = \choice{\frac{1}{\I}}{\I}  \sum_{m=1}^M \log\frac{a(u_m)}{b(u_m)} = \sum_{m=1}^M p(u_m) \ ,
\end{equation}
nicely generalizing \eqref{s4:eq:u via p} to the $M$-particle sector.
\begin{exercise}
Use \eqref{s4:eq:H_1} to check that the energy of $\ket{\Psi_M;\Vector{u}}$ is given by
\begin{equation}
\begin{aligned}
	& \varepsilon_M(\Vector{u}) = \frac{\I \sin\gamma}{2} \, \Lambda_M(u_*,\Vector{u})^{-1} \, \left. \frac{\partial}{\partial u_0}\right|_{u_0=u_*} \Lambda_M(u_0,\Vector{u}) = \sum_{m=1}^M \varepsilon_1(u_m) \ , \\
	& \qquad \varepsilon_1(u) = \frac{\I \sin\gamma}{2} \, \frac{b(u)}{a(u)} \left(\frac{a(u)}{b(u)}\right)' = \frac{\sin\gamma}{2} p'(u) \ ,
\end{aligned}
\end{equation}
and plug in \eqref{s4:eq:u via lambda} to check that this agrees with \eqref{s2:eq:xxz E via lambda}.
\end{exercise}

In the framework of the \textsc{qism} it did not require much effort to derive these results even for an arbitrary $M$-particle sector. A comparison with the amount of work needed to obtain the same results using the \textsc{cba} in Appendix~\ref{sM} goes a long way to justify the abstract algebraic machinery developed in the previous subsections!

\paragraph{Rational limit.} Let us briefly turn to the isotropic limit~$\Delta\to 1$. Notice that the parametrization~\eqref{s3:eq:parametrization} yields $a=b$ and $c=0$ as $\gamma\to0$. Thus the Lax operator~\eqref{s4:eq:Lax operator matrix} reduces to the trivial operator $\rho\sinh(u) \id$ in this limit. This is directly related to the issue pointed out in Exercise~\ref{s2:ex:isotropic limit} at the end of Section~\ref{s2:results}. To study the isotropic limit one can use the parametrization obtained from \eqref{s3:eq:parametrization} by rescaling $u= \gamma \, u'$ and $\rho = 1/\gamma$ before taking $\gamma \to 0$. Dropping the primes we find that the result is rational in~$u$, and in our case even linear:
\begin{equation}\label{s4:eq:parametrization rational}
	a(u) = u+\I \ , \qquad b(u) = u \ , \qquad c(u) = \I \ .
\end{equation}
The Lax operator~\eqref{s4:eq:Lax operator via sigmas} thus becomes a simple linear combination of the identity operator and the permutation operator~\eqref{s4:eq:perm}:
\begin{equation}\label{s4:eq:Lax operator via sigmas rational}
	L_{al}(u) =  u \id_{al} + \, \I \perm_{al} \ .
\end{equation}
The $R$-matrix~\eqref{s4:eq:R-matrix solution} acquires the same form in the isotropic limit.

From the \textsc{xxx} \textsc{yba} it can be shown that on-shell Bethe vectors are always highest weight: they are annihilated by the total spin-raising operator, $S^+ \ket{\Psi_M;\Vector{u}} = 0$, by virtue of the \textsc{bae}; see e.g.~\cite[\textsection4]{Fad95a}. The $\mathfrak{su}(2)$-descendants in the spectrum are obtained by applying $S^-$ to the Bethe vectors. The \textsc{xxx} spin chain is analyzed using the \textsc{qism} in \cite[\textsection3]{FT84}.
\begin{exercise}
Check that for $R_{ab}$ like in \eqref{s4:eq:Lax operator via sigmas rational} the $RTT$-relations \eqref{s4:eq:RTT alt} can be written as
\begin{equation}\label{s4:eq:Yangian}
	(u-v) \, [T_{ij}(u),T_{kl}(v)] = \I \, \big( T_{kj}(v) T_{il}(u) - T_{kj}(u) T_{il}(v) \big)
\end{equation}
where $T_{11}(u)=A(u)$, $T_{12}(u)=B(u)$, $T_{21}(u)=C(u)$, $T_{22}(u)=D(u)$.
\end{exercise}

\paragraph{More spin chains.} To conclude this section we show how the \textsc{qism} allows one to define new quantum-integrable spin chains. In Section~\ref{s2:spin chains} we looked at spin chains whose interactions are
\begin{enumerate}[label=\roman*),noitemsep]
	\item only nearest neighbour;
	\item homogeneous (translationally invariant); and
	\item at least \emph{partially isotropic}. 
\end{enumerate}
The \textsc{xxx} and \textsc{xxz} magnets are the main examples of such models. Let us briefly recall where properties~(i)--(iii) were used in the analysis of these spin chains. Property~(iii) was necessary to define the $M$-particle sectors, forming the starting point for both the \textsc{cba} and the \textsc{aba}. For the diagonalization of the Hamiltonian in the one-particle sector property~(ii) came in handy, directly yielding the magnons~\eqref{s2:eq:magnon}. Property~(i) was also important for the \textsc{cba}, leading to the Bethe wave function~\eqref{s2:eq:CBA M}.

In the present section we have seen how, starting from the Lax operator~\eqref{s4:eq:Lax operator} for the \textsc{xxz} model, via the monodromy matrix~\eqref{s4:eq:monodromy} one obtains the Yang-Baxter algebra that allows one to solve the model via the \textsc{aba}. The latter still crucially depends on (iii), but it is possible to relax the other two properties in such a way that we can still use the \textsc{aba} to solve the resulting models.

\subparagraph{Twisted boundaries.} The periodic boundary conditions can be modified (`deformed') to allow for \emph{quasi-periodic} or \emph{twisted} boundary conditions, $\Vector{S}_{l+L}= \exp\bigl(\tfrac{1}{2} \I \vartheta \, \sigma^z\bigr) \, \Vector{S}_l \, \exp\bigl(-\tfrac{1}{2} \I \vartheta \, \sigma^z\bigr)$, where the \emph{twist parameter} $\vartheta$ is $2\pi$-periodic, cf.~$\exp(\pm \pi \I \, \sigma^z)=-\id$. Such boundary conditions are accounted for in the \textsc{qism} by introducing a \emph{twist operator}
\begin{equation}
	K_a(\vartheta) \coloneqq \exp(\tfrac{1}{2}\I \vartheta \, \sigma^z_a) = \mathrm{diag}(\E^{\I \,\vartheta/2},\E^{-\I \,\vartheta/2}) = \ 
	\tikz[baseline={([yshift=-.5*11pt*0.8+1pt]current bounding box.center)},	scale=0.8,font=\scriptsize]{
		\draw[->] (0,0) node[left]{$a$} -- (2,0); 
		\fill [black] (1,0) circle (3pt);
	}\ \in \End(V_a) \ .
\end{equation}
Although this operator breaks the full isotropy group when one starts with the \textsc{xxx} spin chain, the partial isotropy subgroup $U(1)_z \subseteq SU(2)$ corresponding to the ice rule, cf.~\eqref{s4:eq:ice rule Lax}, is preserved:
\begin{equation}\label{s4:eq:ice rule twisted}
	[K_a(\vartheta) K_b(\vartheta),R_{ab}(u)]= [\exp\tfrac{1}{2} \I \vartheta (\sigma^z_a+\sigma^z_b) , R_{ab}(u)] = 0 \ .
\end{equation}
This implies that the twisted monodromy matrix
\begin{equation}
	T_a(u;\vartheta) \coloneqq K_a(\vartheta) \ordprod_{l \in \mathbb{Z}_L} L_{al}(u) = \
	\tikz[baseline={([yshift=-.5*11pt*0.8+8pt]current bounding box.center)}, 
	scale=0.8,font=\scriptsize,triple/.style={postaction={draw,-,shorten >=.05},double,double distance=4pt,-implies}]{
		\draw[triple] (1,0) node[below]{$1\cdots L$} -- (1,2);
		\draw[->] (0,1) node[left]{$a$} -- (3,1); \fill [black] (2,1) circle (3pt);
	} \ \in \End(V_a\otimes \cH)
\end{equation}
satisfies the $RTT$-relation for the same $R$-matrix~\eqref{s4:eq:R-matrix solution}, so one can use the \textsc{aba} to diagonalize the twisted transfer matrix $t(u;\vartheta)=\tr_a T_a(u;\vartheta) = \E^{\I \,\vartheta/2} A(u) + \E^{-\I \,\vartheta/2} D(u)$.

\begin{exercise}
Extend the results from Sections \ref{s4:conserved quantities}, \ref{s4:YBA} and~\ref{s4:ABA} to the case of twisted boundary conditions: compute $H_0$ and $H_1$, use \eqref{s4:eq:ice rule twisted} to verify \eqref{s4:eq:RTT graphical}, check whether the relations \eqref{s4:eq:BB} and \eqref{s4:eq:AB}--\eqref{s4:eq:DB} of the \textsc{yba} are modified, and compute the eigenvalues and \textsc{bae} for the Bethe vectors~\eqref{s4:eq:ABA via B}.
\end{exercise}

\subparagraph{Inhomogeneities.} Translational invariance can be broken by considering Lax operators $L_{al}(u;\mu_l)\coloneqq L_{al}(u-\mu_l)$ that depend on \textit{inhomogeneity parameters} $\mu_l \in \mathbb{C}$. Integrability is preserved since the shifted arguments do not affect the \textsc{fcr}~\eqref{s4:eq:FCR w=u-v} in an essential way, and the same $R$-matrix~\eqref{s4:eq:R-matrix solution} does the job. As the shifts generically differ from site to site, however, this time there is no value~$u_*$ of the spectral parameter at which all Lax operators become proportional to the permutation operator as in~\eqref{s4:eq:Lax operator at u*}, and the $H_k$ cannot be expressed in a nice way; in particular the Hamiltonian $H_1$ does no longer involve only nearest-neighbour interactions. Nevertheless, one can still define the monodromy matrix as
\begin{equation}
	T_{a}(u;\Vector{\mu}) \coloneqq \ordprod_{l \in \mathbb{Z}_L} L_{al}(u-\mu_l) \ ,
\end{equation}
and proceed as before to diagonalize $t(u;\Vector{\mu})=\tr_a T_a(u;\Vector{\mu}) = A(u;\Vector{\mu}) + D(u;\Vector{\mu})$.

\begin{exercise}
Extend the results from Section~\ref{s4:YBA} and~\ref{s4:ABA} to the inhomogeneous \textsc{xxz} spin chain: check if the relevant relations of the \textsc{yba} are altered, find the vacuum eigenvalues \eqref{s4:eq:A Omega} and \eqref{s4:eq:CD Omega}, and compute the eigenvalues and \textsc{bae} for the Bethe vectors~\eqref{s4:eq:ABA via B}.
\end{exercise}

\subparagraph{Further generalizations.} Other quantum-integrable spin chains that can be tackled using the \textsc{qism} include models with higher spins ($V_l \cong \mathbb{C}^{2s+1}$, see e.g.~\cite[\textsection2.4, 3.4]{GRS96} or \cite[\textsection10]{Fad95a}), open spin chains with reflecting boundaries (see e.g.~\cite[\textsection3.5]{GRS96} or \cite{Skl88}), local spins that vary from site to site (obtained by `fusion'), and even exotic `spin' where $\mathfrak{su}(2)$ is replaced by any simple Lie (super)algebra \cite{KRS81}.

\section{Relation to theoretical high-energy physics}\label{s5}

In Section~\ref{s2}, \ref{s3} and~\ref{s4} we have dealt with quantum integrability in the context of quantum and statistical mechanics. In this section we turn to \textsc{qft}. The following (possibly biased and certainly incomplete) overview of applications of quantum integrability to \textsc{qft} may serve as a motivation for studying quantum integrability for the more `hep-th'-oriented reader:
\begin{itemize}
\item Exact $S$-matrix theory in 2d \textsc{qft} is governed by a Yang-Baxter equation (\textsc{ybe}) for the two-body $S$-matrix (Zamolodchikov-Zamolodchikov, 1979~\cite{ZZ79}). See Section~\ref{s5:fact scatt} and the nice lecture notes by Dorey~\cite{Dor98b}.
\item Also in two dimensions, conformal field theories (with central charge $c<1$) possess a quantum-integrable structure (Bazhanov-Lukyanov-Zamolodchikov, 1990s~\cite{BLZ96,*BLZ97,*BLZ99}).
\item The high-energy scattering of four-dimensional Yang-Mills theories, such as \textsc{qcd}, appears to exhibit hidden symmetries governed by quantum-integrable spin chains (Faddeev-Korchemsky, 1995~\cite{Kor12,*FK95}).
\end{itemize}
Moving up a notch and adding supersymmetry:
\begin{itemize}
\item The gauge/\textsc{ybe} correspondence interprets Seiberg duality for 4d $\mathcal{N}=1$ quiver gauge theories as a Yang-Baxter relation for vertex models on those quivers (Yamazaki, 2013~\cite{Yam13});
\item The Bethe/gauge correspondence relates supersymmetric vacua of certain $\cN=2$ gauge theories to quantum-integrable models (Nekrasov-Shatashvili, 2009 \cite{NSh09a,*NSh09b,NSh10,NRSh11,NSh14}). This topic is introduced in Section~\ref{s5:Bethe/gauge}.
\item The AdS/\textsc{cft} correspondence has led to what is currently the largest and most active research area relating quantum integrability to high-energy physics. The case that is understood best is based on the following two ingredients:
	\begin{itemize}
	\item[i)] It is well known that	the IIB superstring in the $AdS_5\times S^5$ background is classically integrable. The one-loop corrected two-body $S$-matrix satisfies a \textsc{ybe}, providing strong evidence that the theory is integrable at the quantum level too.
	\item[ii)] Correlators of gauge-invariant operators in planar 4d $\mathcal{N}=4$ Yang-Mills theory can be computed using spin chains. 
	\end{itemize}
Crucially, these two appear to yield exactly the \emph{same} quantum-integrable structure, and there is a quantum-integrable model interpolating between the two sides, as is indicated by several nontrivial tests. See the big review \cite{BA+10}, and \cite{AF09} for a detailed account of the string-theory side.
\end{itemize}
Let us also mention few more `math-ph'-oriented connections between quantum integrability and gauge theories:
\begin{itemize}
\item Certain deformations of 2d Yang-Mills theory are related to exactly solvable statistical-physical models (Migdal, 1975; Rusakov, 1990; Witten, 1991). See the review \cite{CMR95} and references therein.
\item Chern-Simons theory and other 3d topological \textsc{qft}s have quantum-group symmetries (Witten, 1989; Reshetikhin-Turaev, early 90s). See the review \cite{Fre95} and references therein.
\item Twisted deformed 4d $\cN=1$ gauge theories are in a similar way related to quantum-integrable models (Costello, 2013~\cite{Cos13b,*Cos13a}).
\end{itemize}

Thus, quantum integrability also appears to have close ties to theoretical high-energy and mathematical physics. To get some feeling for how quantum integrability may appear in such contexts we turn to two examples: one old --- the theory of exact $S$-matrices and factorized scattering in two-dimensional \textsc{qft}, which also sheds more light on the results from Sections \ref{s2:results} and~\ref{s3:results} --- and one much more recent: the Bethe/gauge correspondence.

\subsection{Quantum integrability and 2d \textsc{qft}}\label{s5:fact scatt}

Physics in two dimensions is special. This is well-known for conformal field theory, but more generally applies to two-dimensional \textsc{qft}. For example, the spin-statistics connection no longer holds in lower dimensions, in accordance with the results for magnons on a spin chain in Section~\ref{s2:results}. In condensed-matter physics this also shows up in two \emph{spatial} dimensions: the rotation group $SO(2)$ is abelian, so spin is not restricted to take half-integer values in the quantum theory, and at the same time there is a whole range of possible (`anyonic' or `braid') statistics. The situation for statistics in one spatial dimension is even more peculiar, as is illustrated by bosonization.$^\#$\footnote{For bosonization and other techniques in many-body quantum physics in one spatial dimension we refer to the book by Giamarchi~\cite{Gia04}. In particular, bosonization can also be applied to the \textsc{xxz} spin chain, see \cite[\textsection6.1]{Gia04}.} In the relativistic setting of high-energy physics in spacetime dimension two, the little group $SO(1) \subseteq SO(1,1)$ is trivial, so (Lorentz) spin in does not even have an intrinsic meaning in $1+1$ dimensions.

Another notable feature of two-dimensional physics that is more relevant for us is, of course, the presence of quantum-integrable models. However, there is in fact  \emph{no} generally accepted definition of `quantum integrability', so let us pause for a moment to think what this could actually mean.

Motivated by the definition of Liouville integrability in classical mechanics (see below) one often would like to ask for a maximal set of commuting conserved quantities, such as the $H_k$ in Section~\ref{s3:results}. However, for quantum-mechanical models with a finite-dimensional Hilbert space such a family always exists: like for any hermitean operator, the eigenstates~$\ket{\Psi_k}$ of the Hamiltonian can be taken to be orthogonal; then $H_k \coloneqq \ket{\Psi_k}\otimes \bra{\Psi_k}$ does the job. In addition there is an issue with the number of independent operators, and hence with the notion of `maximality', see~\cite{Wei92,*CM11}.

In \emph{practice}, then, one often demands the existence of an underlying $R$-matrix satisfying the \textsc{ybe}, see Section~\ref{s4:YBA}. This equation is closely related to some of the main results from Sections \ref{s2:results} and~\ref{s3:results}:
\begin{enumerate}[label=\roman*),noitemsep]
\item the \textsc{xxz} and six-vertex models have a tower of commuting symmetries;
\item the number $M$ of magnons/occupancies is conserved in scattering processes; 
\item such scattering is two-body reducible: it factorizes into two-body processes.
\end{enumerate}
Indeed, the discussion in Section~\ref{s3:results} and~\ref{s4:YBA} shows how the \textsc{ybe} is related to (i). In this subsection we explain how the \textsc{ybe} may appear in \textsc{qft}s with many conserved quantities in two dimensions, and how results (ii) and (iii) fit in.

\paragraph{Classical integrability.} Let us first summarize the situation in classical mechanics, where there is a well-defined notion of integrability. In the Hamiltonian framework classical mechanical models are described in terms of a finite-dimensional phase space, say with coordinates $q_i$ (positions) and $p_i$ (momenta), and Poisson brackets~$\{\,\cdot\,,\,\cdot\,\}$ between (functions of) those coordinates. The time evolution of any observable $f(\Vector{p},\Vector{q})$ is determined by the Hamiltonian~$H(\Vector{p},\Vector{q})$ through Hamilton's equation $\dot f \coloneqq \partial_t f = \{H,f\}$. A familiar example is a collection of point particles moving on a line in a potential $U(\Vector{q})$, for which $H(\Vector{p},\Vector{q})=\sum_i p_i^2/2m + U(\Vector{q})$ and the equations $\{p_i,q_j\}=\delta_{ij}$ yield $\dot q_i = p_i/m$ and $\dot p_i = -\partial_i U(\Vector{q})$, giving Newton's $m \ddot q_i = -\partial_i U(\Vector{q})$.

Now, in brief, if a classical mechanical system has a maximal set of independent conserved quantities, and if those quantities Poisson-commute with each other, then the system is solvable. Liouville's theorem makes this statement precise and provides a recipe for finding the solution. Systems that can be solved in this way are called \emph{Liouville integrable} and include the harmonic oscillator, Kepler problem and several spinning tops, see \cite[\textsection2]{BBT03} or the classic~\cite[\textsection49--50]{Arn97}. 

The next step is the notion of a \emph{Lax pair}, which opens up the road to the beautiful theory of classical integrable models. This theory contains the (semi)classical version of concepts that play an important role in quantum integrability, notably the \emph{classical $r$-matrix} and \emph{classical Yang-Baxter equation} as well as \emph{spectral parameters}. Unfortunately this takes us too far from our topic, but see e.g.~\cite{BBT03}. We suffice by mentioning that this story also has an infinite-dimensional incarnation in the theory of integrable nonlinear \textsc{pde}s in two dimensions, like the Korteweg-De Vries (KdV) equation, for which there exists an infinite set of conserved quantities that can be obtained from a generating function called the \emph{monodromy matrix}. These systems have solitonic solutions that can be obtained via the so-called \emph{classical inverse-scattering method}, whose quantization is the \textsc{qism} from Section~\ref{s4}, see also \cite[\textsection{V}]{KBI93}.

The main message to take away is that in classical mechanics the existence of sufficiently many conserved quantities signals its integrability. What happens if there exist many conserved quantities in a field theory?

\paragraph{Symmetries of the $S$-matrix.} One of the main goals in quantum field theory is to compute scattering amplitudes. All such amplitudes are contained in the \emph{$S$-matrix}, which relates asymptotic incoming states to all possible asymptotic final states: schematically $S\ket{i}=\sum_f S^f_i\ket{f}$, where the $S$-matrix entries~$\smash{S^f_i}$ depend on parameters such as the momenta of the asymptotic states. The $S$-matrix is a very complicated object and ordinarily one can at best hope to calculate its entries perturbatively. The presence of symmetries (and thus conserved quantities), however, may restrict the $S$-matrix to such an extent that its entries can be computed \emph{exactly}.
 
In many field theories the momentum $P_\mu$ is conserved. In addition there may be `internal' symmetry operators, which commute with the Lorentz algebra~$\mathfrak{o}(d-1,1)$, such as flavour or gauge symmetries. These operators are also symmetries of the $S$-matrix. Can the $S$-matrix have symmetries that are not Lorentz scalars or vectors? In three or more spacetime dimensions the \emph{Coleman-Mandula theorem} says the answer is negative~\cite{CM67,*PH97}. More precisely: if a Lorentz-invariant theory has a mass gap and its $S$-matrix is analytic, under some technical assumptions the presence of any `forbidden' infinitesimal (bosonic) symmetry operator forces the $S$-matrix of the theory to be trivial. Clearly such theories, having no interactions, are not very interesting.
 
In spacetime dimension two, though, the Coleman-Mandula theorem no longer holds, and there \emph{are} interesting two-dimensional theories, such as the sine-Gordon model, exhibiting (infinitely many!) conserved charges transforming in higher representations of the Lorentz algebra~$\mathfrak{o}(1,1)\cong\mathbb{R}$. The existence of such higher symmetries in a field theory does, however, severely constrain the $S$-matrix of the theory. There are two important restrictions; let us take a look at each of them.

\paragraph{Elastic scattering.} The first restriction arises in the following way. A local operator~$Q_s$ of `Lorentz spin'~$s$ acts on asymptotic states by multiplication with a polynomial, homogeneous of degree $s$, $\sum_a c_{a,s}\, (p_a^\pm)^s$ in the lightcone momenta ($p^\pm = p^0 \pm p^1$) of the asymptotic particles~$a$. (For $s=1$ the coefficients~$c_{a,1}$ all equal one and this is just the total momentum.) Conservation of $Q_s$ thus implies
\begin{equation}
	\sum_{i\in\text{in}} c_{i,s} \, (p_i^\pm)^s = \sum_{f\in\text{out}} c_{f,s} \,  (p'^\pm_f)^s \ .
\end{equation}
If, for $N\to N'$ scattering, the number of $Q_s$ with different~$s$ is large enough we get an overdetermined system of equations in the $N+N'$ asymptotic momenta, and the solutions are trivial up to relabelling: the sets of initial and final momenta must coincide, $\{p_i\mid i\in\text{in}\}=\{p'_f\mid f\in\text{out}\}$.

Now if there exist infinitely many symmetries~$Q_s$ then it follows that there is no macroscopic particle creation or annihilation in such theories: $N=N'$. (Microscopically there may be contributions due to virtual processes where the number of particles are not conserved.) Having infinitely many conserved charges thus constrains the $S$-matrix to be block-diagonal, with a block for each number~$N$ of asymptotic particles. In addition it follows that all scattering is \emph{elastic}: the energy of each asymptotic particle is conserved in the process.

\paragraph{Factorized scattering.} The second remarkable consequence of the presence of higher symmetries~$Q_s$ in a two-dimensional field theory is that  all blocks with $N\geq 3$ turn out to be determined by $\binom{N}{2}$ two-body scattering processes. This phenomenon, known as \emph{factorized scattering}, turns the computation of the full $S$-matrix into a finite task, bringing exact results within reach. It can be understood as a result of the fact that higher symmetries act on particles by momentum-dependent translations; let us sketch the argument.$^\#$\footnote{As for the Coleman-Mandula theorem (see Witten's \cite[Lect.~4]{Wit99}): in three or more dimensions we can use a `forbidden' symmetry to shift any two intersecting particle trajectories by momentum-dependent (and thus different) amounts to obtain two non-intersecting lines, so that there is no scattering. This argument fails in two dimensions since there two lines generically \emph{do} intersect.}

It is convenient to parametrize momentum by the \emph{rapidity}~$\lambda$ via $p^\pm = m \E^{\pm \lambda}$, which takes into account the mass-shell condition $\Vector{p}^2 = p^+ p^- = m^2$. In a spacetime diagram (with time increasing upwards) the worldline of a free particle with momentum~$\Vector{p}$ is a straight line with slope~$\lambda$. Lorentz invariance implies that the $S$-matrix entries only depend on rapidity differences. The two-body $S$-matrix for $i_1 i_2\to f_1 f_2$ corresponds to
\begin{equation}\label{s5:eq:two-body S-mat}
	S^{f_1 f_2}_{i_1 i_2}(\lambda_1-\lambda_2) \ = 
	\tikz[baseline={([yshift=-.5*11pt*0.8]current bounding box.center)},>=latex,scale=0.8,font=\small]{
	\pgfmathsetmacro\cscA{1/sin(130)}
	\pgfmathsetmacro\cscB{1/sin(60)}
		\draw[->] (-130:\cscA) node[below]{$i_1$} -- (50:\cscA) node[above]{$f_2$};
		\draw[->] (-60:\cscB) node[below]{$i_2$} -- (120:\cscB) node[above]{$f_1$};
		\draw[thin,densely dotted] (-130:.4) arc(-130:-60:.4); \draw (-95:.65) node{$\lambda_{12}$};
		\fill [black] (0,0) circle (.5ex);
	} , \qquad\qquad \lambda_{12} \coloneqq \lambda_1 - \lambda_2  \ . 
\end{equation}
To see that this quantity completely determines the full $S$-matrix of the theory let us consider three-body scattering with corresponding $S$-matrix element
\begin{equation}\label{s5:eq:three-body S-mat}
	S^{f_1 f_2 f_3}_{i_1 i_2 i_3}(\lambda_1 - \lambda_2,\lambda_2 - \lambda_3) \ = 
	\tikz[baseline={([yshift=-.5*11pt*0.8]current bounding box.center)},>=latex,scale=0.8,font=\small]{
	\pgfmathsetmacro\cscA{1/sin(130)}
	\pgfmathsetmacro\cscB{1/sin(92)}
	\pgfmathsetmacro\cscC{1/sin(60)}
		\draw[->] (-130:\cscA) node[below]{$i_1$} -- (50:\cscA) node[above]{$f_3$};
		\draw[->] (-92:\cscB) node[below]{$i_2$} -- (88:\cscB) node[above]{$f_2$};
		\draw[->] (-60:\cscC) node[below]{$i_3$} -- (120:\cscC) node[above]{$f_1$};
		\fill [gray!30] (0,0) circle (.5cm);
	} \ .
\end{equation}
In a local field theory this scattering must happen in one of following ways, depending on the initial positions:
\begin{equation}\label{s5:eq:factorized scattering}
	\begin{tikzpicture}[baseline={([yshift=-.5*11pt*0.8]current bounding box.center)},>=latex,scale=0.8,font=\small]
	\pgfmathsetmacro\cscA{1/sin(130)}
	\pgfmathsetmacro\cscB{1/sin(92)}
	\pgfmathsetmacro\cscC{1/sin(60)}
		\draw[->] (-130:1.2*\cscA) coordinate(i1) node[below]{$i_1$} -- (50:1.2*\cscA) coordinate(f3) node[above]{$f_3$};
		\draw[->] ($(-92:1.2*\cscB)+(.3,0)$) coordinate(i2) node[below]{$i_2$} -- ($(88:1.2*\cscB)+(.3,0)$) coordinate(f2) node[above]{$f_2$};
		\draw[->] (-60:1.2*\cscC) coordinate(i3) node[below right]{$i_3$} -- (120:1.2*\cscC) coordinate(f1) node[above]{$f_1$};
		\fill [black] (intersection of i1--f3 and i2--f2) circle (.5ex);
		\fill [black] (intersection of i1--f3 and i3--f1) circle (.5ex);
		\fill [black] (intersection of i2--f2 and i3--f1) circle (.5ex);
	\end{tikzpicture} 
	\ \ , \qquad
	\begin{tikzpicture}[baseline={([yshift=-.5*11pt*0.8]current bounding box.center)},>=latex,scale=0.8,font=\small]
	\pgfmathsetmacro\cscA{1/sin(130)}
	\pgfmathsetmacro\cscB{1/sin(92)}
	\pgfmathsetmacro\cscC{1/sin(60)}
		\draw[->] (-130:1.2*\cscA) node[below]{$i_1$} -- (50:1.2*\cscA) node[above]{$f_3$};
		\draw[->] (-92:1.2*\cscB) node[below]{$i_2$} -- (88:1.2*\cscB) node[above]{$f_2$};
		\draw[->] (-60:1.2*\cscC) node[below]{$i_3$} -- (120:1.2*\cscC) node[above]{$f_1$};
		\fill [black] (0,0) circle (.5ex);
	\end{tikzpicture}
	\ \ , \qquad
	\begin{tikzpicture}[baseline={([yshift=-.5*11pt*0.8]current bounding box.center)},>=latex,scale=0.8,font=\small]
	\pgfmathsetmacro\cscA{1/sin(130)}
	\pgfmathsetmacro\cscB{1/sin(92)}
	\pgfmathsetmacro\cscC{1/sin(60)}
		\draw[->] (-130:1.2*\cscA) coordinate(i1) node[below]{$i_1$} -- (50:1.2*\cscA) coordinate(f3) node[above]{$f_3$};
		\draw[->] ($(-92:1.2*\cscB)-(.3,0)$) coordinate(i2) node[below]{$i_2$} -- ($(88:1.2*\cscB)-(.3,0)$) coordinate(f2) node[above]{$f_2$};
		\draw[->] (-60:1.2*\cscC) coordinate(i3) node[below]{$i_3$} -- (120:1.2*\cscC) coordinate(f1) node[above left]{$f_1$};
		\fill [black] (intersection of i1--f3 and i2--f2) circle (.5ex);
		\fill [black] (intersection of i1--f3 and i3--f1) circle (.5ex);
		\fill [black] (intersection of i2--f2 and i3--f1) circle (.5ex);
	\end{tikzpicture} \ \ .
\end{equation}

Now the higher symmetries at our disposal enable us to shift the worldlines by an amount that is momentum dependent and thus different for each of the lines. Thus we can pick suitable $Q_s$'s to turn each situation from \eqref{s5:eq:factorized scattering} into either of the others. The conclusion is that all three situations must represent the same physical process, so that \eqref{s5:eq:three-body S-mat} indeed factorizes into two-body processes. In addition we obtain a consistency condition for the entries~$S^{f_1 f_2}_{i_1 i_2}$ of the two-body $S$-matrix:
\begin{equation}
	\sum_{j_1,\,j_2,\,j_3} \
	\begin{tikzpicture}[baseline={([yshift=-.5*11pt*0.8]current bounding box.center)},>=latex,scale=0.8,font=\small]
	\pgfmathsetmacro\cscA{1/sin(130)}
	\pgfmathsetmacro\cscB{1/sin(92)}
	\pgfmathsetmacro\cscC{1/sin(60)}
		\draw[->] (-130:1.5*\cscA) coordinate(i1) node[below]{$i_1$} -- (50:1.5*\cscA) coordinate(f3) node[above]{$f_3$} node[pos=.53,above,inner sep=2ex]{$j_2$};
		\draw[->] ($(-92:1.5*\cscB)+(.5,0)$) coordinate(i2) node[below]{$i_2$} -- ($(88:1.5*\cscB)+(.5,0)$) coordinate(f2) node[above]{$f_2$}  node[pos=.45,right]{$j_3$};
		\draw[->] (-60:1.5*\cscC) coordinate(i3) node[below right]{$i_3$} -- (120:1.5*\cscC) coordinate(f1) node[above]{$f_1$} node[pos=.45,below,inner sep=2ex]{$j_1$};
		\fill [black] (intersection of i1--f3 and i2--f2) circle (.5ex);
		\fill [black] (intersection of i1--f3 and i3--f1) circle (.5ex);
		\fill [black] (intersection of i2--f2 and i3--f1) circle (.5ex);
	\end{tikzpicture}
	\ \ = \ \sum_{j_1,\,j_2,\,j_3} \
	\begin{tikzpicture}[baseline={([yshift=-.5*11pt*0.8]current bounding box.center)},>=latex,scale=0.8,font=\small]
	\pgfmathsetmacro\cscA{1/sin(130)}
	\pgfmathsetmacro\cscB{1/sin(92)}
	\pgfmathsetmacro\cscC{1/sin(60)}
		\draw[->] (-130:1.5*\cscA) coordinate(i1) node[below]{$i_1$} -- (50:1.5*\cscA) coordinate(f3) node[above]{$f_3$} node[pos=.5,below,inner sep=2ex]{$j_2$};
		\draw[->] ($(-92:1.5*\cscB)-(.5,0)$) coordinate(i2) node[below]{$i_2$} -- ($(88:1.5*\cscB)-(.5,0)$) coordinate(f2) node[above]{$f_2$} node[pos=.53,left]{$j_1$};
		\draw[->] (-60:1.5*\cscC) coordinate(i3) node[below]{$i_3$} -- (120:1.5*\cscC) coordinate(f1) node[above left]{$f_1$} node[pos=.72,right,inner sep=1ex]{$j_3$};
		\fill [black] (intersection of i1--f3 and i2--f2) circle (.5ex);
		\fill [black] (intersection of i1--f3 and i3--f1) circle (.5ex);
		\fill [black] (intersection of i2--f2 and i3--f1) circle (.5ex);
	\end{tikzpicture}
\end{equation}
This is the \textsc{ybe} in the context of factorized scattering. Further guaranteeing the consistency of the factorization of \emph{any} $N$-body scattering processes, the \textsc{ybe} is of central importance to the theory of factorized scattering and exact $S$-matrices in two dimensions. For more about these topics we refer to the nice lecture notes by Dorey~\cite{Dor98b} and to \cite[\textsection1.1]{GRS96}.

\subsection{The Bethe/gauge correspondence}\label{s5:Bethe/gauge}

Over the last two decades it is becoming apparent that gauge theories enjoying $\cN=2$ supersymmetry seem to come with an integrable structure for free. One class of such supersymmetric gauge theories was studied intensively by Seiberg and Witten~\cite{SW94} in the mid-1990s. It was soon realized~\cite{GK+95,*DW96} that at low energies these theories give rise to so-called classical algebraic integrable systems, closely related to (Liouville) integrable models from \emph{classical} mechanics. These systems can be solved exactly, at least in principle, and such a low-energy classical integrable structure is a surprising and pleasant feature of the original gauge theories.

An interesting question is whether this story also has a \emph{quantum-mechanical} analogue: do there exist gauge theories which yield quantum-integrable models at low energies? In~2009, Nekrasov and Shatashvili showed that this question can be answered positively for a large class of supersymmetric gauge theories: this is the \emph{Bethe/gauge correspondence}~\cite{NSh09a,*NSh09b,NSh10,NRSh11,NPSh13,NSh14}.

This subsection provides a qualitative overview of the main idea presented in~\cite{NSh09a,*NSh09b}. Rather than discussing the correspondence in full generality we describe what it entails for its main example. Much more can be found in e.g.\ the following references. Nonperturbative \textsc{qft} and \textsc{susy} are introduced in the book by Shifman~\cite{Shi12}. A classic reference for supersymmetry is Wess and Bagger~\cite{WB92}. For two-dimensional $\cN=(2,2)$ gauge theories see Witten~\cite{Wit93} and the book by Hori et al.~\cite{Hor03}; see also the author's MSc thesis~\cite[\textsection3]{Lam12} and the references therein. Other, closely related, developments concerning exact results in $\cN=2$ gauge theories are reviewed in the recent series of papers by Teschner et al.~\cite{Tes14}. The Bethe/gauge correspondence is introduced in~\cite[\textsection4]{Lam12}. For a mathematical version of the Bethe/gauge correspondence see~\cite{MO12}.

\paragraph{Rough version.} We already know a lot about the `Bethe' side of the correspondence. The `gauge' side of the story is about a very different area of theoretical physics, namely that of \emph{supersymmetric gauge theory}. Although these arise naturally in string theory let us give another, more concrete, motivation coming from \textsc{qcd}. At high energies, the gauge coupling is very small, and the standard tools of perturbative quantum field theory are available: this is the asymptotically free regime. As we flow to the infrared, however, the coupling constant of the non-abelian gauge theory ceases to be small and perturbation theory breaks down. At the same time, the vacuum structure of a gauge theory determines the possible phases of that theory. Getting a grip on this low-energy regime is one of the great open problems in present-day theoretical physics. Out of the various approaches to try and overcome this problem we consider the following. Instead of studying \textsc{qcd} itself we shift our attention to its \emph{idealizations} possessing supersymmetry (\textsc{susy}). This leads to a class of toy models that allow for more control whilst at the same time keeping several key features of \textsc{qcd}, providing an arena to test ideas about quantum field theory and non-abelian gauge theory in a controlled setting. Supersymmetry provides us with exact tools, so that a better insight can be gained into the structure of these idealized models. One may hope that some of this insight persists into the real world, where it may ultimately shed more light on \textsc{qcd} itself. 

Roughly speaking the Bethe/gauge correspondence amounts to the observation that
\begin{quote}
	There exists a class of \textsc{susy} gauge theories for which the (`gauge-theoretic part' of the) low-energy effective theory has the structure of a quantum-integrable model.
\end{quote}
Of course, this statement has to be supplemented with the specification \emph{which} \textsc{susy} gauge theories this applies to. A more precise version of the statement is the following:
\begin{quote}
	For \textsc{susy} gauge theories with effective two-dimensional `$\cN=(2,2)$' super-Poincar\'e invariance at low energies, the (`Coulomb branch' of the) \textsc{susy} vacuum structure corresponds to a quantum-integrable model.
\end{quote}
Let us illustrate these statements via the main example presented in~\cite{NSh09a,*NSh09b}, which features the \textsc{xxx} spin chain with Hamiltonian \eqref{s2:eq:Ham xxx} on the `Bethe' side. Before getting to the actual correspondence we take a look the `gauge' side and describe the specific theories featuring in this example.

\paragraph{Gauge-theory set-up.} We begin with an ordinary non-abelian gauge theory with matter, rather like \textsc{qcd}, in $3+1$~dimensions. The field content of the theory we want to study is as follows. The gauge group is $G=U(N_\text{c})$, so the gauge field~$A_\mu$ describes $N_\text{c}$~colours of gluons. Next there are massive matter fields in the fundamental and antifundamental representations of the gauge group: these are the quarks and their antiparticles. Equally many of these fields are included, so that we have $N_\text{f}$ flavours of quarks and antiquarks. This is similar to the situation in the Standard Model in the absence of the Yukawa couplings, where there is a flavour symmetry mixing the different generations of fermions. Finally we add one further massive field, living in the adjoint representation of~$U(N_\text{c})$, these are like the $W$-bosons in the Standard Model.

The next step is to enhance the theory by making it \emph{supersymmetric}. Although \textsc{susy} is a beautiful topic we suffice by saying that it is a boson-fermion symmetry, so each field now has a \emph{superpartner} with the opposite statistics. The fields and their superpartners are nicely packaged together in representations of the \textsc{susy} algebra known as  \emph{supermultiplets}. For example, $A_\mu$ and its fermionic superpartner are contained in a `vector supermultiplet', and each of the matter fields and their bosonic superpartners `chiral supermultiplets', also in the (anti)fundamental or adjoint representation of the gauge group.

So far the set-up is quite standard for \textsc{susy} gauge theories. Now we proceed towards the more specific situation that we need for our example: the number of spacetime dimensions has to be reduced to two. This can be arranged via a procedure known as \emph{dimensional reduction} where we consider theories that are translationally invariant in two directions, so that we can restrict ourselves to physics in a $(1+1)$-dimensional slice of spacetime, say the $(t,z)$-plane. At the level of fields this reduction is achieved by simply forgetting the dependence of the fields on the coordinates $x$ and~$y$. Although the result may seem pathological, non-abelian gauge theories in two dimensions still exhibit interesting phenomena such as asymptotic freedom, confinement, dimensional transmutation, and topological effects such as solitons. Thus, two-dimensional models provide a playground to learn about such aspects in an easier setting.

The dimensional reduction has two consequences that are relevant for us here. Firstly, the original gauge theory the vector field~$A_\mu$ has four components, describing two physical degrees of freedom corresponding to the transverse polarizations. As in Kaluza-Klein reduction, upon going down to two dimensions these components recombine into a two-dimensional gauge field with components $A_0$ and~$A_1$ together with a complex scalar field~$\sigma$. Since there are no transverse directions in the reduced spacetime (there are no photons in two dimensions!), $A_0$ and~$A_1$ do not correspond to physical degrees of freedom; therefore, the field~$\sigma$ encodes the physics of the gauge field from the four-dimensional viewpoint.

The second consequence of the reduction is that the amount of \textsc{susy} is enhanced, and we get $\cN=2$ \emph{extended} \textsc{susy}. This is a powerful tool for computations, bringing exact methods within our reach. In brief the reason for this enhancement is the following: \textsc{susy} operators are spinorial quantities, and the four real components that a Majorana spinor has in four dimensions recombine into two Majorana spinors worth of \textsc{susy} operators in two dimensions. In fact, in two dimensions, the Majorana (reality) and Weyl (chirality) conditions for spinors are compatible and can be simultaneously imposed, so that the minimal spinors in $1+1$ dimension  have only one component. In terms of the chirality of the \textsc{susy} operators we now have two left-handed and two right-handed such components: this is called `$\cN=(2,2)$' \textsc{susy} in two dimensions.

\paragraph{Goal: finding \textsc{susy} vacua.} Even in lower dimensions and with extended \textsc{susy}, the full non-abelian gauge theory is still too complicated to solve. However, what can be computed \emph{exactly} is the effective theory in the infrared. When the energy is sufficiently low, the (massive) matter is effectively non dynamical, so the resulting effective theory is a \emph{pure} gauge theory: it does no longer involve any matter fields. Our goal is to find the \textsc{susy} vacuum structure on the \emph{Coulomb branch}, which is parametrized by (the vacuum expectation value of) the complex scalar field~$\sigma$ encoding the four-dimensional physical degrees of freedom of the gauge field.

The low-energy theory is governed by a scalar potential whose zeroes are the \emph{supersymmetric vacua}. Amongst others this requires $\sigma$, which lives in the adjoint representation of~$U(N_\text{c})$, to be diagonalizable. As a result the gauge group breaks down to its diagonal subgroup~$U(1)^{N_\text{c}}$, so that the low-energy effective theory constitutes $N_\text{c}$~copies of electrodynamics; this is where the name `Coulomb branch' comes from. Write $\sigma_m$ for the $m$th diagonal component of (the vacuum expectation value of)~$\sigma$, corresponding to the $m$th $U(1)$-factor. The goal is to determine the values of the \textsc{susy} vacua~$\sigma_m$, also known as `Coulomb moduli', through the \emph{vacuum equations} for $\Vector{\sigma}=(\sigma_1,\To,\sigma_{N_\text{c}})$:
\begin{equation}
	\exp \left( 2\pi \frac{\partial \tilde{W}_\text{eff}(\Vector{\sigma})}{\partial \sigma_m} \right) = 1 , \qquad \qquad 1 \leq m \leq N_\text{c} \ ,
\end{equation}
where $\tilde{W}_\text{eff}$ is known as the (shifted) effective \emph{twisted superpotential}. Thus, the supersymmetric vacua on the Coulomb branch are similar to critical points of $\tilde{W}_\text{eff}$; the exponential is a peculiarity of $\cN=(2,2)$ theories in two dimensions.

\paragraph{Approach: Wilsonian effective action.} To find the effective theory in the infrared all matter fields have to be integrated out, as well as the higher modes of the gauge fields, to find $\tilde{W}_\text{eff}$. This is where $\cN=2$ \textsc{susy} comes in handy: by general arguments, it allows one to derive certain (`non-renormalization' and `decoupling') theorems that highly restrict what can happen when fields are integrated out; in particular, $\tilde{W}_\text{eff}$ is one-loop exact. With the help of such theorems, the low-energy effective theory on the Coulomb branch can be computed exactly.

\paragraph{Result: vacuum equations.} When the dust has settled the vacuum equations for our theory turn out to be as follows:
\begin{equation}\label{s5:eq:vacuum equations}
	\biggl( \frac{\sigma_m+\I/2}{\sigma_m-\I/2} \biggr)^{\!\!N_\text{f}} = \ \prod_{\substack{n=1 \\ n\neq m}}^{N_\text{c}} \frac{\sigma_m-\sigma_n+\I}{\sigma_m-\sigma_n-\I} \ , \qquad 1 \leq m \leq N_\text{c} \ .
\end{equation}
On the left-hand side each factor in the numerator comes from a one-loop diagram due a single flavour of quarks; likewise, the denominator is due to the anti-quarks. The product on the right-hand side is the contribution from the off-diagonal components of the gauge field.

\paragraph{Guiding observation; dictionary.} Recalling the discussion at the end of Section~\ref{s2:results} it is clear that the vacuum equations \eqref{s5:eq:vacuum equations} are precisely the same as the \textsc{bae}~\eqref{s2:eq:xxx BAE via lambda} for the $M$-particle sector of the \textsc{xxx} spin chain. This is the guiding observation behind the Bethe/gauge correspondence. Comparing \eqref{s5:eq:vacuum equations} with \eqref{s2:eq:xxx BAE via lambda} we arrive at the identifications listed in Table~\ref{s5:tb:table}. Thus, the Bethe/gauge correspondence provides a \emph{dictionary} which allows us to go back and forth between the two sides of the correspondence. In this way, we may use our knowledge of one side to learn something about, or at least shed new light on, the other side.

\begin{table}[h]
	\centering
	\begin{tabular}{c|c}
		\textbf{Bethe} \qquad & \textbf{Gauge} \\ \hline \strut
		number of sites $L$ & number of flavours $N_\text{f}$ \\ 
		number of magnons $M$ & number of colours $N_\text{c}$ \\
		rapidities $\lambda_m$ & supersymmetric vacua $\sigma_m$ \\
		Yang-Yang function $Y$ & effective twisted superpotential $\tilde{W}_\text{eff}$
	\end{tabular}
	\caption{Dictionary relating quantities for the \textsc{xxx} spin chain and the \textsc{susy} vacuum structure on the Coulomb branch for a $\cN=(2,2)$ gauge theory in two dimensions. (The Yang-Yang function is defined in Appendix~\ref{sY}; more precisely it matches with $2\pi\, \tilde{W}_\text{eff}$ up to a shift.)}
	\label{s5:tb:table}
\end{table}

\paragraph{More examples.} Let us step back for a moment. So far we have looked at \emph{one} particular \textsc{susy} gauge theory and found that its vacuum structure on the Coulomb branch corresponds to the $M$-particle sector of \textsc{xxx} spin chain. Is this just a coincidence? After all, one swallow does not make a summer.

In the first two papers~\cite{NSh09a,*NSh09b}, Nekrasov and Shatashvili showed that the Bethe/gauge correspondence can accommodate for \emph{much} more, and the dictionary can accordingly be enriched. The various integrable spin chains discussed at the end of Section~\ref{s4:ABA} all fit in nicely: quasiperiodic boundary conditions for the spin chain correspond to the inclusion of topological (`Fayet-Iliopoulos' and `vacuum-angle') terms in the gauge theory; inhomogeneities and local spins straightforwardly correspond to different values of the (`twisted') mass parameters, implicit in the above, on the `gauge' side; anisotropy is related to gauge theories in three or four dimensions, and more exotic types of `spin' complement gauge groups other than~$U(N_\text{c})$.

Further examples of the correspondence, where the quantum-integrable models involve long-range (rather than just nearest-neighbour) interactions, were provided in \cite{NSh10}. Gauge theories with more involved Standard Model-like gauge groups fit in too~\cite{NPSh13}, as do gauge theories on curved spaces~\cite{NSh14}. The general pattern of the Bethe/gauge correspondence is the following:
\begin{quote}
	Consider a two-dimensional gauge theory, with
	\begin{itemize}[noitemsep]
		\item effectively, at low energies, two-dimensional $\cN=(2,2)$ super-Poincar\'e invariance
		\item the appropriate matter content, and
		\item suitable values for the parameters (implicit in the above),
	\end{itemize}
	and determine the low-energy effective theory. Then the vacuum equations for the Coulomb branch coincide with the \textsc{bae} of a quantum-integrable model.
\end{quote}
Arguably this correspondence can be extended to encompass \emph{all} quantum-integrable models that are solvable via a Bethe ansatz.

\paragraph{Achievements.} Let us conclude with some words about the \emph{use} of the Bethe/gauge correspondence. The ideas that we have described so far are of course nice and perhaps unexpected: they relate two seemingly very different areas of theoretical and mathematical physics. However it should be stressed that to the extent described so far this correspondence is not predictive. Indeed, in order to establish it one has to start with the appropriate models on both sides and calculate the \textsc{bae} and the vacuum equations in order to set up a dictionary between the two sides. Thus, a lot of information is needed as input, and one may wonder to what novel results it may lead. Here are a few examples of its early achievements: 
\begin{itemize}
	\item In \cite{NSh10} the Bethe/gauge correspondence and related ideas were used to find so-called thermodynamic Bethe-ansatz (\textsc{tba}) equations for several long-range quantum-integrable models. These results later confirmed for the Toda chain with integrability techniques~\cite{KT11}.
	\item Many new (quantum) integrable models have been obtained from $ADE$-quiver gauge theories in four dimensions (in the $\Omega$-background)~\cite{NP12}.
	\item Dualities between supersymmetric gauge theories have been used obtain relations between different quantum-integrable models \cite{GK13,*CHK13}; some of these are already known in the integrability literature but others appear to be novel.
\end{itemize}
Thus, in short, the Bethe/gauge correspondence provides a way to translate knowledge about one side into statements about the other side, leading to new insights.

\appendix

\section{Completeness and the Yang-Yang function}\label{sY}

In Sections \ref{s2}, \ref{s3} and~\ref{s4} we have seen how, for a finite system with periodic boundary conditions, the coordinate and algebraic Bethe Ans\"atze lead to the \textsc{bae}. For the eigenvectors in $\cH_M\subseteq\cH$ these consist of $M$ coupled equations determining the allowed values of the parameters in the Bethe ansatz. An important question is whether the Bethe ansatz is \emph{complete}: does it give all $\binom{L}{M}$ independent eigenstates in $\cH_M$? Note that the problem is more complicated than `just' trying to count the number of solutions to the \textsc{bae}: not all such solutions necessarily lead to physically acceptable states; in particular, the resulting Bethe vectors must be \emph{normalizable}, i.e.\ have nonzero norm. The three possible situations are illustrated in Figure~\ref{sY:fg:completeness}. 

The issue of completeness was already investigated for the \textsc{xxx} model by Bethe~\cite[\textsection8]{Bet31}, but it has remained a topic of debate to the present day, with approaches ranging from heavy numerics to combinatorics and algebraic geometry, see e.g.\ \cite{KS14,*Bax01} and references therein. To understand why, notice that symmetries lead to degeneracies, while counting becomes harder by the presence of degeneracies in the spectrum. Indeed, since the Hamiltonian is hermitean, its eigenvectors are orthogonal for different eigenvalues. When there are no degeneracies in the spectrum, one can count the number of distinct eigenvalues to find the number of eigenvectors. 

From this point of view it is not surprising that the situation is better for the case with only partial isotropy. In this brief appendix we introduce an important tool for the study of completeness for the \textsc{xxz} model: the Yang-Yang function.

\begin{figure}[h]
	\centering
	\begin{minipage}{5cm}
		\centering
		\begin{tikzpicture}[scale=0.8,font=\small]
			\draw (0,0) circle (1.6cm);
			\fill[gray!30] (0,0) circle (1cm);
			\draw (0,0) circle (1cm) node {\textsc{bae}};
			\node[right] at (30:1.7) {spec$(H)$};
		\end{tikzpicture}
	\end{minipage}
	\begin{minipage}{5cm}
		\centering
		\begin{tikzpicture}[scale=0.8,font=\small]
			\begin{scope} 
				\clip (-1.2,0) circle (1.6cm);
				\fill[gray!30] (1.2,0) circle (1.6cm);
			\end{scope} 
			\draw (-1.2,0) circle (1.6cm);
			\draw (1.2,0) circle (1.6cm);
			\node at (-1.4,0) {spec$(H)$}; 
			\node at (1.4,0) {\textsc{bae}};
		\end{tikzpicture}
	\end{minipage}
	\begin{minipage}{5cm}
		\centering
		\begin{tikzpicture}[scale=0.8,font=\small]
			\fill[gray!30] (0,0) circle (1cm);
			\draw (0,0) circle (1.6cm);
			\draw (0,0) circle (1cm) node {spec$(H)$};
			\node[right] at (30:1.7) {\textsc{bae}};
		\end{tikzpicture}
	\end{minipage}
	\caption{The three typical cases in the completeness problem of the Bethe ansatz. The left shows the \emph{in}complete case, where the \textsc{bae} do not have enough solutions to obtain the full spectrum of the Hamiltonian. The right, instead, corresponds to the \emph{over}complete case, where all eigenvectors are of the Bethe form but the \textsc{bae} have more than $2^L$ solutions and it has to be determined which of those are physically acceptable. The middle depicts the intermediate case.}
	\label{sY:fg:completeness}
\end{figure}
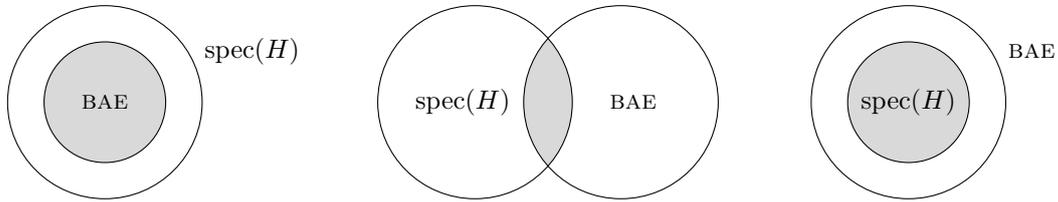

\paragraph{\textsc{xxz} case.} Consider the Bethe ansatz parametrized in terms of rapidities~$\Vector{\lambda}\,\in\mathbb{C}^M$ like at the end of Section~\ref{s2:results}; see also \eqref{s4:eq:u via lambda}. In logarithmic form the \textsc{bae} for the $M$-particle sector read
\begin{equation}\label{sY:eq:xxz BAE via lambda log form}
	L \, p(\lambda_m) = 2\pi I_m + \sum_{\substack{n=1 \\ n\neq m}}^M \Theta(\lambda_n,\lambda_m) \ , \qquad\qquad 1\leq m\leq M \ ,
\end{equation}
where $\Vector{I}=(I_1,\To,I_M)$ are the Bethe quantum numbers, cf.~\eqref{s2:eq:BAE M=2 log form}, and the two-body scattering phase~$\Theta$ depends on the rapidity difference and the anisotropy~$\Delta=\cos\gamma$, cf.~Exercise~\ref{s2:ex:arctan relation} in Section~\ref{s2:results}.

It may come as a surprise that the \textsc{bae} admit a `potential': there exists a function $Y\colon \mathbb{C}^M \longrightarrow \mathbb{C}$ such that \eqref{sY:eq:xxz BAE via lambda log form} is equivalent to the extremality conditions
\begin{equation}\label{sY:eq:YY potential}
	\frac{\partial Y(\Vector{\lambda})}{\partial \lambda_m} = 0 \ , \qquad\qquad 1 \leq m \leq M \ .
\end{equation}
This \emph{Yang-Yang function} or \emph{Yang-Yang action} can be defined as
\begin{equation}\label{sY:eq:xxz YY}
	Y(\Vector{\lambda}) \coloneqq L \sum_{m=1}^M \hat{p}(\lambda_m) - 2\pi\sum_{m=1}^M I_m \lambda_m + \frac{1}{2} \sum_{m,n=1}^M \hat{\Theta}(\lambda_n-\lambda_m) \ ,
\end{equation}
where
\begin{equation}\label{sY:eq:hats}
\begin{aligned}
	& \hat{p}(\lambda) = \int^\lambda \varphi_{1/2}(\mu) \, \D \mu \ , \qquad \hat{\Theta}(\lambda) = \int^\lambda \varphi_1(\mu) \, \D \mu \ , \\
	& \qquad \varphi_s(\lambda) \coloneqq \choice{\I}{\frac 1\I} \log \frac{\sinh(\lambda+\I\gamma s)}{\sinh(\lambda-\I\gamma s)} \ .
\end{aligned}
\end{equation}
\begin{exercise}
Check that the solutions of the \textsc{bae}~\eqref{sY:eq:xxz BAE via lambda log form} are precisely the critical points of \eqref{sY:eq:xxz YY}.
\end{exercise}

Yang and Yang employed $Y(\Vector{\lambda})$ to show that the \textsc{bae} admit a class of real solutions~\cite{YY66}.

\begin{thm}\label{sY:th:thm}
When $0<\Delta<1$ the \textsc{bae}~\eqref{sY:eq:xxz BAE via lambda log form} have a unique real solution $\Vector{\lambda}\in\mathbb{R}^M$ for any $\Vector{I}$.
\end{thm}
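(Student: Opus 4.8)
The plan is to prove the theorem entirely through the Yang-Yang function, turning a statement about solutions of a transcendental system into a statement about the critical points of a single real function. By the preceding exercise, the solutions of the \textsc{bae}~\eqref{sY:eq:xxz BAE via lambda log form} are \emph{exactly} the critical points of $Y$ in~\eqref{sY:eq:xxz YY}. Hence it suffices to show that, for $0<\Delta<1$, the restriction of $Y$ to $\mathbb{R}^M$ has a unique critical point for each fixed $\Vector{I}$. My strategy is the classical one: establish that $Y|_{\mathbb{R}^M}$ is strictly concave (its Hessian is negative definite everywhere) and coercive, so that it attains a unique global maximum, which is then its only critical point.

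First I would compute the gradient and Hessian explicitly. Writing $p\coloneqq\hat{p}'=\varphi_{1/2}$ and $\Theta\coloneqq\hat{\Theta}'=\varphi_1$, and using that $\varphi_s$ is odd (so $\hat\Theta$ is even and $\varphi_1(0)=0$), differentiating~\eqref{sY:eq:xxz YY} gives $\partial_m Y = L\,p(\lambda_m) - 2\pi I_m + \sum_{j\neq m}\Theta(\lambda_m-\lambda_j)$, which reproduces~\eqref{sY:eq:xxz BAE via lambda log form} as promised. A second differentiation yields the Hessian
\[
	H_{mn} \;=\; \delta_{mn}\Bigl( L\,\varphi_{1/2}'(\lambda_m) + \sum_{j\neq m}\varphi_1'(\lambda_m-\lambda_j) \Bigr) \;-\; (1-\delta_{mn})\,\varphi_1'(\lambda_m-\lambda_n) \ .
\]

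The heart of the argument is a sign analysis of these kernels. From~\eqref{sY:eq:hats} a direct computation (using $\sinh(\lambda+\I\gamma s)\sinh(\lambda-\I\gamma s)=\sinh^2\lambda+\sin^2(\gamma s)$) gives the explicit formula $\varphi_s'(\lambda) = -\sin(2\gamma s)\big/\bigl(\sinh^2\lambda+\sin^2(\gamma s)\bigr)$. For $s=\tfrac12$ and $s=1$ the numerators are $\sin\gamma$ and $\sin 2\gamma$, both strictly positive \emph{precisely} when $0<\gamma<\pi/2$, that is $0<\Delta=\cos\gamma<1$ — this is exactly where the hypothesis enters. Thus $\varphi_{1/2}',\varphi_1'<0$ throughout $\mathbb{R}$, so $H$ has negative diagonal and positive off-diagonal entries, and the row-wise surplus is $|H_{mm}|-\sum_{n\neq m}|H_{mn}| = L\,|\varphi_{1/2}'(\lambda_m)| > 0$. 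Hence $-H$ is strictly diagonally dominant with positive diagonal, so $H$ is negative definite at every $\Vector\lambda\in\mathbb{R}^M$, and $Y|_{\mathbb{R}^M}$ is strictly concave. Strict concavity immediately forces \emph{at most one} critical point, which disposes of uniqueness.

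What remains — and where I expect the real work to lie — is existence, i.e.\ coercivity $Y(\Vector\lambda)\to-\infty$ as $\|\Vector\lambda\|\to\infty$, after which the Weierstrass theorem delivers a (unique) global maximum that is the sought solution. Here I would use that $\varphi_{1/2}$ and $\varphi_1$ tend to finite nonzero limits as $\lambda\to\pm\infty$, so $\hat p$ and $\hat\Theta$ grow linearly, and then check that along every ray $\Vector\lambda=t\,v$ the leading linear coefficient of $Y(t v)$ is strictly negative, the dominant contribution $L\sum_m\hat p(\lambda_m)$ controlling the linear term $-2\pi\sum_m I_m\lambda_m$. The main obstacle is twofold: keeping careful track of the branch of the multivalued logarithm in~\eqref{sY:eq:hats} so that $p$ and $\Theta$ are taken as the continuous odd determinations with $\varphi_s(0)=0$ (the principal branch jumps and would spoil monotonicity), and verifying that the admissible range of the quantum numbers $\Vector I$ is compatible with the bounded image of $p$ so that the directional estimate is genuinely negative in all directions. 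Once coercivity is in hand, strict concavity plus attainment of the maximum completes the proof.
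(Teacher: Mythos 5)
Your proposal is correct and follows the same route as the paper: solutions of \eqref{sY:eq:xxz BAE via lambda log form} are identified with the critical points of the Yang--Yang function \eqref{sY:eq:xxz YY}, negativity of the Hessian on $\mathbb{R}^M$ for $0<\gamma<\pi/2$ gives strict concavity and hence \emph{at most one} real solution, and existence is acknowledged as the delicate step. The only technical difference is how negative definiteness is established: the paper rewrites the quadratic form as $\sum_{m,n}\partial_m\partial_n Y\,v^m v^n=\sum_m L\,\varphi_{1/2}'(\lambda_m)\,v_m^2+\tfrac12\sum_{m,n}\varphi_1'(\lambda_m-\lambda_n)\,(v_m-v_n)^2$, manifestly negative once $\varphi_{1/2}',\varphi_1'<0$, whereas you invoke strict diagonal dominance of $-H$; since all off-diagonal entries of $H$ have one sign, the two arguments are equivalent --- your row-sum surplus $L\,\lvert\varphi_{1/2}'(\lambda_m)\rvert$ is exactly the coefficient of $v_m^2$ left over in the paper's identity, so neither buys more than the other here. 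Two points in your favour: your formula $\varphi_s'(\lambda)=-\sin(2\gamma s)\big/\bigl(\sinh^2\lambda+\sin^2(\gamma s)\bigr)$ is the correct one (the paper's displayed $\varphi_s'$ carries a typo, $\sin(\gamma s)$ in place of $\sin(2\gamma s)$), and only your version makes visible where the hypothesis $0<\Delta<1$ actually bites, namely through $\sin 2\gamma>0$ at $s=1$; for $-1<\Delta<0$ one has $\varphi_1'>0$ and the concavity argument breaks down, which the typo'd formula would obscure. On existence you are in the same position as the paper, which does not prove it either but defers to \cite[\textsection4]{YY66} after warning, via $y(\lambda)=-\E^\lambda$, of exactly the escape-to-infinity phenomenon you flag; your coercivity plan --- linear growth of $\hat p,\hat\Theta$ with the continuous odd branch, directional slope estimates along rays, and compatibility of $\Vector{I}$ with the bounded range of $p$ --- is precisely the content of that reference, and the $\Vector{I}$-dependence of the asymptotic slopes is indeed where the real care is required.
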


\begin{proof}
The \emph{uniqueness} is proven with a convexity argument, see also \cite[\textsection{II.1}]{KBI93}, which only works when $\Vector{\lambda}\,$ is real. Using
\begin{equation}
	\varphi_s'(\lambda) = \choice{\I}{\frac 1\I} \, \frac{-\I \sin(\gamma s)}{\sinh(\lambda+\I\gamma s)\sinh(\lambda-\I\gamma s)} = \choice{}{-} \frac{ 2 \, \sin(\gamma s)}{\cosh(2 \lambda)- \cos(2 \gamma s)}
\end{equation}
it is easy to see that the Hessian matrix~$\partial_m\partial_n Y$ is \choice{positive}{negative} when $0<\gamma<\pi/2$: then
\begin{equation}
	\sum_{m,n=1}^M \frac{\partial^2 Y(\Vector{\lambda})}{\partial \lambda_m \partial \lambda_n} \,v^m \,v^n = \sum_{m=1}^M  L \, \varphi_{1/2}'(\lambda_m) \, v_m^2 + \frac{1}{2} \sum_{m,n=1}^M \varphi_1'(\lambda_m -\lambda_n) \, (v_m-v_n)^2 \choice{>}{<} 0 
\end{equation}
for any $\Vector{\lambda}, \Vector{v}\in\mathbb{R}^M$. This implies that $Y$ has at most one critical point, which, if it exists, is a global \choice{minimum}{maximum}. The \emph{existence} of this critical point, however, requires some care: one has to show that the critical point does not run away to infinity, as happens e.g.\ for the convex function $y(\lambda)=\choice{}{-}\E^\lambda$ on $\mathbb{R}$. The proof can be found in \cite[\textsection4]{YY66}.
\end{proof}

Now by construction the Bethe ansatz is symmetric in the $\lambda_m$, so solutions to the \textsc{bae} that only differ by a permutation (relabelling) of the $\lambda_m$ correspond to the same Bethe vector; such solutions should only be counted once. Since the \textsc{bae} \eqref{sY:eq:xxz BAE via lambda log form} are symmetric under simultaneous interchange of $\lambda_m\leftrightarrow \lambda_n$ and $I_m\leftrightarrow I_n$, this amounts to taking into account only Bethe quantum numbers~$\Vector{I}$ with $0\leq I_1 \leq \cdots \leq I_M\leq L-1$. Moreover, by the Pauli exclusion principle, Bethe vectors vanish whenever two rapidities coincide, so in fact it suffices to consider $0\leq I_1 < \cdots < I_M \leq L-1$. There are precisely $\binom{L}{M}$ such choices for $\Vector{I}$, and by the theorem there is a unique \emph{real} solution~$\Vector{\lambda}\in\mathbb{R}^M$ for each of these. Since the corresponding energies~\eqref{s2:eq:xxz E via lambda} are different, this goes a long way towards a proof of the completeness for the regime $0<\Delta<1$ of the \textsc{xxz} spin chain.

A few years later Yang and Yang applied the same technique to analyze the (simpler) one-dimensional Bose gas~\cite{YY69}; see also \cite[\textsection{I.2}]{KBI93}. It is worth mentioning that $Y(\Vector{\lambda})$ also features in \emph{Gaudin's hypothesis}, which says that the (square) norm of the Bethe wave function can be computed as the determinant of the Hessian matrix $\partial_m\partial_n Y$, see e.g.~\cite[\textsection{X}]{KBI93}.

\paragraph{\textsc{xxx} case.} In the isotropic case the functions $\varphi_s$ in the Yang-Yang function have a simple primitive (cf.~Exercise~\ref{s2:ex:isotropic limit} at the end of Section~\ref{s2:results}), and the integrals in \eqref{sY:eq:hats} can be performed explicitly:
\begin{equation}
	\int^\lambda \varphi_s|_{\Delta=1}(\mu) \, \D \mu  = \choice{\I}{\frac 1\I} \big[ (\lambda+\I s) \log(\lambda+\I s) - (\lambda-\I s) \log(\lambda-\I s) \big] + \text{const} \ ,
\end{equation}
where the constant on the right-hand side does not affect the \textsc{bae}~\eqref{sY:eq:YY potential}. However, the Yang-Yang function is no longer convex for $\Delta=1$, and Theorem~\ref{sY:th:thm} does not apply. However, when the degeneracies of the \textsc{xxx} spin chain are lifted by turning on `generic' quasiperiodic boundary conditions or inhomogeneities (see the end of Section~\ref{s4:ABA}), completeness of the Bethe ansatz \emph{can} be proven~\cite{MTV07}.

\section{Computations for the $M$-particle sector}\label{sM}

In this appendix the \textsc{cba} from Section~\ref{s2:method} is worked out for the $M$-particle sector of the \textsc{xxz} spin chain to derive the results quoted in Section~\ref{s2:results}. Recall that the \textsc{cba}
\begin{equation}\label{sM:eq:CBA M}
	\Psi_{\Vector{p}}(\Vector{l}) = \sum_{\pi\in S_M} A_\pi(\Vector{p}) \E^{\choice{-}{}\I\Vector{p}_\pi \!\cdot \Vector{l}} \ , \qquad\qquad l_1 < \cdots < l_M \ 
\end{equation}
for the wave functions in the $M$-particle sector,
\begin{equation}\label{sM:eq:M vector via coord basis}
	\ket{\Psi_M;p} = \sum_{1\leq l_1<\cdots<l_M \leq L} \!\! \Psi_p(l_1,\To,l_M) \,  \ket{l_1,\To,l_M} \in \cH_M \ ,
\end{equation}
yields eigenstates of the Hamiltonian if the equations
\begin{equation}\label{sM:eq:xxz eqns}
	\bra{\Vector{l}}H_\textsc{xxz}\ket{\Psi_M;\Vector{p}} = E_M(\Vector{p}) \, \Psi_{\Vector{p}}(\Vector{l}) \ , \qquad 1 \leq l_1 < \cdots < l_M \leq L \ .
\end{equation}
can be solved. Our goal is to find the energies~$E_M(\Vector{p})$, the coefficients~$A_\pi(\Vector{p})$ and equations determining the values of the parameters~$\Vector{p}$. Let us write $N_{\Vector{l}}\in\mathbb{N}$ for the number of pairs of neighbouring excitations in the configuration~$\Vector{l}$ of excited spins, i.e.\ $N_{\Vector{l}} \coloneqq \# \{ l_n \mid l_{n+1} = l_n + 1 \}$, so $0\leq N_{\Vector{l}} \leq M-1$ (unless $M=L=N_{\Vector{l}}$). For example, Figure~\ref{s2:fg:spin chain} from Section~\ref{s2:spin chains} has $N_{\Vector{l}}=1$. In terms of this notation the strategy (cf.\ Section~\ref{s2:method}) is as follows:
\begin{enumerate}
	\item[0.] Compute the left-hand side of \eqref{sM:eq:xxz eqns}.
	\item Solve~\eqref{sM:eq:xxz eqns} for the energy contribution $\varepsilon_M(\Vector{p})\coloneqq E_M(\Vector{p})-E_0$ as a function of~$\Vector{p}$ by considering configurations~$\Vector{l}$ with well-separated excitations ($N_{\Vector{l}}=0$).
	\item Solve~\eqref{sM:eq:xxz eqns} for the $A_\pi(\Vector{p})/A_e(\Vector{p})$ as functions of~$\Vector{p}$ by considering $\Vector{l}$ with $N_{\Vector{l}}\geq 1$. (Luckily it will turn out that it suffices to consider a single pair of neighbouring excitations.)
	\item Impose periodic boundary conditions to get the \textsc{bae} for the allowed values of~$\Vector{p}$.
\end{enumerate}
As in Section~\ref{s2:results} we put $\hbar=J=1$. At the end of this appendix we comment on the six-vertex case and give the strategy to derive the results given in Section~\ref{s3:results}.

\paragraph{Warm-up: $M=2$.} It is instructive to work out the strategy for the two-particle sector before tackling the general case.

\subparagraph{Step 0.} Expand the vector $\ket{\Psi_2;\Vector{p}}$ via the coordinate basis of~$\cH_2$ as in~\eqref{sM:eq:M vector via coord basis}:
\begin{equation}\label{sM:eq:M=2 vector via coord basis}
	\ket{\Psi_2;\Vector{p}} = \sum_{1\leq l_1<l_2 \leq L} \Psi_{\Vector{p}}(l_1,l_2) \, \ket{l_1,l_2} \in\cH_2 \ .
\end{equation}
We compute $H_\textsc{xxz}\ket{l_1,l_2}$ using \eqref{s2:eq:su(2) ladder} and~\eqref{s2:eq:Ham xxz via S^pm}. The first two terms in \eqref{s2:eq:Ham xxz via S^pm} give $\sum_k S^\pm_k S^\mp_{k+1} \ket{l_1,l_2} = \ket{l_1\pm 1,l_2} + \ket{l_1,l_2\mp 1}$, where one of the two vectors on the right-hand side vanishes when the excitations are next to each other. The third term, $\sum_k S^z_k S^z_{k+1}$, multiplies $\ket{l_1,l_2}$ by $(L/4-2)$ if $l_1$ and $l_2$ are well separated, and by $(L/4-1)$ if they are neighbours.

\subparagraph{Step 1.} In the well-separated case, $N_{\Vector{l}}=0$, \eqref{sM:eq:xxz eqns} gives
\begin{equation}\label{sM:eq:xxz eqns well separated M=2}
\begin{aligned}
	2\,\varepsilon_2(\Vector{p})\,\Psi_{\Vector{p}}(l_1,l_2) = 4\,\Delta\,\Psi_{\Vector{p}}(l_1,l_2) & -\Psi_{\Vector{p}}(l_1-1,l_2) -\Psi_{\Vector{p}}(l_1+1,l_2) \\ & -\Psi_{\Vector{p}}(l_1,l_2-1) -\Psi_{\Vector{p}}(l_1,l_2+1) \ .
\end{aligned}
\end{equation}
These difference equations have to be satisfied by the wave function for all pairs $l_1 < l_2 - 1$. Plugging in the \textsc{cba}
\begin{equation}\label{sM:eq:CBA M=2}
	\Psi_{p_1,\,p_2}(l_1,l_2) = A_e(p_1,p_2)\, \E^{\choice{-}{}\I (p_1 l_1 + p_2 l_2)} + A_\tau (p_1,p_2)\, \E^{\choice{-}{}\I  (p_1 l_2 + p_2 l_1)} \ , \qquad\qquad l_1 < l_2 \ ,
\end{equation}
immediately yields the result 
\begin{equation}\label{sM:eq:xxz energy M=2}
	\varepsilon_2(p_1,p_2) = 2\,\Delta-\cos p_1 -\cos p_2 = \varepsilon_1(p_1) + \varepsilon_1(p_2)  \ .
\end{equation}

\begin{equation}\label{sM:eq:xxz eqns neighbouring M=2}
	2\,\varepsilon_2(\Vector{p})\,\Psi_{\Vector{p}}(l,l+1) = 2\,\Delta\,\Psi_{\Vector{p}}(l,l+1) - \Psi_{\Vector{p}}(l-1,l+1) - \Psi_{\Vector{p}}(l,l+2) \ .
\end{equation}
These equations can be solved using a trick exploiting the similarity between \eqref{sM:eq:xxz eqns neighbouring M=2} and~\eqref{sM:eq:xxz eqns well separated M=2}. 
\begin{exercise}
Check that \eqref{sM:eq:xxz eqns well separated M=2} is satisfied by $\Psi_{\Vector{p}}$ from \eqref{sM:eq:CBA M=2} and $\varepsilon_2(\Vector{p})$ from \eqref{s2:eq:xxz energy M=2} \emph{independently} of the values of $l_1$ and $l_2$.
\end{exercise}
In particular, given \eqref{sM:eq:CBA M=2} and \eqref{sM:eq:xxz energy M=2}, \eqref{sM:eq:xxz eqns well separated M=2} holds true \emph{even} when $N_{\Vector{l}}=1$. When $l_2=l_1+1$, however, the right-hand side of \eqref{sM:eq:xxz eqns well separated M=2} features wave functions with equal arguments, which are not defined. Although such $\Psi_{\Vector{p}}(l,l)$ are not physical --- they do not enter \eqref{sM:eq:M=2 vector via coord basis} --- the trick is to \emph{extend} the Bethe wave function to the diagonal $l_1 = l_2$ using the formula~\eqref{sM:eq:CBA M=2}. In this way nothing changes for $N_{\Vector{l}}=0$, while \eqref{sM:eq:xxz eqns well separated M=2} with $l_2=l_1+1$ gives extra information that we can use to solve~\eqref{sM:eq:xxz eqns neighbouring M=2}.

To see if the so-extended \textsc{cba} does indeed satisfy \eqref{sM:eq:xxz eqns neighbouring M=2} we subtract \eqref{sM:eq:xxz eqns well separated M=2} with~$l_2=l_1+1$ from \eqref{sM:eq:xxz eqns neighbouring M=2} to get
\begin{equation}\label{sM:eq:xxz eqns new M=2}
	2\,\Delta \, \Psi_{\Vector{p}}(l,l+1) = \Psi_{\Vector{p}}(l,l) + \Psi_{\Vector{p}}(l+1,l+1) \ .
\end{equation}
Up to an overall normalization the coefficients in the \textsc{cba}~\eqref{sM:eq:CBA M=2} are determined by these equations: \eqref{sM:eq:xxz eqns new M=2} is satisfied if the two-body $S$-matrix~\eqref{s2:eq:S-matrix} is the $1\times1$ matrix given by the result~\eqref{s2:eq:xxz S-matrix} from Section~\ref{s2:results}.

\subparagraph{Step 3.} We already obtained the \textsc{bae} for $M=2$ in Section~\ref{s2:method}, see~\eqref{s2:eq:BAE M=2}.

\paragraph{General $M$.} We carry out the strategy following 1\cite[\textsection8.4]{Bax07}. The reader is advised to compare each step with the corresponding step for the case $M=2$. 

\subparagraph{Step 0.} The equations \eqref{sM:eq:xxz eqns} are computed for the $M$-particle sector as for $M=2$. The result can be compactly written as
\begin{equation}\label{sM:eq:xxz eqns M}
	2\,\varepsilon_M(\Vector{p}) \, \Psi_{\Vector{p}}(\Vector{l}) = 2\bigl(M-N_{\Vector{l}}\bigr) \Delta \Psi_{\Vector{p}}(\Vector{l}) - \sideset{}{'}\sum_{\Vector{k}} \Psi_{\Vector{p}}(\Vector{k}) \ ,
\end{equation}
where the prime indicates that the sum runs over the $2\,(M-N_{\Vector{l}})$ configurations~$\Vector{k}$ obtained from $\Vector{l}$ by letting any single excited spin hop to an unexcited neighbour. 
\begin{exercise}
Check that \eqref{sM:eq:xxz eqns M} contains both \eqref{sM:eq:xxz eqns well separated M=2} and \eqref{sM:eq:xxz eqns neighbouring M=2} for $M=2$.
\end{exercise}

\subparagraph{Step~1.} For $N_{\Vector{l}}=0$ we have to solve
\begin{equation}\label{sM:eq:xxz eqns M for N=0}
	2\,\varepsilon_M(\Vector{p}) \, \Psi_{\Vector{p}}(\Vector{l}) = 2M \Delta \Psi_{\Vector{p}}(\Vector{l}) - \sideset{}{'}\sum_{\Vector{k}} \Psi_{\Vector{p}}(\Vector{k}) \ ,
\end{equation}
where the sum runs over $2M$ configurations. Plugging in the \textsc{cba}~\eqref{sM:eq:CBA M} it is easy to see that \eqref{sM:eq:xxz eqns M for N=0} is satisfied provided the energy contribution is given by the result~\eqref{s2:eq:xxz energy M} quoted in Section~\ref{s2:results}.

\subparagraph{Step~2.} Here the real work begins. Repeating the trick of extending the \textsc{cba} to $l_1 \leq l_2 \leq \cdots \leq l_M$ the equations for $N_{\Vector{l}}=0$ can again be used to simplify those for $N_{\Vector{l}}\geq 1$. Indeed, the extended $\Psi_{\Vector{p}}(\Vector{l})$ satisfies \eqref{sM:eq:xxz eqns M for N=0}, still involving a sum over $2M$ configurations, also for all (nonphysical) $\Vector{l}$ with at least one pair $l_m = l_{m+1}$. Subtracting that equation from \eqref{sM:eq:xxz eqns M} we obtain
\begin{equation}\label{sM:eq:xxz eqns new M}
	2\,N_{\Vector{l}}\ \Delta\, \Psi_{\Vector{p}}(\Vector{l}) = \sideset{}{''}\sum_{\Vector{k}} \Psi_{\Vector{p}}(\Vector{k}) \ ,
\end{equation}
where the sum now runs over the $2N_{\Vector{l}}$ configurations obtained from $\Vector{l}$ by moving one excitation in any single pair of neighbours on top of the other one. For example, when $N_{\Vector{l}}=1$, say with $l_{n+1}=l_n+1$, \eqref{sM:eq:xxz eqns new M} boils down to a simple generalization of~\eqref{sM:eq:xxz eqns new M=2}:
\begin{equation}\label{sM:eq:xxz eqns new M for N=1}
\begin{aligned}
	2\,\Delta\, \Psi_{\Vector{p}}(l_1,\To,l_n,l_n+1,\To,l_M) = \ &  \Psi_{\Vector{p}}(l_1,\To,l_n,l_n,\To,l_M) \\ & + \Psi_{\Vector{p}}(l_1,\To,l_n+1,l_n+1,\To,l_M) \ .
\end{aligned}
\end{equation}

Usually \eqref{sM:eq:xxz eqns new M} contains many more equations than there are unknowns (the functions $A_\pi$ and the values of $\Vector{p}$), so our task seems daunting. Luckily the equations that we have to solve are simplified by the following observation, exploiting the similarity between \eqref{sM:eq:xxz eqns new M} and \eqref{sM:eq:xxz eqns new M for N=1}. Suppose that we would be able to solve \eqref{sM:eq:xxz eqns new M for N=1} not just for $N_{\Vector{l}}=1$, but for \emph{any} $N_{\Vector{l}}\geq 1$. Because \eqref{sM:eq:xxz eqns new M} can be recognized as the sum of $N_{\Vector{l}}$ copies of \eqref{sM:eq:xxz eqns new M for N=1}, one for each pair of neighbouring excitations, then all other equations in \eqref{sM:eq:xxz eqns new M} would automatically be satisfied as well! Although this observation does not change the number of equations that we have to solve, the new equations all have the same form, which moreover is very similar to that of \eqref{sM:eq:xxz eqns new M=2}.

Thus we focus on \eqref{sM:eq:xxz eqns new M for N=1} for some $1\leq n\leq M-1$, where $\Vector{l}$ may or may not contain additional pair of neighbours. (The special case $n=M$, for which $l_M=L$ is next to $l_1=1$, corresponds to periodic boundary conditions. This is tackled in step~3 below.)
\begin{exercise}
Let us abbreviate $\Vector{k} \coloneqq (l_1,\To,l_n,l_n,\To,l_M)$. Plug the \textsc{cba}~\eqref{sM:eq:CBA M} into \eqref{sM:eq:xxz eqns new M for N=1} to find
\begin{equation}\label{sM:eq:xxz eqns new M for N=1 via s}
	\sum_{\pi\in S_M} s(p_{\pi(n)},p_{\pi(n+1)}) \, A_\pi(\Vector{p}) \E^{\choice{-}{}\I \Vector{p}_\pi \!\cdot \Vector{k}} = 0 \ , \qquad s(p,p') \coloneqq 1 -2\,\Delta \E^{\choice{-}{}\I p'} + \E^{\choice{-}{}\I (p+p')} \ .
\end{equation}
\end{exercise}
Not all $\E^{\choice{-}{}\I \Vector{p}_\pi \!\cdot \Vector{k}}$, $\pi\in S_M$, are independent. Indeed, since $k_n=k_{n+1}$ these exponentials only contain the quasimomenta $p_{\pi(n)}$ and $p_{\pi(n+1)}$ in the combination $p_{\pi(n)}+p_{\pi(n+1)}$. Writing $\tau_n\coloneqq(n,n+1)\in S_M$ for the transposition switching $n\leftrightarrow n+1$, this means that $\E^{\choice{-}{}\I \Vector{p}_\pi \!\cdot \Vector{k}}=\E^{\choice{-}{}\I \Vector{p}_{\pi'}\cdot\Vector{k}}$ when $\pi= \pi'\circ\tau_n$. Thus the terms in \eqref{sM:eq:xxz eqns new M for N=1 via s} come in pairs, and we find 
\begin{equation}\label{sM:eq:A via S}
	\frac{A_{\pi\tau_n}(\Vector{p})}{A_\pi(\Vector{p})} = - \frac{s(p_{\pi(n)},p_{\pi(n+1)})}{s(p_{\pi(n+1)},p_{\pi(n)})} = S(p_{\pi(n)},p_{\pi(n+1)}) \ .
\end{equation}
This is the $M$-particle generalization of equation \eqref{s2:eq:xxz S-matrix} for the two-body $S$-matrix. In summary, for each $1\leq n\leq M-1$ we obtain $M!/2$ equations \eqref{sM:eq:A via S}.

Any permutation $\pi\in S_M$ admits a (non-unique) decomposition as a product of transpositions interchanging neighbouring $p_m$, so repeated application of \eqref{sM:eq:A via S} allows us to express $A_\pi(\Vector{p})$ in terms of products of two-body $S$-matrices and an overall normalization $A_e(\Vector{p})$, with $e\in S_M$ denoting the identity. The non-uniqueness of this decomposition leads to new compatibility conditions, which are always satisfied since the two-body $S$-matrix is a scalar quantity. 
\begin{exercise}
Check that $\pi=(1,3)\in S_M$ can be decomposed as $\pi=\tau_1\circ\tau_2\circ\tau_1=\tau_2 \circ\tau_1 \circ\tau_2$. Carefully apply \eqref{sM:eq:A via S} to find $A_{(1,3)}(\Vector{p})$. Compare the result with  Figure~\ref{s2:fg:factorized scattering} from Section~\ref{s3:results}.
\end{exercise}

Up to an overall normalization, the (unique) general solution of~\eqref{sM:eq:A via S} is
\begin{equation}\label{sM:eq:A solution}
	\frac{A_\pi(\Vector{p})}{A_e(\Vector{p})} \ = \ \sgn(\pi)\prod_{1\leq m < m' \leq M} \frac{s(p_{\pi(m')},p_{\pi(m)})}{s(p_{m'},p_m)} \ = \ \prod_{(m,m') \in \inv(\pi)} S(p_m,p_{m'}) \ ,
\end{equation}
where $\inv(\pi)\coloneqq \{1\leq m<m'\leq M \mid \pi(m)>\pi(m') \}$ is the set of inversions for $\pi$. Thus we have derived the result \eqref{s2:eq:A solution}.
\begin{exercise}
As a warm-up take $\pi=\tau_n$ and check that \eqref{sM:eq:A solution} solves \eqref{sM:eq:A via S}. For the general case verify the solution by splitting the product into four parts, depending on whether or not $m$ and~$m'$ lie in $\{n,n+1\}$. Use $\sgn(\pi)=(-1)^{\#\inv(\pi)}$ to verify the second equality in \eqref{sM:eq:A solution}.
\end{exercise}

\subparagraph{Step 3.} It remains to impose periodic boundary conditions on the Bethe wave function. Indeed, when working with the coordinate basis~\eqref{s2:eq:M-part basis} for $\cH_M$ a subtlety arises because of periodicity. To avoid linear dependence amongst the vectors in \eqref{sM:eq:M vector via coord basis} we have ordered the positions of the excited spins as~$l_1<l_2<\cdots<l_M$. However, the circle $\mathbb{Z}_L$ does not possess an ordering. In the above we have implicitly chosen representatives in $\{1,2,\To,L\}\subseteq\mathbb{Z}$ for the sites~$l_m \in \mathbb{Z}_L$ (in a more pictorial language: we have cut open the circle between sites $L$ and~$1$). Of course we could have chosen to cut $\mathbb{Z}_L$ at any other point; for example, a cut just after~$l_1$ corresponds to choosing representatives $\{l_1+1,l_1 +2,\To,l_1+L\}$ and yields the ordering $l_2<l_3<\cdots<l_M<l_1+L$. Independence of the choice of representatives for~$\mathbb{Z}_L$ thus requires $\ket{l_2,\To,l_M,l_1+L} = \ket{l_1,\To,l_M}$. This just expresses the periodic boundary conditions~$\Vector{S}_{l+L}=\Vector{S}_l$ in terms of the coordinate basis. 

The upshot is that any vector $\ket{\Psi_M} \in \cH_M$ may be expressed in terms of the coordinate basis~\eqref{s2:eq:M-part basis} as in \eqref{sM:eq:M vector via coord basis} provided periodicity is imposed on the wave function as $\Psi(l_M-L,l_1,\cdots,l_{M-1}) = \Psi(l_1,\To,l_M)$ or, equivalently,
\begin{equation}\label{sM:eq:cyclicity coeff}
	\Psi(l_2,\To,l_M,l_1+L) = \Psi(l_1,\To,l_M) \ , \qquad\qquad l_1<\cdots<l_M \ .
\end{equation}

Write $\sigma=(12\cdots M)\in S_M$ for the cyclic permutation. For the Bethe wave function~\eqref{sM:eq:CBA M} we get $\Psi_{\Vector{p}}(l_2,\To,l_M,l_1+L) = \sum_{\pi'}  A_{\pi'}(\Vector{p}) \, \E^{\choice{-}{}\I p_{\pi'(1)} L} \, \E^{\choice{-}{}\I \Vector{p}_{\pi'}\cdot\Vector{l}}$, where $\pi'\coloneqq \pi\circ\sigma^{-1}$. Dropping these primes and equating coefficients in \eqref{sM:eq:cyclicity coeff} we obtain $M!$ \textsc{bae}:
\begin{equation}\label{sM:eq:BAE M general}
	\E^{\choice{-}{}\I p_{\pi(1)} L} = \frac{A_\pi(\Vector{p})}{A_{\pi\sigma}(\Vector{p})} \ , \qquad\qquad \pi \in S_M \ .
\end{equation}
\begin{exercise}
Plug in the solution \eqref{sM:eq:A solution} for $A_\pi$. Check that factors with $2\leq m'<m\leq L$ in the numerator cancel those with $1\leq m'<m\leq L-1$ in the denominator. Note that the result is symmetric in the $p_{\pi(2)},\To,p_{\pi(M)}$ so that the equations are the same for any two~$\pi,\pi'\in S_M$ with the same value~$\pi(1)=\pi'(1)$. Thus it suffices to consider transpositions of the form $\pi=(1,n)\in S_M$. Check that this yields the \textsc{bae} \eqref{s2:eq:xxz BAE M} for the $M$-particle sector:
\begin{equation}\label{sM:eq:xxz BAE M}
	\E^{\choice{-}{}\I p_m L} = (-1)^{M-1} \prod_{\substack{n=1 \\ n\neq m}}^M \frac{s(p_n,p_m)}{s(p_m,p_n)} = \prod_{\substack{n =1 \\ n\neq m}}^M S(p_n,p_m) \ , \qquad\qquad 1 \leq m \leq M \ .
\end{equation}
\end{exercise}
Thus we have derived all results quoted in Section~\ref{s2:results}.

\begin{exercise}
Work out the \textsc{cba} for the \textsc{xxz} in an external magnetic field, see Exercise~\ref{s2:ex:external magnetic field} in Section~\ref{s2:spin chains}, starting with the cases $M=0,1,2$ before attacking the general case.
\end{exercise}

\paragraph{Six-vertex case.} To conclude this appendix we briefly turn to the strategy to work out the \textsc{cba} for the six-vertex model. Thus, the goal now is to use 
\begin{equation}\label{sM:eq:6v CBA M}
	\svPsi_{\Vector{z}}(\Vector{l}) = \sum_{\pi\in S_M} \svA_\pi(\Vector{z}) \, \Vector{z}_\pi {}^{\Vector{l}} \ , \qquad \Vector{z}_\pi {}^{\Vector{l}} \coloneqq \prod_{m=1}^M (z_{\pi(m)})^{l_m} \ , \qquad\qquad l_1 < \cdots < l_M \ .
\end{equation}
This produces eigenvectors for the transfer matrix provided we can solve the equations
\begin{equation}\label{sM:eq:6v eqns}
	\sum_{\Vector{k}} \bra{\Vector{l}} \, t \, \ket{\Vector{k}} \, \svPsi_{\Vector{z}}(\Vector{k}) = \bra{\Vector{l}} \, t \, \ket{\svPsi_M; \Vector{z}} = \Lambda_M(\Vector{z}) \, \svPsi_{\Vector{z}}(\Vector{l}) \ , \qquad\qquad 1\leq l_1 < \cdots < l_M \leq L \ .
\end{equation}
The left-hand side of \eqref{sM:eq:6v eqns} is much more complicated than its \textsc{xxz}-analogue in \eqref{sM:eq:xxz eqns M}. This requires more care in formulating the strategy for using the \textsc{cba} in this case:
\begin{enumerate}
	\item[0.] Rewrite the left-hand side of \eqref{sM:eq:6v eqns} by summing geometric series in the $z_m$,
	\begin{equation}\label{sM:eq:geometric}
		\sum_{k=n}^N z^k = \frac{z^n}{1-z} - \frac{z^{N+1}}{1-z} \ .
	\end{equation}
	\item Focus on the \emph{wanted} terms, involving $\Vector{z}_\pi {}^{\Vector{l}}$ like the \textsc{cba}, to find $\Lambda_M(\Vector{z})$.
	\item Consider unwanted \emph{internal} terms, containing $(z_n z_{n+1})^{l_n}$ or $(z_n z_{n+1})^{l_{n+1}}$. Demand that these cancel to get \eqref{sM:eq:xxz eqns new M for N=1 via s}, with $\svp_m = \choice{}{-} \I \log(z_m)$ instead of $p_m$, and proceed as for the \textsc{xxz} spin chain to find $\svA_\pi(\Vector{z})$.
	\item Demand that the unwanted \emph{boundary} terms, obtained in \eqref{sM:eq:geometric} when $n=1$ or $N=L$, cancel to get \eqref{sM:eq:BAE M general} in terms of the $z_m$. The \textsc{bae} for the allowed values of~$\Vector{z}$ are then found as above.
\end{enumerate}
The details can be found in \cite[\textsection8.3--8.4]{Bax07} (cf.\ the footnote at the end of Section~\ref{s3:method}).
\begin{exercise}
Compare this strategy with that for the \textsc{xxz} model keeping in mind Exercise~\ref{s3:ex:t vs H} in Section~\ref{s3:method}.
\end{exercise}

\section{Solving the \textsc{fcr}}\label{sR}

Theorem~\ref{s4:th:FCR thm} from Section~\ref{s4:YBA} tells us that if we can find an $R$-matrix satisfying the fundamental commutation relations (\textsc{fcr}) then the corresponding transfer matrices commute, resulting in conserved quantities for the \textsc{xxz} and six-vertex models, cf.\ Sections \ref{s3:results} and~\ref{s4:conserved quantities}. Consider two monodromy matrices depending on different sets of vertex weights, such that the corresponding Lax operators are given by
\begin{equation}\label{sR:eq:Lax operator matrix}
	L_{al} \ = \
	\tikz[baseline={([yshift=-.5*11pt*0.8+8pt]current bounding box.center)}, 
		scale=0.8,font=\scriptsize]{
		\draw[->] (0,1) node[left]{$a$} -- (2,1);
		\draw[->] (1,0) node[below]{$l$} -- (1,2);
	} 
	\ = \ \begin{pmatrix} a & & & \\ & b & c & \\ & c & b & \\ & & & a \end{pmatrix}_{al} \ , \qquad
	L_{bl} \ = \ 
	\tikz[baseline={([yshift=-.5*11pt*0.8+8pt]current bounding box.center)}, 
		scale=0.8,font=\scriptsize]{
		\draw[->] (0,1) node[left]{$b$} -- (2,1);
		\draw[->] (1,0) node[below]{$l$} -- (1,2);
	}
	\ = \ \begin{pmatrix} a' & & & \\ & b' & c' & \\ & c' & b' & \\ & & & a' \end{pmatrix}_{bl}
\end{equation}
with respect to the standard bases of $V_a \otimes V_l$ and $V_b \otimes V_l$, as in Section~\ref{s4:conserved quantities}. The entries are vertex weights of two different six-vertex models. 

In this appendix we follow Baxter~\cite[\textsection9.6--9.7]{Bax07} to obtain an $R$-matrix solving the \textsc{fcr} 
\begin{equation}\label{sR:eq:FCR}
	R_{ab} \, L_{al} \, L_{bl} = L_{bl} \, L_{al} \, R_{ab} \ ,
\end{equation}
or in graphical notation
\begin{equation}\label{sR:eq:FCR graphical}
	\tikz[baseline={([yshift=-.5*11pt*0.8+8pt]current bounding box.center)},
	scale=0.8,font=\scriptsize,triple/.style={postaction={draw,-,shorten >=.05},double,double distance=4pt,-implies}]{
		\draw[->] (0,1) node[left]{$b$} -- (2,1) -- (3,2);
		\draw[->] (0,2) node[left]{$a$} -- (2,2) -- (3,1);
		\draw[->] (1,0) node[below]{$l$} -- (1,3);
	}
	\qquad = \qquad
	\tikz[baseline={([yshift=-.5*11pt*0.8+8pt]current bounding box.center)},
	scale=0.8,font=\scriptsize,triple/.style={postaction={draw,-,shorten >=.05},double,double distance=4pt,-implies}]{
		\draw[->] (0,1) node[left]{$b$} -- (1,2) -- (3,2);
		\draw[->] (0,2) node[left]{$a$} -- (1,1) -- (3,1);
		\draw[->] (2,0) node[below]{$l$} -- (2,3);
	} \ \ .
\end{equation}
As a byproduct we will find conditions on vertex weights $w$ and~$w'$ necessary to get commutating transfer matrices: from \eqref{sR:eq:FCR graphical} we will rederive the condition $\Delta(a,b,c)=\Delta(a',b',c')$ from~Section~\ref{s3:results} in the more algebraic setting from Section~\ref{s4}.

\paragraph{Reduction by symmetries.} Since $V_a\otimes V_b \otimes V_l$ has dimension eight, \eqref{sR:eq:FCR} a~priori consists of $8\times 8 =64$ equations:
\begin{equation}\label{sR:eq:FCR graphical components}
	\tikz[baseline={([yshift=-.5*11pt*0.8]current bounding box.center)},
	scale=0.8,font=\small,triple/.style={postaction={draw,-,shorten >=.05},double,double distance=4pt,-implies}]{
		\draw (0,1) node[left]{$\beta'$} -- (2,1) -- (3,2) node[right]{$\delta'$};
		\draw (0,2) node[left]{$\beta$} -- (2,2) -- (3,1) node[right]{$\delta$};
		\draw (1,0) node[below]{$\alpha$} -- (1,3) node[above]{$\gamma$};
	}
	\ \ = \ \
	\tikz[baseline={([yshift=-.5*11pt*0.8]current bounding box.center)},
	scale=0.8,font=\small,triple/.style={postaction={draw,-,shorten >=.05},double,double distance=4pt,-implies}]{
		\draw (0,1) node[left]{$\beta'$} -- (1,2) -- (3,2) node[right]{$\delta'$};
		\draw (0,2) node[left]{$\beta$} -- (1,1) -- (3,1) node[right]{$\delta$};
		\draw (2,0) node[below]{$\alpha$} -- (2,3) node[above]{$\gamma$};
	}
	\ \ , \qquad\qquad \alpha,\To,\delta' = \pm 1 \ .
\end{equation}
It is reasonable to look for solutions $R_{ab} \in \End(V_a \otimes V_b)$ that preserve the two symmetries of the six-vertex model: the ice rule (line conservation) and spin-reversal symmetry. Thus we assume that the $R$-matrix is of the same form as the Lax operators~\eqref{sR:eq:Lax operator matrix},
\begin{equation}\label{sR:eq:R-matrix matrix}
	R_{ab} \ = \
	\tikz[baseline={([yshift=-.5*11pt*0.8]current bounding box.center)},scale=0.8,font=\scriptsize]{
		\draw[->] (0,1) node[left]{$a$} -- (1,0);
		\draw[->] (0,0) node[left]{$b$} -- (1,1);
	}
	\ = \
	\begin{pmatrix}	a'' & & & \\ & b'' & c'' & \\ & c'' & b'' & \\ & & & a'' \end{pmatrix}_{ab} \ ,
\end{equation}
where the three entries have to be determined. Let us emphasize once more that, both for the Lax operator and for the $R$-matrix, the order of the `outgoing' labels is reversed in the coefficients, see \eqref{s4:eq:Lax operator acting} and \eqref{s4:eq:R acting}.

Due to parity reversal these equations come in 32 equal pairs. Line conservation requires the occupancy number to be preserved: $\alpha+\beta+\beta'=\gamma+\delta+\delta'$. This further reduces the number of nontrivial equations to~10. Using parity reversal we may restrict our attention to the ten cases with at least two incoming occupancies. Simultaneously switching $\alpha\leftrightarrow\gamma$, $\beta\leftrightarrow\delta$ and $\beta'\leftrightarrow\delta'$ in \eqref{sR:eq:FCR graphical components} yields 
\begin{equation}
	\tikz[baseline={([yshift=-.5*11pt*0.8]current bounding box.center)},
	scale=0.8,font=\small]{
		\draw (0,1) node[left]{$\delta'$} -- (2,1) -- (3,2) node[right]{$\beta'$};
		\draw (0,2) node[left]{$\delta$} -- (2,2) -- (3,1) node[right]{$\beta$};
		\draw (1,0) node[below]{$\gamma$} -- (1,3) node[above]{$\alpha$};
	}
	\ \ = \ \
	\tikz[baseline={([yshift=-.5*11pt*0.8]current bounding box.center)},
	scale=0.8,font=\small]{
		\draw (0,1) node[left]{$\delta'$} -- (1,2) -- (3,2) node[right]{$\beta'$};
		\draw (0,2) node[left]{$\delta$} -- (1,1) -- (3,1) node[right]{$\beta$};
		\draw (2,0) node[below]{$\gamma$} -- (2,3) node[above]{$\alpha$};
	} \ \ .
\end{equation}
When we rotate this over $180^\circ$ we precisely recover~\eqref{sR:eq:FCR graphical components}. But the vertex weights are invariant under such a rotation (see Figure~\ref{s3:fg:six vertices} in Section~\ref{s3:six-vertex model}), so it follows that the four equations with $\alpha=\gamma$, $\beta=\delta$ and $\beta'=\delta'$ are automatically satisfied while the six remaining equations come in equal pairs. Thus we are left with just three independent equations. Paying attention to which crossing corresponds to $L_{al}$, $L_{bl}$ and $R_{ab}$ these equations read
\begin{align}
	a \, b' c'' + c \, c' b'' \ = \ \ 
	\tikz[baseline={([yshift=-.5*11pt*0.8]current bounding box.center)},
	scale=0.8,font=\small]{
		\draw[dotted] (0,1) -- (1,1) (2.5,1.5) -- (3,1);
		\draw[very thick] (0,2) -- (1,2) -- (1,3) (1,0) -- (1,1) (2.5,1.5)  -- (3,2);
		\draw (1,1) -- (2,1) -- (2.5,1.5) -- (2,2) -- (1,2) -- (1,1);
	}
	\ \ & = \ \
	\tikz[baseline={([yshift=-.5*11pt*0.8]current bounding box.center)},
	scale=0.8,font=\small]{
		\draw[dotted] (0,1) -- (.5,1.5) (2,1) -- (3,1);
		\draw[very thick] (0,2) -- (.5,1.5) (2,0) -- (2,1) (3,2) -- (2,2) -- (2,3);
		\draw (.5,1.5) -- (1,1) -- (2,1) -- (2,2) -- (1,2) -- (.5,1.5);
	}
	\ \ = \ b \, a' c'' \ , \label{sR:eq:FCR eq 1} \\
	a \, c' b'' + c \, b' c'' \ = \ \ 
	\tikz[baseline={([yshift=-.5*11pt*0.8]current bounding box.center)},
	scale=0.8,font=\small]{
		\draw[dotted] (1,0) -- (1,1) (2.5,1.5) -- (3,2);
		\draw[very thick] (0,2) -- (1,2) -- (1,3) (0,1) -- (1,1) (2.5,1.5)  -- (3,1);
		\draw (1,1) -- (2,1) -- (2.5,1.5) -- (2,2) -- (1,2) -- (1,1);
	}
	\ \ & = \ \
	\tikz[baseline={([yshift=-.5*11pt*0.8]current bounding box.center)},
	scale=0.8,font=\small]{
		\draw[dotted] (2,0) -- (2,1) (2,2) -- (3,2);
		\draw[very thick] (0,1) -- (.5,1.5) -- (0,2) (2,1) -- (3,1) (2,2) -- (2,3);
		\draw (.5,1.5) -- (1,1) -- (2,1) -- (2,2) -- (1,2) -- (.5,1.5);
	}
	\ \ = \ b \, c' a'' \ , \label{sR:eq:FCR eq 2} \\
	a \, c' c'' + c \, b' b'' \ = \ \ 
	\tikz[baseline={([yshift=-.5*11pt*0.8]current bounding box.center)},
	scale=0.8,font=\small]{
		\draw[dotted] (1,0) -- (1,1) (2.5,1.5) -- (3,1);
		\draw[very thick] (0,2) -- (1,2) -- (1,3) (0,1) -- (1,1) (2.5,1.5)  -- (3,2);
		\draw (1,1) -- (2,1) -- (2.5,1.5) -- (2,2) -- (1,2) -- (1,1);
	}
	\ \ & = \ \
	\tikz[baseline={([yshift=-.5*11pt*0.8]current bounding box.center)},
	scale=0.8,font=\small]{
		\draw[dotted] (2,0) -- (2,1) (2,1) -- (3,1);
		\draw[very thick] (0,1) -- (.5,1.5) -- (0,2) (2,2) -- (3,2) (2,2) -- (2,3);
		\draw (.5,1.5) -- (1,1) -- (2,1) -- (2,2) -- (1,2) -- (.5,1.5);
	}
	\ \ = \ c \, a' a'' \ , \label{sR:eq:FCR eq 3} 
\end{align}

\paragraph{Solving the \textsc{fcr}.} Let us first show how \eqref{sR:eq:FCR eq 1}--\eqref{sR:eq:FCR eq 3} result in constraints on $L_{al}$ and $L_{bl}$ that are necessary for the existence of an $R$-matrix satisfying the \textsc{fcr} (and thus yield commuting transfer matrices). Eliminating the (doubly primed) entries of the $R$-matrix we recover the constraint $\Delta(a,b,c) = \Delta(a',b',c')$ from \eqref{s3:eq:commuting t's}, where $\Delta$ was defined in~\eqref{s3:eq:Delta(a,b,c)}. Recall from the discussion following~\eqref{s3:eq:commuting t's} that, for fixed value of the function~$\Delta$, the vertex weights of the six-vertex model are parametrized (up to an overall normalization) by the spectral parameter. Thus we are once more led to Lax operators depending on spectral parameters as in Theorem~\ref{s4:th:FCR thm}. In Section~\ref{s3:results} we found the condition $\Delta(a,b,c) = \Delta(a',b',c')$ for commuting transfer matrices through the results of the \textsc{cba}; presently we obtained it by purely algebraic methods!

Now we turn to the $R$-matrix itself. Notice that \eqref{sR:eq:FCR eq 1} and~\eqref{sR:eq:FCR eq 2} only differ by the position of the single and double primes, while \eqref{sR:eq:FCR eq 3} is symmetric in this respect. It follows that if we instead eliminate the (primed) entries of~$L_{bl}$ from \eqref{sR:eq:FCR eq 1}--\eqref{sR:eq:FCR eq 3} we similarly obtain the further condition $\Delta(a,b,c) = \Delta(a'',b'',c'')$. Thus, again up to an normalization (which drops out of the \textsc{fcr} at any rate), the entries of the $R$-matrix differ from those of the two Lax operators only by the value of the spectral parameter~$w$. Using the explicit parametrization~\eqref{s4:eq:parametrization} we find that \eqref{sR:eq:FCR eq 1}--\eqref{sR:eq:FCR eq 3} are satisfied provided that the spectral parameters are related by the difference property $\sinh(w) = \sinh(u-v)$, which holds for~$w=u-v$. Explicitly we thus obtain 
\begin{equation}\label{sR:eq:R-matrix entries}
	a'' = \rho'' \sinh(u-v+\I \gamma) \ , \qquad b'' = \rho'' \sinh(u-v) \ , \qquad c'' = \rho'' \sinh(\I \gamma) \ ,
\end{equation}
where $\cos\gamma=\Delta(a,b,c)=\Delta(a',b',c')$, for the entries of the $R$-matrix of the six-vertex model.


\bibliography{paper_references}
\addcontentsline{toc}{section}{References}

\end{document}